\newtheorem{theorem*}{Theorem}
\newtheorem{lemma*}{Lemma}
\mathchardef\mhyphen="2D
\newcommand{\CS}{\mathrm{CS}}
\newcommand{\CA}{\mathrm{CA}}
\newcommand{\MA}{\mathrm{MA}}
\newcommand{\TA}{\mathrm{TA}}
\title{Approximation algorithms for car-sharing problems}
\author{Kelin Luo\footnote{corresponding author}}{Department of Mathematics and Computer Science, Eindhoven University of Technology, Eindhoven, the Netherlands}{k.luo@tue.nl}{https://orcid.org/0000-0003-2006-0601}{This project has received funding from the European Union's Horizon 2020 research and innovation programme under the Marie Skłodowska-Curie grant agreement number 754462.}
\author{Frits C. R. Spieksma}{Department of Mathematics and Computer Science, Eindhoven University of Technology, Eindhoven, the Netherlands}{k.luo@tue.nl}{https://orcid.org/0000-0002-2547-3782}{This project has received funding from the NWO Gravitation Project NETWORKS, Grant Number 024.002.003.}
\authorrunning{K. Luo and F.C.R. Spieksma}
\keywords{car-sharing, approximation Algorithms, matching}
\begin{document}

\maketitle
\begin{abstract}
We consider several variants of a car-sharing problem. Given are a number of requests each consisting of a pick-up location and a drop-off location, a number of cars, and nonnegative, symmetric travel times that satisfy the triangle inequality. Each request needs to  be served by a car, which means that a car must first visit the pick-up location of the request, and then visit the drop-off location of the request. Each car can serve two requests. One problem is to serve all requests with the minimum total travel time (called $\CS_{sum}$), and the other problem is to serve all
requests with the minimum total latency (called $\CS_{lat}$). We also study the special case where the pick-up and drop-off location of a request coincide. We propose two basic algorithms, called the match and assign algorithm and the transportation algorithm. We show that the best of the resulting two solutions is a $ 2$-approximation for $\CS_{sum}$ (and a $7/5$-approximation for its special case), and a $5/3 $-approximation for $\CS_{lat}$ (and a $3/2$-approximation for its special case); these ratios are better than the ratios of the individual algorithms. Finally, we indicate how our algorithms can be applied to more general settings where each car can serve more than two requests, or where cars have distinct speeds. 
\end{abstract}

\section{Introduction}
\label{sec:introduction}

We investigate the following car-sharing problem: there are $n$ cars (or servers), with car $k$ stationed at location $ d_k$ ($ 1\le k\le n$), and
there are $2n$ requests, each request $i$ consisting of a pick-up location $s_i$ and a drop-off location $t_i$ ($1 \le i\le 2n$). Between every pair of locations $x$ and $y$, a number $w(x,y)$ is given; these numbers can be interpreted as the distance between locations $x$ and $y$ or the time needed to travel between $x$ and $y$. These distances, or times, are non-negative, symmetric, and satisfy the triangle inequality. Each request needs to  be served by a car, which means that a car must first visit the pick-up location, and then visit the drop-off location. The car sharing problem is to assign all requests to the cars such that each car serves exactly two requests while minimizing total travel time, and/or while minimizing total waiting time (called total latency), incurred by customers that submitted the requests. Let us elaborate on these two objectives.

\begin{itemize}
\item \textbf{Minimize total travel time.}  The total travel time is the travel time each car drives to serve its requests, summed over the cars. From the company or drivers' perspective, this is important since this objective reflects minimizing costs while serving all requests. From a societal point of view, this objective also helps to reduce emissions.
We use $\CS_{sum}$ to refer to the car-sharing problem with the objective to minimize total travel time. The special case of $\CS_{sum}$ where the pick-up and drop-off location is identical for each request, is denoted by $\CS_{sum, s=t}$.

\item \textbf{Minimize total latency.}  The total latency represents the sum of the travel times needed for each individual customer to arrive at her/his drop-off location, summed over the customers. From the customers' perspective, this is important because it helps customers to reach their destinations as soon as possible. We use $\CS_{lat}$ to refer to the car-sharing problem with the objective to minimize total latency. The special case of $\CS_{lat}$ where the pick-up and drop-off location is identical for each request, is denoted by $\CS_{lat, s=t}$.

\end{itemize}

\paragraph*{Motivation}
Consider a working day morning. A large number of requests, each consisting of a pick-up and a drop-off location has been submitted by the customers. The car-sharing company has to assign these requests to available cars. In many practical situations, each request is allowed to occupy at most two seats in a car (see Uber~\cite{ube}). Thus, in a regular car where at most four seats are available, at most two requests can be combined. Since the company knows the location of the available cars, an instance of our problem arises.

Another application can be found in the area of collective transport for specific groups of people. For instance, the company Transvision~\cite{Trans} 
organizes collective transport by collecting all requests in a particular region of the Netherlands, combines them, and offers them to regular transport companies. In their setting customers must make their request the evening before the day of the actual transport; the number of requests for a day often exceeds 5.000. However, in this application, it is true that a car may pick up more than two requests (we come back to this issue in Section~\ref{sec:conclusions}).

The problems $\CS_{sum,s=t}$ and $\CS_{lat,s=t}$ are natural special cases of $\CS_{sum}$ and $\CS_{lat}$ respectively, and can be used to model situations where items have to be delivered to clients (whose location is known and fixed). For instance, one can imagine a retailer sending out trucks to satisfy clients' demand where each truck is used to satisfy two clients. 

\paragraph*{Related Work.}
In the literature, car-sharing systems are increasingly studied. Agatz et al.~\cite{agatz2011time} consider the problem of matching cars and customers in real-time with the aim to minimize the total number of vehicle-miles driven.  
Stiglic et al.~\cite{stiglic2016making} show that a small increase in the flexibility of either the cars or the customers can significantly increase the performance of a dynamic car-sharing system.
Furthermore, Wang et al.~\cite{wang2018stable} introduce the notion of stability for a car sharing system and they present methods to establish stable or nearly stable solutions.
More recently, Ashlagi et al.~\cite{ashlagi2019edge} study the problem of matching requests to cars that arrive online and need to be assigned after at most a prespecified number of time periods. Every pair of requests that is matched yields a profit and the goal of the  car sharing system is to match requests with maximum total profit.

Bei and Zhang~\cite{bei2018algorithms} introduce $\CS_{sum}$, and give a $2.5$-approximation algorithm for it. In fact, both $\CS_{sum}$ and $\CS_{lat}$ are a special case of the so-called two-to-one assignment problem (2-1-AP) investigated by Goossens et al.~\cite{goossens2012between}. Given a set $G$ of $n$ green elements, and a set $R$ of $2n$ red elements, we call a {\em triple} a set of three elements that consist of a single green element and two red elements; further, there is a cost-coefficient for each triple. The 2-1-AP problem is to find a collection of triples covering each element exactly once, while the sum of the corresponding cost-coefficients is minimized.  
In the context of our car sharing problem, the green elements represent the cars, and the red elements represent the requests. For the special case of 2-1-AP where the cost of each triple $ (i,j,k)$ is defined as the sum of the three corresponding distances, i.e., $cost(i,j,k)=d_{ij}+d_{jk}+d_{ki}$, that satisfy the triangle inequality, Goossens et al.~\cite{goossens2012between} give an algorithm with approximation ratio $4/3$. The definition of the cost-coefficients in $\CS_{sum}$, as well as in $\CS_{lat}$ differs from this expression; we refer to Section~\ref{sec:preliminaries} for a precise definition.

\paragraph*{Our results}

We formulate and analyze three polynomial-time approximation algorithms for our car-sharing problems: a match and assign algorithm MA, a transportation algorithm TA, and a combined algorithm CA which runs MA and TA, and then outputs the best of the two solutions. These algorithms are extended versions of algorithms in \cite{goossens2012between} using as additional input a parameter $\alpha \in \{1,2\}$ and a vector $v\in\{u,\mu\}$; for their precise description, we refer to Section~\ref{sec:algorithms}.  All of them run in time $O(n^3)$.
 Here, we establish their worst-case ratio's (see Williamson and Shmoys~\cite{williamson2011design} for appropriate terminology).

\begin{table}[thbp]
	\centering
	\caption{Overview of our results}
	\label{tab1}
	\scalebox{0.85}{
	\begin{tabular}{p{45pt}p{60pt}p{40pt}p{60pt}p{40pt}p{75pt}p{75pt}}
		\toprule
		Problem & MA($1, u$) & MA($2, \mu$) & TA(1) & TA(2) & CA($1, u$) & CA($2, \mu$) \\			
		\midrule
       $CS_{sum}$ & 2$^*$ (\cref{lem:alg_ana_401}) & 3 & 3$^*$ (\cref{lem:alg_ana_402})  & 4  & 2$^*$ (\cref{the:alg_ana_1})  & 3  \\		
	$CS_{sum, s=t}$ & 3/2$^*$  & 3/2  & 3$^*$   & 4  & 7/5$^*$ (\cref{the:alg_ana_3})  & 10/7   \\		
		 $CS_{lat}$ & 4  &  2$^*$  & 3   & 2$^*$  & 3   & 5/3 (\cref{the:alg_ana_2}) \\
		 $CS_{lat, s=t}$ & 2  & 2$^*$    & 2   & 2$^*$  & 8/5    & 3/2$^*$ (\cref{the:Pl1_03}) \\
		\bottomrule			
	\end{tabular}}
\end{table}

For all four problems, we show how the above mentioned algorithms behave with respect to their worst-case ratios.
Notice that for $\CS_{sum, s=t}$, $\CS_{lat}$ and $\CS_{lat, s=t}$,  the worst-case ratio of the combined algorithm CA is strictly better than each of the two worst-case ratios of the algorithms CA consists of. 
An overview of all our results is shown in Table~\ref{tab1}; $u$ and $ \mu$ are defined in Section~\ref{sec:preliminaries}; an ``$^*$'' means that the corresponding worst-case ratio is tight; 
results not attributed to a specific Lemma or Theorem are proven in~\cref{app:tableresults}.

The two problems $\CS_{sum}$ and $\CS_{lat}$ only differ in their objectives, i.e., anything that is a feasible solution in $\CS_{sum}$ is also a feasible solution in $\CS_{lat}$ and vice versa. A similar statement can be made for $\CS_{sum,s=t}$ and $\CS_{lat,s=t}$. Hence, we can use a two-criteria approach to judge a solution; more concretely, we call an algorithm a ($\rho, \varphi$)-approximation algorithm if the cost of its solution is at most $\rho$ times the cost of an optimal solution to $\CS_{sum}$, and at most $\varphi$ times the cost of an optimal solution of $\CS_{lat}$.
Therefore, CA$(1, u)$ is a (2, 3)-approximation (resp. ($\frac{7}{5}$, $\frac{8}{5}$)-approximation) algorithm and CA$(2,  \mu)$ is a (4, $\frac{5}{3}$)-approximation (resp. ($\frac{10}{7}$, $\frac{3}{2}$)-approximation)  algorithm. Notice that, from a worst-case perspective, none of these two algorithms dominates the other.

\section{Preliminaries}\label{sec:prelim}
\label{sec:preliminaries}

\textbf{Notation.} \
 We consider a setting with $n$ cars, denoted by $D=\{1,2,...,n\}$, with car $k$ at location $ d_k$ ($k \in D$). There are $2n$ requests $R=\{1,2,...,2n\}$, where request $i$ is specified by the pick-up location $s_i$ and the drop-off location $t_i$ ($i \in R$).  
The distance (or cost, or travel time) between location $x_1$ and $ x_2$ is denoted by $w(x_1,x_2)$. We assume here that distances $ w(x_1,x_2)$ are non-negative, symmetric, and satisfy the triangle inequality. Furthermore, we extend the notation of distance between two locations to the distance of a path:  $w(x_1,x_2,..., x_k)=\sum^{k-1}_{i=1} w(x_i, x_{i+1})$.
 We want to find an \emph{allocation} $ M=\{(k, R_k):k\in D, R_k\subseteq R\}$, where $|R_k|=2$ and $ R_1, R_2, .., R_n$ are pairwise disjoint; the set $R_k$ contains the two requests that are assigned to car $k$. For each car $ k\in D$ and $ (k, R_k)\in M$ where $ R_k$ contains request $i$ and request $j$, i.e., $R_k=\{i, j\}$, 
 we denote the \emph{travel time} of serving the requests in $R_k$ by:
 \begin{equation}
  \begin{split}
   \label{equation101}
cost(k,\{i,j\})\equiv& \min\{w(d_k,s_i,s_j, t_i,t_j),w(d_k,s_i,s_j, t_j,t_i),w(d_k,s_i,t_i, s_j,t_j),\\
&w(d_k,s_j,s_i, t_i,t_j),w(d_k,s_j,s_i, t_j,t_i),w(d_k,s_j,t_j, s_i,t_i)\}.
  \end{split}
\end{equation}
Notice that the six terms in~(\ref{equation101}) correspond to all distinct ways of visiting the locations $ s_i$, $ s_j$, $ t_i$ and $ t_j$ where $ s_i$ is visited before $t_i$ and $ s_j$ is visited before $t_j$.
We view $cost(k, \{i,j\})$ as consisting of two parts: one term expressing the travel time between $d_k $ and the first pick-up location $ s_i$ or $ s_j$; and another term capturing the travel time from the first pick-up location to the last drop-off location. For convenience, we give a definition of the latter quantity.
 \begin{equation}
  \begin{split}
   \label{equation101_u}
 u_{ij} \equiv \min\{w(s_i,s_j, t_i,t_j),w(s_i,s_j, t_j,t_i),&w(s_i,t_i, s_j,t_j) \} \\
&\mbox{ for each } i,j \in R \times R, i \neq j.
  \end{split}
\end{equation}
Notice that the $ u_{ij}$'s are not necessarily symmetric. For $CS_{sum, s=t}$, $u_{ij}\equiv w(s_i, s_j)$.
Using (\ref{equation101}) and (\ref{equation101_u}), the travel time needed to serve requests in $R_k = \{i,j\}$ ($k\in D$) is then given by:
\begin{equation}
   \label{equation101_u_1}
cost(k,\{i,j\}) = \min\{w(d_k,s_i)+u_{ij},w(d_k,s_j)+u_{ji}\}.
\end{equation}
We denote the travel time of an allocation $M$ by:
\begin{equation}
\label{eq:totalcost}
cost(M)=\sum_{(k, R_k)\in M} cost (k, R_k).
\end{equation}
In $\CS_{sum}$ and $\CS_{sum, s=t}$, the goal is to find an allocation $M$ that minimizes $cost(M)$.

Let us now consider $\CS_{lat}$. Here, we focus on the waiting time as perceived by an individual customer, from the moment the car leaves its location until the moment the customer reaches its drop-off location. More formally, we denote the \emph{latency} of serving requests in $R_k = \{i,j\}$ by:
 \begin{equation}
   \begin{split}
   \label{equation102}
wait&(k,\{i,j\})  \equiv \min \\
&\{w(d_k,s_i, s_j,t_i)+ w(d_k, s_i, s_j,t_i, t_j), w(d_k,s_i, s_j,t_j)+ w(d_k,s_i, s_j,t_j, t_i), \\
& w(d_k,s_i, t_i)+w(d_k,s_i, t_i, s_j,t_j), w(d_k,s_j, s_i,t_i)+ w(d_k,s_j, s_i,t_i, t_j),\\
& w(d_k,s_j, s_i,t_j)+ w(d_k,s_j, s_i,t_j, t_i), w(d_k,s_j, t_j)+w(d_k,s_j, t_j, s_i,t_i)\}.
  \end{split}
\end{equation}

Again, the six terms in~(\ref{equation102}) correspond to all distinct ways of visiting the locations $s_i$, $ s_j$, $ t_i$ and $ t_j$ where $ s_i$ is visited before $t_i$ and $ s_j$ is visited before $t_j$. We view $wait(k, R_k)$ as consisting of two parts: one term expressing the waiting time between $d_k $ and the first pick-up location $ s_i$ or $ s_j$; and another term capturing the waiting time from the first pick-up location to the last drop-off location. For convenience, we give a definition of the latter quantity.
 \begin{equation}
  \begin{split}
   \label{equation102_u}
 \mu_{ij}\equiv \min\{&w(s_i,s_j, t_i)+w(s_i,s_j, t_i,t_j),w(s_i,s_j, t_j)+w(s_i,s_j, t_j,t_i),\\
&w(s_i, t_i)+w(s_i,t_i, s_j,t_j) \} \mbox{ for each } i,j \in R \times R, i \neq j.
  \end{split}
\end{equation} 
Notice that the $ \mu_{ij}$'s are not necessarily symmetric. For $CS_{lat, s=t}$, $\mu_{ij}\equiv w(s_i, s_j)$.
Using (\ref{equation102}) and (\ref{equation102_u}), the latency needed to serve requests in $R_k = \{i,j\}$ ($k\in D$) is then given by:
\begin{equation}
   \label{equation102_u_1}
wait(k,\{i,j\})= \min\{2w(d_k,s_i)+\mu_{ij},2w(d_k,s_j)+\mu_{ji}\}.
\end{equation}
We denote the latency of an allocation $M$ by: \begin{equation}
\label{eq:totallatency}
wait(M)=\sum_{(k, R_k)\in M} wait(k, R_k).
\end{equation}
Thus, in $\CS_{lat}$ and $\CS_{lat, s=t}$, the goal is to find an allocation $M$ that minimizes $wait(M)$.

A natural variant of the latency objective is one where the latency is counted with respect to the pick-up location as opposed to the drop-off location in $\CS_{lat}$. Clearly, then the drop-off location becomes irrelevant, and in fact our approximation results for $\CS_{lat, s=t}$ become valid for this variant.

\textbf{Remark.} \ Bei and Zhang~\cite{bei2018algorithms} prove that the special car-sharing problem, i.e., $\CS_{sum, s=t}$ is NP-hard. Their  proof can also be used to prove that $\CS_{lat, s=t}$ is NP-hard. In fact, we point out that the arguments presented in Goossens et al.~\cite{goossens2012between} allow to establish the APX-hardness of these two problems. 

\textbf{Paper Outline} \
In~\cref{sec:algorithms}, we present two algorithms, i.e., the match and assignment algorithm and the transportation algorithm. In~\cref{sec:approximation_results}, we prove the approximation ratios.~\cref{sec:conclusions} concludes the paper. 

\section{Algorithms}
\label{sec:algorithms}
We give three polynomial-time approximation algorithms for our car-sharing problems. In Section~\ref{subsec:MAalgorithms} we describe the match and assign algorithm MA($\alpha, v$), and in Section~\ref{subsec:TAalgorithms} we describe the transportation algorithm TA($ \alpha$). As described before, the third algorithm, CA($\alpha, v$), simply runs MA($\alpha, v$) and TA($ \alpha$), and then outputs the best of the two solutions. Let $\alpha \in\{1,2\} $ be the coefficient that weighs the travel time between the car location and the first pick-up location, and let $v \in \{u,\mu\}$ (where $u$ is defined in~(\ref{equation101_u}), and $\mu$ is defined in~(\ref{equation102_u})).

\subsection{The match and assign algorithm}
\label{subsec:MAalgorithms}
 
The match-and-assign algorithm MA$(\alpha, v)$ goes through two steps: in the first step, the algorithm pairs the requests based on their combined serving cost, and in the second step, the algorithm assigns the request-pairs to the cars.
 
 \begin{algorithm}	
	\caption{Match-and-assign algorithm (MA$(\alpha, v)$)}
	\label{alg:MA}
	\begin{algorithmic}[1]	
	\STATE \emph{\normalsize Input}: non-negative weighted graph $G=(V,E,w)$, requests $R=\{i=(s_i,t_i): 1\le i\le m, s_i, t_i\in V\}$, cars $ D=\{k:1\le k\le n, d_k\in V\}$,
	$ \alpha \in\{1,2\} $ and $ v \in\{u, \mu\} $.\\
\STATE \emph{\normalsize Output}: An allocation MA$=\{(k,\{i,j\}):k\in D, i,j\in R\} $.\\
\STATE For $i, j \in R$ do\\
\STATE \ \ \ \ \ \ \ \  $v_1(\{i,j\}) \equiv \frac{v_{ij}+v_{ji}}{2}$\\
\STATE  end for\\
\STATE  Let $G_1\equiv (R,v_1)$ be the complete weighted graph where an edge between vertex $ i\in R$ and vertex $j\in R $ has weight $v_1(i,j)$. \\
\STATE Find a minimum weight perfect matching $M_1$ in $G_1\equiv (R,v_1)$  with weight $ v_1(M_1)$.\\
\STATE For $k \in D$ and $ \{i,j\} \in M_1$  with $ v_{ij}\ge v_{ji}$ do\\
\STATE \ \ \ \ \ \ \ \  $v_2(k,\{i,j\}) \equiv \min \{\alpha w(d_k,s_i)+\frac{v_{ij}-v_{ji}}{2}, \alpha w(d_k,s_j)-\frac{v_{ij}-v_{ji}}{2}\}$\\
\STATE end for\\
\STATE Let $G_2\equiv (D\cup M_1,v_2)$ be the complete bipartite graph with \emph{left} vertex-set $D$, \emph{right} vertex-set $M_1$ and edges with weight $v_2(k,\{i,j\})$ for $k\in D$, and $\{i,j\}\in M_1 $.\\
\STATE Find a minimum  weight perfect matching $M_2$ in $G_2\equiv (D\cup M_1,v_2)$  with weight $ v_2(M_2)$. \\
\STATE Output allocation MA$=M_2$.
	\end{algorithmic}		
\end{algorithm}
 
The key characteristics of algorithm MA$(\alpha, v)$ are found in lines 4 and 9 where the costs of the first and second step are defined. A resulting quantity is $v_1(M_1) + v_2(M_2)$; we now prove two lemma's concerning this quantity, which will be of use in Section~\ref{sec:approximation_results}.

\begin{lemma}
\label{lemma:ma} For each $\alpha \in \{1,2\}$ and $v \in \{u, \mu\}$, we have:
\[ v_1(M_1)+v_2(M_2) = \sum_{(k,\{i,j\})\in M_2} \min \{ \alpha w(d_k,s_i)+ v_{ij}, \alpha w(d_k,s_j)+ v_{ji}\}.\]
\end{lemma}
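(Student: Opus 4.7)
The plan is to observe that this is essentially an algebraic identity that follows directly from the definitions in lines 4 and 9 of the algorithm, combined with the fact that the matching $M_2$ matches each pair $\{i,j\}\in M_1$ to exactly one car $k\in D$.

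First I would fix an arbitrary edge $(k,\{i,j\})\in M_2$. Since $M_2$ is a perfect matching between $D$ and $M_1$, summing over $M_2$ partitions the pairs of $M_1$, so that $v_1(M_1)=\sum_{(k,\{i,j\})\in M_2} v_1(\{i,j\})$. Hence it suffices to show that for each such edge,
\[
v_1(\{i,j\}) + v_2(k,\{i,j\}) \;=\; \min\{\alpha w(d_k,s_i)+v_{ij},\ \alpha w(d_k,s_j)+v_{ji}\}.
\]
Without loss of generality assume the labelling in the pair is chosen so that $v_{ij}\ge v_{ji}$ (which is the case in which $v_2$ is defined in line 9).

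Next I would carry out the direct calculation. Substituting the definitions,
\[
v_1(\{i,j\}) + v_2(k,\{i,j\}) = \frac{v_{ij}+v_{ji}}{2} + \min\Bigl\{\alpha w(d_k,s_i)+\tfrac{v_{ij}-v_{ji}}{2},\ \alpha w(d_k,s_j)-\tfrac{v_{ij}-v_{ji}}{2}\Bigr\}.
\]
Pushing the constant $\tfrac{v_{ij}+v_{ji}}{2}$ inside the $\min$, the first term becomes $\alpha w(d_k,s_i) + v_{ij}$ and the second becomes $\alpha w(d_k,s_j) + v_{ji}$, which is exactly the desired right-hand side. Finally, summing over all $(k,\{i,j\})\in M_2$ gives the claimed identity.

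There is no real obstacle here; the lemma is a bookkeeping statement whose purpose is to hide the $\tfrac{v_{ij}-v_{ji}}{2}$ correction terms used in the two matching instances, showing that the algorithm's two-stage objective telescopes to a clean expression matching formula~(\ref{equation101_u_1}) (up to the coefficient $\alpha$ and the choice of $v\in\{u,\mu\}$). The only points to be careful about are (i) that $v_2$ is defined for ordered pairs with $v_{ij}\ge v_{ji}$, which simply fixes the labelling within a pair and does not affect the symmetric $\min$ on the right-hand side, and (ii) that $M_2$ is a perfect matching so that $v_1(M_1)$ can indeed be rewritten as a sum indexed by $M_2$.
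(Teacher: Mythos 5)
Your proof is correct and is essentially identical to the paper's own argument: both substitute the definitions from lines 4 and 9, use the fact that $M_2$ perfectly matches $D$ to $M_1$ to reindex $v_1(M_1)$ as a sum over $M_2$, and absorb the additive term $\tfrac{v_{ij}+v_{ji}}{2}$ into the minimum. No issues.
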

\begin{proof}
Without loss of generality, for any $ \{i,j\}\in M_1$, suppose $v_{ij}-v_{ji}\ge 0$ (the other case is symmetric). 
\begin{equation*}
   \begin{split}
   \label{equation_lemma:ma}
& v_1(M_1)+v_2(M_2) \\
&=  \sum_{\{i,j\}\in M_1} \frac{v_{ij}+v_{ji}}{2}  +\sum_{(k,\{i,j\})\in M_2} \min \{\alpha w(d_k,s_i)+\frac{v_{ij}-v_{ji}}{2}, \alpha w(d_k,s_j)-\frac{v_{ij}-v_{ji}}{2}\}  \\
&=\sum_{(k,\{i,j\})\in M_2} \min \{\alpha w(d_k,s_i)+v_{ij}, \alpha w(d_k,s_j)+v_{ji} \}.
  \end{split}
\end{equation*}
The first equality follows from  lines 4 and 9 in Algorithm~\ref{alg:MA}.
\end{proof}

\begin{lemma}
\label{lemma:ma_min}
For $\alpha \in \{1,2\}$, $v \in \{u, \mu\}$, and for each allocation $M$, 
we have:
\[ v_1(M_1)+v_2(M_2)\le \frac{1}{2} \sum_{(k,\{i,j\})\in M} (\alpha (w(d_k,s_i)+ w(d_k,s_j))+ v_{ij}+ v_{ji}).\]
\end{lemma}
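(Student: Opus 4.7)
Fix an arbitrary allocation $M$, let $M_1^{*}=\{R_k:k\in D\}$ denote the pairing on $R$ induced by $M$, and let $c_{M}(r)$ be the car serving request $r$ in $M$. My plan is to split the right-hand side into the ``$v$-part'' $\tfrac12\sum_{(k,\{i,j\})\in M}(v_{ij}+v_{ji})$ and the ``$w$-part'' $\tfrac{\alpha}{2}\sum_{(k,\{i,j\})\in M}(w(d_k,s_i)+w(d_k,s_j))$, and to bound $v_1(M_1)$ by the former and $v_2(M_2)$ by the latter. The bound $v_1(M_1)\le v_1(M_1^{*})$ follows immediately from optimality of $M_1$ in $G_1$, and $v_1(M_1^{*})$ equals the $v$-part by the definition of $v_1$ in line~4 of Algorithm~\ref{alg:MA}.

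The main obstacle is the $w$-part, because $M_2$ is a matching on $D\cup M_1$ whereas the $w$-part is a sum over pairs of $M_1^{*}$, and in general $M_1\ne M_1^{*}$, so no edge-by-edge comparison is available. To bridge this I will introduce the bipartite multigraph $H$ with parts $D$ and $M_1$, containing one edge between $c_M(r)$ and the pair $P_r\in M_1$ that contains $r$, for each request $r\in R$. Since each car serves two requests in $M$ and each pair in $M_1$ contains two requests, $H$ is $2$-regular, so by K\"onig's edge-colouring theorem $E(H)$ partitions into two perfect matchings $\sigma^{(1)},\sigma^{(2)}$ of $D\cup M_1$; moreover, the two edges of $H$ incident to any fixed $P\in M_1$ (one per request of $P$) lie in different colour classes.

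To bound $v_2(M_2)$ I then sum a carefully chosen upper bound on $v_2$ over $\sigma^{(1)}\cup\sigma^{(2)}$. For each edge $(k,P)\in\sigma^{(l)}$, let $r\in P$ be the generating request (so $c_M(r)=k$), write $P=\{i,j\}$ with $v_{ij}\ge v_{ji}$, and set $\delta_P=(v_{ij}-v_{ji})/2\ge 0$. The two arguments of the $\min$ defining $v_2(k,P)$ in line~9 of Algorithm~\ref{alg:MA} are $\alpha w(d_k,s_i)+\delta_P$ and $\alpha w(d_k,s_j)-\delta_P$, and whichever of $i,j$ equals $r$, one of these has the form $\alpha w(d_k,s_r)\pm\delta_P$. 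The crucial cancellation is that each $P\in M_1$ occurs with $r=i$ in one colour class and with $r=j$ in the other, so the $\pm\delta_P$ terms cancel pair by pair, yielding
\[v_2(\sigma^{(1)})+v_2(\sigma^{(2)})\le\alpha\sum_{r\in R}w(d_{c_M(r)},s_r)=\alpha\sum_{(k,\{i,j\})\in M}(w(d_k,s_i)+w(d_k,s_j)).\]
Since $M_2$ minimises $v_2$ on $D\cup M_1$, $v_2(M_2)\le\tfrac12\bigl(v_2(\sigma^{(1)})+v_2(\sigma^{(2)})\bigr)$, which is at most the $w$-part; adding this to the bound on $v_1(M_1)$ yields the claim.
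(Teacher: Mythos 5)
Your proof is correct and follows essentially the same strategy as the paper: bound $v_1(M_1)$ by optimality, then produce two perfect matchings in $G_2$ from the $2$-regular bipartite (multi)graph linking cars to $M_1$-pairs, exploit the cancellation of the $\pm(v_{ij}-v_{ji})/2$ terms, and average before invoking optimality of $M_2$. The only difference is presentational: you obtain the two matchings via K\"onig's edge-colouring theorem, whereas the paper constructs them explicitly from the alternating cycle decomposition of the union of $M_1$ with the pairing induced by $M$ — these are the same two matchings.
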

\begin{proof}
 For an allocation $M$, let $ M_R=\{R_k: (k, R_k)\in M\}$. Observe that
 \begin{equation}
   \label{equation_cycleequation1}
   v_1(M_1) \leq \sum_{\{i,j\}\in M_R} \frac{v_{ij}+v_{ji}}{2},
\end{equation}
since $M_1$  is a minimum weight perfect matching in $G_1\equiv (R,v_1)$.

We claim that \begin{equation}
   \label{equation_cycleequation2}
   v_2(M_2)\leq  \frac{1}{2} \sum_{(k,\{i,j\})\in M} \alpha (w(d_k,s_i)+w(d_k,s_j)).
\end{equation}
When summing~(\ref{equation_cycleequation1}) and~(\ref{equation_cycleequation2}), the lemma follows.

 Hence, it remains to prove~(\ref{equation_cycleequation2}).
Consider an allocation $M$, and consider the matching $M_1$ found in the first step of MA.
Based on $M$ and $M_1$, we construct the graph $ G'=(R\cup D, M_1\cup\{(\{i,k\},\{j,k\}):( k,\{i,j\})\in M\})$. Note that every vertex in graph $G'$ has degree 2. Thus, we can partition $ G'$ into a set of disjoint cycles called $C$; each cycle $c \in C$ can be written as $c=(i_1, j_1, k_1, i_2, j_2, k_2, ...., k_h, i_1)$, where $\{i_s, j_s\}\in M_1$, $(k_s, \{j_s, i_{s+1}\})\in M $ for $ 1\le s<h$ and $(k_h, \{j_h, i_{1}\})\in M$. Consider now, for each cycle $c \in C$, the following two matchings called $M^c_{\ell}$ and $M^c_r$:
 \begin{itemize}
  \item $M^{c}_{\ell} =\{(\{i_1,j_1\}, k_1), (\{i_2,j_2\}, k_2), ...,(\{i_h,j_h\}, k_h)\} $,
  \item $M^{c}_r =\{(k_1, \{i_2,j_2\}), (k_2, \{i_3,j_3\}), ...,(k_h, \{i_1,j_1\})\} $.
\end{itemize}
Obviously, both $M_{\ell}\equiv \bigcup_{c \in C} M^{c}_{\ell} $, and $M_r\equiv \bigcup_{c \in C} M^{c}_r  $ are a perfect matching in $G_2=(D\cup M_1, v_2)$.
Given the definition of $v_2(k, \{i,j\})$ (see line 9 of Algorithm MA), we derive for each pair of requests $ \{i,j\}$ and two cars $a, b$:
$v_2(a, \{i,j\})+ v_2(b, \{i,j\})\le \alpha w(d_{a},s_i)+\frac{v_{ij}-v_{ji}}{2}+\alpha w(d_{b},s_j)-\frac{v_{ij}-v_{ji}}{2}= \alpha (w(d_{a},s_i)+ w(d_{b},s_j)).$  
Similarly, it follows that: $ v_2(a, \{i,j\})+ v_2(b, \{i,j\})\le  \alpha (w(d_{a},s_j)+ w(d_{b},s_i))$.
Thus, for each $c \in C$:
\begin{equation}
\label{ineq:lemma2}
\hspace*{-0.5cm}
\sum\limits_{(k, \{i,j\})\in M^{c}_{\ell}} v_2(k, \{i,j\})  + \sum\limits_{(k, \{i,j\})\in M^{c}_r} v_2(k, \{i,j\}) \le \sum\limits_{\substack{\{i,k\}, \{j,k\} \in c \\ (k,\{i,j\})\in M}} \alpha (w(d_k,s_i)+ w(d_k,s_j)).\end{equation}

Note that $M_2$ is a minimum weight perfect matching in $ G_2=(D\cup M_1, v_2)$, and both $M_{\ell}$ and $M_r$ are a perfect matching in $ G_2=(D\cup M_1, v_2)$. Thus:
\begin{equation*}
   \begin{split}
   \label{equation010}
 v_2(M_2)&\le \sum_{c \in C} \min \{ v_2(M^{c}_{\ell}), v_2(M^{c}_r)\}\\
&\le \frac{1}{2}  \sum_{c \in C} (\sum_{(k, \{i,j\})\in M^{c}_{\ell}} v_2(k, \{i,j\})  +  \sum_{(k, \{i,j\})\in M^{c}_r} v_2(k, \{i,j\})) \\
&\le \frac{1}{2} \sum_{(k, \{i,j\})\in M} \alpha( w(d_k,s_i)+ w(d_k,s_j)). \\
  \end{split}
\end{equation*}
The last inequality follows from (\ref{ineq:lemma2}), and hence (\ref{equation_cycleequation2}) is proven.
\end{proof}

\subsection{The transportation algorithm}
\label{subsec:TAalgorithms}
In this section, we present the transportation algorithm.  
The idea of the algorithm is to replace each car $k\in D$ by $2$ virtual cars called $\gamma(k)$ and $\delta(k)$, resulting in two car sets $\Gamma =\{\gamma(1), ..., \gamma(n)\}$ and $\Delta=\{\delta(1), ..., \delta(n)\}$. Next we assign the requests to the $2n$ cars using a particular definition of the costs; a solution is found by letting car $k \in D$ serve the requests assigned to car $\gamma(k)$ and $\delta(k)$.  

 \begin{algorithm}	
	\caption{Transportation algorithm (TA$(\alpha)$)}
	\label{alg:TA}
	\begin{algorithmic}[1]
	\STATE \emph{\normalsize Input}: non-negative weighted graph $G=(V,E,w)$, requests $R=\{i=(s_i,t_i): 1\le i\le m, s_i, t_i\in V\}$, cars $ D=\{k: 1\le k \le n, d_k\in V\}$, two virtual car sets $\Gamma=\{\gamma(1), ..., \gamma(n)\}, $ and  $\Delta=\{\delta(1), ..., \delta(n)\}$, and $ \alpha \in\{1,2\} $.\\
\STATE \emph{\normalsize Output}: An allocation TA$=\{(k,\{i,j\}):k\in D, i,j\in R\} $.\\
\STATE For $k\in D, i \in R$ do\\
\STATE \ \ \ \ \ \ \ \  $v_3(\gamma(k),i)=\alpha w(d_k, s_i, t_i)+ w(t_i,d_k)$\\
\STATE \ \ \ \ \ \ \ \  $v_3(\delta(k),i)=w(d_k, s_i, t_i)$\\
\STATE end for\\
\STATE Let $G_3\equiv (\Gamma \cup \Delta \cup R, v_3)$ be the complete bipartite graph with \emph{left} vertex-set $\Gamma \cup \Delta $, \emph{right} vertex-set $R$, and edges with weight $v_3(x,i)$ for $ x\in \Gamma \cup \Delta $ and $i\in R $. \\
\STATE Find a minimum weight perfect matching $M_3$ in  $G_3\equiv (\Gamma \cup \Delta \cup R, v_3)$ with weight $ v_3(M_3)$.\\
\STATE  Output allocation TA$=M_4\equiv\{(k,\{i,j\}): (\gamma(k),i), (\delta(k), j)\in M_3, k \in D \}$.
	\end{algorithmic}		
\end{algorithm}

The crucial points of algorithm TA$(\alpha)$ are found in lines 4 and 5 where the costs of assigning a request are defined; a resulting quantity is $v_3(M_3)$.  
We now prove two lemma's concerning this quantity $v_3(M_3)$, which will be of use in Section~\ref{sec:approximation_results}.

\begin{lemma}
\label{lemma:ta}
For each $\alpha \in \{1,2\}$, we have:
\[v_3(M_3)= \sum_{(k,\{i,j\})\in M_4} \min \{\alpha w(d_k, s_i, t_i)+w(t_i, d_k, s_j, t_j), \alpha w(d_k, s_j, t_j)+w(t_j, d_k, s_i, t_i)\}.\]
\end{lemma}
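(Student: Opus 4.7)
The plan is to exploit the fact that $M_3$ is a minimum-weight perfect matching in the bipartite graph $G_3$, together with the way $M_4$ is constructed from $M_3$. First I would unpack what $M_4$ looks like: for each car $k \in D$, the virtual vertex $\gamma(k)$ is matched in $M_3$ to exactly one request, and $\delta(k)$ is matched to exactly one request; by definition of $M_4$, these two requests are precisely the pair $\{i,j\}$ associated to $k$ in $M_4$. Hence the total cost $v_3(M_3)$ decomposes cleanly as a sum over pairs $(k,\{i,j\}) \in M_4$ of $v_3(\gamma(k), \cdot) + v_3(\delta(k), \cdot)$.

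Next, for a fixed pair $(k,\{i,j\}) \in M_4$, there are only two possibilities for how $M_3$ distributed the requests among $\gamma(k)$ and $\delta(k)$: either $(\gamma(k), i), (\delta(k), j) \in M_3$, contributing
\[
\alpha w(d_k, s_i, t_i) + w(t_i, d_k) + w(d_k, s_j, t_j) = \alpha w(d_k, s_i, t_i) + w(t_i, d_k, s_j, t_j),
\]
or $(\gamma(k), j), (\delta(k), i) \in M_3$, contributing $\alpha w(d_k, s_j, t_j) + w(t_j, d_k, s_i, t_i)$ by a symmetric rewriting using lines 4--5 of Algorithm~\ref{alg:TA}. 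So in either case the contribution from this pair equals one of the two arguments of the min on the right-hand side of the lemma.

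The key step is to argue that $M_3$ must select the smaller of the two. This is a local exchange argument: if $M_3$ selected the larger arrangement on some pair $(k,\{i,j\})$, then swapping the matches of $\gamma(k)$ and $\delta(k)$ within that pair (leaving all other edges of $M_3$ untouched) yields another perfect matching in $G_3$ of strictly smaller total weight, contradicting the optimality of $M_3$. Thus the contribution of each pair in $M_4$ to $v_3(M_3)$ is exactly the min expression, and summing over all $(k,\{i,j\}) \in M_4$ yields the claimed identity. The only subtle point to verify carefully is that the proposed local swap produces a valid perfect matching, which is immediate since both $\gamma(k)$ and $\delta(k)$ get re-matched to the same two requests $\{i,j\}$; no other edge of $M_3$ is disturbed. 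I do not expect any real obstacle beyond writing this exchange cleanly.
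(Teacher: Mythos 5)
Your proposal is correct and takes essentially the same approach the paper intends: the paper dismisses this lemma with the one-line remark that it ``follows directly from the definition of the costs in lines 4 and 5 of TA,'' and your decomposition over pairs plus the local swap of $\gamma(k)$ and $\delta(k)$ is precisely the argument being left implicit (the exchange step is indeed needed, since the definitions alone give one of the two arguments of the min, and only the minimality of $M_3$ forces it to be the smaller one).
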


This lemma follows directly from the definition of the costs in lines 4 and 5 of TA.

\begin{lemma}
\label{lemma:ta_min}
For $\alpha \in \{1,2\}$ and for each allocation $ M$, we have:
 \[ v_3(M_3)\le \sum_{(k,\{i,j\})\in M} \min \{\alpha w(d_k, s_i, t_i)+ w(t_i,d_k, s_j, t_j), \alpha w(d_k, s_j, t_j)+ w(t_j,d_k, s_i, t_i)\}.\]
\end{lemma}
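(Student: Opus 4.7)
The plan is to exploit the optimality of $M_3$ by constructing, for any given allocation $M$, a perfect matching $M'$ in the bipartite graph $G_3 = (\Gamma \cup \Delta \cup R, v_3)$ whose weight is precisely the right-hand side of the claimed inequality. Since $M_3$ is a minimum weight perfect matching in $G_3$, we have $v_3(M_3) \le v_3(M')$, and the lemma will follow.

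To construct $M'$, I first observe that $|R| = 2n = |\Gamma| + |\Delta|$, so any allocation $M$ naturally induces candidate matchings in $G_3$: for each triple $(k, \{i,j\}) \in M$, I must decide whether to match $\gamma(k)$ to $i$ and $\delta(k)$ to $j$, or $\gamma(k)$ to $j$ and $\delta(k)$ to $i$. For each such triple I pick, independently, whichever of these two pairings yields the smaller total cost under $v_3$. Because the triples in $M$ use disjoint cars and disjoint requests, combining these local choices across all triples produces a valid perfect matching $M'$ in $G_3$.

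Next I compute $v_3(M')$. Using the definitions in lines~4 and~5 of Algorithm~\ref{alg:TA},
\[
v_3(\gamma(k),i)+v_3(\delta(k),j) = \alpha w(d_k,s_i,t_i)+w(t_i,d_k)+w(d_k,s_j,t_j) = \alpha w(d_k,s_i,t_i)+w(t_i,d_k,s_j,t_j),
\]
where the last equality uses the path-distance convention $w(x_1,\ldots,x_k)=\sum_{i=1}^{k-1}w(x_i,x_{i+1})$. The analogous identity holds for the swapped pairing. Summing the minimum of the two over all triples $(k,\{i,j\}) \in M$ gives exactly the right-hand side of the claim.

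Since $M_3$ is a minimum weight perfect matching in $G_3$ and $M'$ is a perfect matching in $G_3$, we conclude $v_3(M_3) \le v_3(M')$, proving the lemma. There is no genuine obstacle here; the argument is a direct ``any allocation yields a feasible matching, and the minimum beats it'' comparison, completely parallel in spirit to the matching-domination step used in Lemma~\ref{lemma:ma_min}, but simpler because no cycle decomposition or averaging is needed: the bipartite structure of $G_3$ already accommodates the allocation $M$ without a factor of $\tfrac{1}{2}$.
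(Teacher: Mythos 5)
Your proposal is correct and is exactly the argument the paper intends: the paper dispatches this lemma with the one-line remark that it is obvious since $M_3$ is a minimum weight perfect matching, and your construction of the feasible matching $M'$ from $M$ together with the weight computation $v_3(\gamma(k),i)+v_3(\delta(k),j)=\alpha w(d_k,s_i,t_i)+w(t_i,d_k,s_j,t_j)$ is simply the fully written-out version of that same comparison.
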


This lemma is obvious since $M_3$ is a minimum weight perfect matching.

\textbf{Remark.} \ Both MA($\alpha, v$) and TA($\alpha$) runs in time $O(n^3)$ since a minimum matching $M$ in a weighted graph of $n$ vertices can be found in time $O(n^3)$~\cite{gabow1990data}.

\section{Approximation results}
\label{sec:approximation_results}
In this section, we analyze the combined algorithm CA($\alpha, v$), i.e., the best of the two algorithms MA($\alpha, v$) and TA($\alpha$), for $\CS_{sum}$, $\CS_{sum,s=t}$, $\CS_{lat}$ and $\CS_{lat,s=t}$. 
We denote the allocation by the match and assign algorithm MA($\alpha, v$) (resp. the transportation algorithm TA($\alpha$)) by MA (resp. TA).
We denote an optimal allocation of a specific problem by $ M^*=\{(k, R^*_k): k\in D\}$ with $ R^*_k=\{i,j\}$ ($i,j \in R$). 
Let $ M^*_R=\{R^*_k: (k, R^*_k)\in M^*\}$ denote the pairs of requests in $ M^*$.  
With a slight abuse of notation, we use CA($I$) to denote the allocation found by CA for instance $I$.
 
\subsection{Approximation results for $\CS_{sum}$}
\label{subsec:approximation_results_CS_sum} 
We first establish the worst-case ratios of MA($1, u$) and TA($1$), and next prove that CA($1, u$) is a $ 2$-approximation algorithm.   
\begin{lemma}
\label{lem:alg_ana_401}
$\MA(1, u)$ is a $2$-approximation algorithm for $\CS_{sum}$.
\end{lemma}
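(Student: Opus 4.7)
The plan is to combine the two lemmas proven for MA with a simple per-pair inequality. Applying \cref{lemma:ma} with $\alpha=1$ and $v=u$, together with~(\ref{equation101_u_1}), the quantity $v_1(M_1)+v_2(M_2)$ equals exactly the total travel time $cost(\mathrm{MA})$ of the allocation produced by the algorithm. Applying \cref{lemma:ma_min} to an optimal allocation $M^*$ for $\CS_{sum}$ yields
\[
cost(\mathrm{MA}) \;\le\; \tfrac{1}{2}\sum_{(k,\{i,j\})\in M^*}\bigl(w(d_k,s_i)+w(d_k,s_j)+u_{ij}+u_{ji}\bigr).
\]
It therefore suffices to show, triple by triple, that $w(d_k,s_i)+w(d_k,s_j)+u_{ij}+u_{ji}\le 4\,cost(k,\{i,j\})$.

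Fix a triple $(k,\{i,j\})\in M^*$ and assume without loss of generality that $cost(k,\{i,j\})=w(d_k,s_i)+u_{ij}$; the other case is symmetric. I would establish three ingredients. First, the triangle inequality on $d_k\to s_i\to s_j$ gives $w(d_k,s_j)\le w(d_k,s_i)+w(s_i,s_j)$. Second, $u_{ij}\ge w(s_i,s_j)$: inspecting the three tours entering the definition~(\ref{equation101_u}) of $u_{ij}$, this is trivial for the two that start with the edge $(s_i,s_j)$, and for $w(s_i,t_i,s_j,t_j)$ it follows from $w(s_i,t_i)+w(t_i,s_j)\ge w(s_i,s_j)$. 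Third, and this is the only non-routine ingredient, $u_{ji}\le u_{ij}+w(s_i,s_j)$: take a tour realising $u_{ij}$ (which begins at $s_i$ and respects both precedence constraints) and prepend the edge $s_j\to s_i$; the resulting tour begins at $s_j$, still satisfies ``$s_i$ before $t_i$'' and ``$s_j$ before $t_j$'' (since $s_j$ is now placed first), and has total length $w(s_j,s_i)+u_{ij}$, which upper-bounds $u_{ji}$.

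Chaining the three ingredients gives
\[
w(d_k,s_i)+w(d_k,s_j)+u_{ij}+u_{ji}\;\le\;2w(d_k,s_i)+2u_{ij}+2w(s_i,s_j)\;\le\;2w(d_k,s_i)+4u_{ij}\;\le\;4\,cost(k,\{i,j\}),
\]
using the three ingredients in turn. Summing this per-triple bound over $M^*$ and combining with the two lemmas above yields $cost(\mathrm{MA})\le 2\,cost(M^*)$, which is the desired $2$-approximation. The main obstacle is the ``prepend'' argument for $u_{ji}\le u_{ij}+w(s_i,s_j)$: one must check that pushing $s_j$ to the very front of an optimal $u_{ij}$-tour preserves the feasibility of both pick-up-before-drop-off constraints. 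Once that is verified, the remaining calculation is just two applications of the triangle inequality plus the two lemmas already proved.
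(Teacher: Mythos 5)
Your proof is correct and follows essentially the same route as the paper: \cref{lemma:ma} and \cref{lemma:ma_min} reduce everything to the per-triple bound $w(d_k,s_i)+w(d_k,s_j)+u_{ij}+u_{ji}\le 4\,cost(k,\{i,j\})$, which both you and the paper obtain from the triangle inequality on $d_k\to s_i\to s_j$, the fact $w(s_i,s_j)\le u_{ij}$, and a comparison of $u_{ji}$ with $u_{ij}$. The only divergence is in that comparison: the paper proves $u_{ji}\le 2u_{ij}$ by a three-case analysis in the appendix, whereas you use the sharper $u_{ji}\le u_{ij}+w(s_i,s_j)$ (which implies the paper's bound); your prepend argument for it is sound, with the small caveat that when the $u_{ij}$-optimal order begins $s_i,s_j,\dots$ the prepended walk visits $s_j$ twice, so one must shortcut the repeated visit via one further application of the triangle inequality to land on one of the three orderings in~(\ref{equation101_u}) that define $u_{ji}$.
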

\begin{proof}
We assume wlog that, for each $(k,\{ i,j\})\in M^*$, $cost(k,\{i,j\})=w(d_k, s_i)+u_{ij} $. We have:

 \begin{equation}
   \begin{split}
   \label{equation401}
&cost(\text{MA($1, u$)})=\sum_{(k,\{i,j\})\in \text{MA}} \min\{w(d_k, s_i) +u_{ij}, w(d_k, s_j)+u_{ji}\}  \ \   \text{(by~(\ref{equation101_u_1}) and (\ref{eq:totalcost}))} \\
&= v_1(M_1)+ v_2(M_2) \ \ \ \text{(by~\cref{lemma:ma})} \\
&\leq  \frac{1}{2} \sum_{(k,\{ i,j\})\in M^*} (w(d_k,s_i)+w(d_k,s_j)+u_{ij}+u_{ji})  \ \    \text{(by~\cref{lemma:ma_min})} \\
&\leq  \frac{1}{2} \sum_{(k,\{ i,j\})\in M^*} (2w(d_k,s_i)+w(s_i, s_j)+3u_{ij}) \ \    
\\
&\leq  \frac{1}{2} \sum_{(k,\{ i,j\})\in M^*} (2w(d_k,s_i)+4u_{ij}) \ \    \text{(since $w(s_i,s_j)\le u_{ij}$)} \\
&\leq  \frac{1}{2} \sum_{(k,\{ i,j\})\in M^*} 4 cost(k,\{i,j\}) \ \    \text{(by the assumption $cost(k,\{i,j\})=w(d_k, s_i)+u_{ij}$)} \\
&=   2 \ cost(M^*).
  \end{split}
\end{equation}
The second inequality follows from the triangle inequality, and since $u_{ji}\le 2u_{ij}$ for each request pair $\{i,j\}\in R^2$; the corresponding proof can be found in~\cref{A0_1}. \end{proof}

Notice that the statement in Lemma~\ref{lem:alg_ana_401} is actually tight by the instance depicted in Figure~\ref{fig401}.

\begin{lemma}
\label{lem:alg_ana_402}
$\TA(1)$ is a $3$-approximation algorithm for $\CS_{sum}$.
\end{lemma}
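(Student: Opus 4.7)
The proof follows the template of \cref{lem:alg_ana_401}. First I would apply \cref{lemma:ta} with $\alpha=1$ to write
\[
v_3(M_3) = \sum_{(k,\{i,j\})\in M_4} \min\{A^k_{ij}, B^k_{ij}\},
\]
where $A^k_{ij} \equiv w(d_k, s_i, t_i) + w(t_i, d_k, s_j, t_j)$ and $B^k_{ij}$ is the symmetric expression obtained by swapping the roles of $i$ and $j$. For each $(k,\{i,j\}) \in M_4 = \TA$, either virtual schedule (round-trip on one request followed by a one-way service of the other) is a feasible execution of car $k$ serving both requests, so the triangle inequality gives $cost(k, \{i,j\}) \le \min\{A^k_{ij}, B^k_{ij}\}$. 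Summing over $M_4$ yields $cost(\TA(1)) \le v_3(M_3)$.

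Next I would apply \cref{lemma:ta_min} with $M = M^*$ to obtain $v_3(M_3) \le \sum_{(k,\{i,j\})\in M^*} \min\{A^k_{ij}, B^k_{ij}\}$. It thus suffices to establish the pointwise inequality $\min\{A^k_{ij}, B^k_{ij}\} \le 3\, cost(k,\{i,j\})$ for each $(k,\{i,j\}) \in M^*$, since summing over $M^*$ then yields $cost(\TA(1)) \le 3\, cost(M^*)$.

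To prove the pointwise bound, I would assume without loss of generality $cost(k,\{i,j\}) = w(d_k, s_i) + u_{ij}$ and split into three cases according to which of the three paths in~(\ref{equation101_u}) attains $u_{ij}$. In the case $u_{ij} = w(s_i, t_i) + w(t_i, s_j) + w(s_j, t_j)$, I would bound $A^k_{ij}$ by shortcutting $w(t_i, d_k) \le w(t_i, s_j) + w(s_j, d_k)$ and using $w(d_k, s_j) \le w(d_k, s_i) + w(s_i, s_j) \le w(d_k, s_i) + u_{ij}$, noting that $w(s_i, t_i) + w(t_i, s_j) + w(s_j, t_j) = u_{ij}$. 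In the case $u_{ij} = w(s_i, s_j) + w(s_j, t_i) + w(t_i, t_j)$, the optimal tour finishes request $i$ first, so I would bound $A^k_{ij}$ by replacing $w(s_i, t_i)$ with $w(s_i, s_j) + w(s_j, t_i)$ and $w(s_j, t_j)$ with $w(s_j, t_i) + w(t_i, t_j)$. The third case $u_{ij} = w(s_i, s_j) + w(s_j, t_j) + w(t_j, t_i)$ is handled symmetrically by bounding $B^k_{ij}$. In each subcase, careful bookkeeping reduces the bound to at most $3 w(d_k, s_i) + 3 u_{ij} = 3\,cost(k,\{i,j\})$.

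The main obstacle is this final case analysis. A naive term-by-term application of the triangle inequality to the five summands of $A^k_{ij}$ (or $B^k_{ij}$) yields only $\min\{A^k_{ij}, B^k_{ij}\} \le 3 w(d_k, s_i) + 4 u_{ij}$, which is off by exactly one factor of $u_{ij}$ from the target. Recovering this missing factor forces one to exploit the specific orientation of the optimal path in each subcase: the detour $d_k \to s_j$ (or $d_k \to s_i$) after the first drop-off must be routed through points already visited by the optimal path so that the newly introduced edges can be charged against terms already contributing to $u_{ij}$, rather than double-counted.
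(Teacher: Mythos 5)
Your proposal is correct and follows essentially the same route as the paper: bound $cost(\TA(1))$ by $v_3(M_3)$ via \cref{lemma:ta}, compare against $M^*$ via \cref{lemma:ta_min}, and establish the pointwise inequality $w(d_k,s_i,t_i,d_k,s_j,t_j)\le 3\,cost(k,\{i,j\})$ by a three-case triangle-inequality analysis on which path attains $u_{ij}$ (this is exactly the paper's Appendix~\ref{A1}). The only cosmetic difference is that the paper always bounds the term $A^k_{ij}$ under the wlog assumption $cost(k,\{i,j\})=w(d_k,s_i)+u_{ij}$, whereas you switch to $B^k_{ij}$ in the third case; both work, and your observation that the naive bound only gives $3w(d_k,s_i)+4u_{ij}$ correctly identifies why the case analysis is needed.
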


\begin{proof}
We assume wlog that, for each $(k,\{ i,j\})\in M^*$, $cost(k,\{i,j\})=w(d_k, s_i)+u_{ij} $. We have:
\begin{equation}
  \begin{split}
   \label{equation402}
cost(\text{TA($1$)})&=\sum_{(k,\{i,j\}) \in \text{TA}} \min\{w(d_k, s_i)+u_{ij}, w(d_k, s_j)+u_{ji}\}  \ \    \text{(by~(\ref{equation101_u_1}))} \\
&\leq  v_3(M_3)  \  \ \text{(by~\cref{lemma:ta})}\\
 &\leq \sum_{(k,\{i,j\})\in M^*}  w(d_k,s_i, t_i,d_k,s_j, t_j) \ \    \text{(by~\cref{lemma:ta_min})} \\
  &\leq \sum_{(k,\{i,j\})\in M^*} 3 cost(k, \{i,j\}) \ \    \text{(by $cost(k,\{i,j\})=w(d_k, s_i)+u_{ij}$)} \\
   &= 3 \ cost(M^*)
     \end{split}
\end{equation}
The third inequality holds since $3 cost(k, \{i,j\})\ge w(d_k,s_i, t_i,d_k,s_j, t_j)$ for each $(k, \{i,j\}) \in R$. The proof can be found in~\cref{A1}. \end{proof}

Notice that the statement in Lemma~\ref{lem:alg_ana_402} is actually tight by the instance depicted in Figure~\ref{fig404} restricted to the subgraph induced by the nodes $(s_5, s_6)$, $(k_3, k_4)$, and $(s_7,s_8)$.

Recall that the combined algorithm CA($1, u$) runs MA($1, u$) and TA($1$), and then outputs the best of the two solutions. We now state the main result of this section.

\begin{theorem}
\label{the:alg_ana_1}
$\CA(1, u)$ is a $2$-approximation algorithm for $\CS_{sum}$. Moreover, there exists an instance $I$ for which  cost$(\CA(I))$ = 2 cost$(M^*(I))$.
\end{theorem}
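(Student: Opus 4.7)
The upper bound is immediate. By construction, $cost(\CA(1,u)) \le cost(\MA(1,u))$, and Lemma~\ref{lem:alg_ana_401} gives $cost(\MA(1,u)) \le 2\,cost(M^*)$; hence $\CA(1,u)$ is a $2$-approximation for $\CS_{sum}$. Nothing more needs to be said about the upper bound: combining MA$(1,u)$ with a worse algorithm cannot break the guarantee that MA$(1,u)$ already provides, because $\CA$ always retains the MA solution as one of its two candidates.

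For the tightness claim, the plan is to exhibit a single instance $I$ on which \emph{both} $\MA(1,u)$ and $\TA(1)$ return a solution of cost $2\,cost(M^*(I))$; since $\CA$ outputs the better of the two, this pins $cost(\CA(I))$ to $2\,cost(M^*(I))$ as well. The natural starting point is the instance of Figure~\ref{fig401}, already invoked to witness tightness of Lemma~\ref{lem:alg_ana_401}: $\MA(1,u)$ attains ratio $2$ there, so it suffices to trace $\TA(1)$ on that same instance and verify that its output also has cost $2\,cost(M^*)$. If the verification succeeds the theorem is proved; otherwise, one would replicate the gadget of Figure~\ref{fig401} or perturb the distances until $\TA(1)$ is likewise pinned to ratio $2$.

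The main obstacle is precisely this simultaneity requirement, because $\MA$ and $\TA$ use structurally different reductions: $\MA$ first pairs requests by the averaged $u$-cost and only afterward routes pairs to cars, while $\TA$ duplicates each car into two virtual copies $\gamma(k),\delta(k)$ and assigns one request to each through a single bipartite matching $M_3$. A hard instance for one algorithm need not be hard for the other. To force both algorithms into an allocation of cost $2\,cost(M^*)$, one should arrange $I$ so that (i) the pairing chosen by $M_1$ differs from $\{R^*_k : k\in D\}$ but is locally cheaper under $v_1$, and (ii) in $G_3$ the unique minimum-weight matching splits the requests within each $R^*_k$ across two different cars, so that $\TA$ reassembles the same ``wrong'' pairs as $\MA$. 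A highly symmetric template -- cars and requests arranged so that the ``crossed'' and ``uncrossed'' assignments are equal-cost under $v_1$ and $v_3$ but not under $cost(\cdot,\cdot)$, with tie-breaking consistent between the two algorithms -- is the most plausible way to meet both conditions at once, and I would expect the Figure~\ref{fig401} gadget (or a minor elaboration of it) to already satisfy them.
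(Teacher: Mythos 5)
Your upper-bound argument is complete and coincides with the paper's: since $\mathrm{cost}(\CA(1,u))=\min\{\mathrm{cost}(\MA(1,u)),\mathrm{cost}(\TA(1))\}\le \mathrm{cost}(\MA(1,u))$, Lemma~\ref{lem:alg_ana_401} immediately gives the ratio $2$. The gap is in the tightness half: the theorem asserts the \emph{existence} of an instance with $\mathrm{cost}(\CA(I))=2\,\mathrm{cost}(M^*(I))$, and you only describe a plan for finding one (``if the verification succeeds\dots; otherwise, one would replicate the gadget or perturb the distances''). A conditional plus a fallback sketch is not an existence proof; you must actually exhibit the instance and trace both algorithms on it. This is exactly what the paper does, and your chosen candidate (the instance of Figure~\ref{fig401}) does work, so the gap is fillable precisely along the lines you propose: there the optimum is $\{(k_1,\{1,3\}),(k_2,\{2,4\})\}$ with cost $2$, while $\MA(1,u)$ can return the pairing $\{\{1,2\},\{3,4\}\}$ (cost $4$) and $\TA(1)$ can return an allocation of cost $4$ as well, so $\CA$ is pinned at $4=2\cdot 2$.

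One substantive point where your anticipated mechanism differs from what actually happens, and which you should be aware of when you do the verification: you ask for an instance in which the ``wrong'' pairing is \emph{strictly} cheaper under $v_1$ and the minimum-weight matching in $G_3$ is \emph{unique} and splits the optimal pairs. In the Figure~\ref{fig401} instance neither holds --- the pairings $\{\{1,2\},\{3,4\}\}$ and $\{\{1,3\},\{2,4\}\}$ are tied at $v_1$-weight $3$, and in $G_3$ every perfect matching has weight $6$. The worst-case ratio is realized only under adversarial tie-breaking (this is why the paper writes that MA and TA ``can find'' the bad solutions), which is the standard convention for lower-bound instances but is a weaker and easier condition to engineer than the strict preferences you were aiming for. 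Once you carry out the two computations (for $\MA$: $v_1(\{1,2\})=2$, $v_1(\{3,4\})=1$, total cost $4$; for $\TA$: every request has $\gamma$-cost $2$ and $\delta$-cost $1$ from either car, so some minimum matching reassembles the pairs $\{1,2\}$ and $\{3,4\}$, again cost $4$), your proof becomes the paper's proof.
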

\begin{proof}
It is obvious that, as cost(CA($1,u)) = \mbox{min}\{$cost(MA($1,u)),$cost(TA(1))\}, Lemma's \ref{lem:alg_ana_401} and \ref{lem:alg_ana_402} imply that CA$(1,u)$ is a 2-approximation algorithm for $\CS_{sum}$. We now provide an instance for which this ratio is achieved.

Consider the instance $I$ depicted in~\cref{fig401}. This instance has $n=2$ with $D=\{1,2\}$ and $R=\{1,2,3,4\}$. Locations corresponding to distinct vertices in \cref{fig401} are at distance $1$.
Observe that an optimal solution is $M^*(I) =  \{(k_1, \{1, 3\}), (k_2, \{2, 4\})\}$ with cost($M^*(I))=2$. Note that $ M^*_R=\{\{1,3\}, \{2,4\}\}$. Let us now analyse the performance of MA($1, u$) and TA($1$) on instance $I$.

\begin{figure}
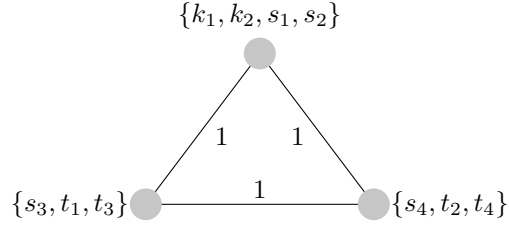

\centering
\tikz \draw (2,3) node[circle]{$\{k_1, k_2, s_1, s_2\}$} (-0.5,0.5) node[circle]{$\{s_3,t_1, t_3\}$} (4.5,0.5) node[circle]{$\{s_4, t_2, t_4\}$} (1.5,1.4) node[circle]{1} (2.5,1.4) node[circle]{1} (2,0.7) node[circle]{1} (0.5,0.5) node[circle, fill=gray!45]{\ \ } -- (2,2.5) node[circle, fill=gray!45]{\ \ }--(3.5, 0.5) node[circle, fill=gray!45]{\ \ }-- cycle;
\caption{A worst-case instance for the combined algorithm CA($ 1, u$) of $\CS_{sum}$.}
\label{fig401}
\end{figure}

Based on the $u_{ij}$ values as defined in (\ref{equation101_u_1}), MA($1, u$) can find, in the first step, matching $M_1=\{\{1,2\}, \{3,4\}\}$ with $v_1(M_1)=3$. Then, no matter how the second step matches the pairs to cars (since two cars stay at the same location), the total cost of MA($1, u$) will be $4$.
TA($1$) can assign request $1$ to car $\gamma(1)$, and request $2$ to car $\delta(1)$, and similarly, request $3$ to car $\gamma(2)$, and request $4$ to car $\delta(2)$. Note that $v_3(\{(k_1,1), (k_1,2), (k_2,3),(k_2,4)\})=v_3(\{(k_1,1), (k_1,3), (k_2,2),(k_2, 4)\})=6$. Thus the total cost of TA($1$) is  $4$.

To summarize, the instance in~\cref{fig401} is a worst-case instance for the combined algorithm CA$(1,u)$.  
\qed \end{proof}

\subsection{Approximation results for $\CS_{sum,s=t}$}
\label{subsec:approximation_results_CS_sumst}
Let us now consider the special case of $\CS_{sum}$ where the pick-up and drop-off location is identical to each request, $\CS_{sum,s=t}$.
From~(\ref{equation401}) and (\ref{equation402}), we have cost(MA($1, u$))$= v_1(M_1)+ v_2(M_2)$ and cost(TA($1$))$\le  v_3(M_3)$. Note that  $ u_{ij}=u_{ji}=w(s_i,s_j)$ in $\CS_{sum,s=t}$.

Now, we show that for $\CS_{sum,s=t}$ algorithm CA$(1,u)$ is a $\frac{7}{5}$-approximation algorithm, a ratio which is strictly better than the ratio's of MA$(1,u)$ and TA(1) for this problem.

 \begin{theorem}
\label{the:alg_ana_3}
$\CA(1, u)$ is a 7/5-approximation algorithm for $CS_{sum,s=t}$. Moreover, there exists an instance $I$ for which  cost$(\CA(I))$=7/5 cost$(M^*(I))$.
\end{theorem}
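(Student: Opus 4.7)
The plan is a ``best of two'' argument that bounds $cost(\MA(1,u))$ and $cost(\TA(1))$ by two different linear combinations of the same pair of OPT-derived quantities, and then takes the pointwise minimum. Fix an optimal allocation $M^*$ and, for each $(k,\{i,j\}) \in M^*$, assume WLOG that $w(d_k,s_i) \le w(d_k,s_j)$, so that $cost(k,\{i,j\}) = w(d_k,s_i) + w(s_i,s_j)$. Define $A := \sum_{(k,\{i,j\})\in M^*} w(d_k,s_i)$ and $B := \sum_{(k,\{i,j\})\in M^*} w(s_i,s_j)$, so that $cost(M^*) = A + B$.

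For MA, I would specialize \cref{lemma:ma_min} with $\alpha = 1$ and $v = u$ (using $u_{ij} = u_{ji} = w(s_i,s_j)$ in the $s=t$ case) and combine it with \cref{lemma:ma} to obtain
$cost(\MA(1,u)) \le \tfrac{1}{2}\sum_{(k,\{i,j\})\in M^*} (w(d_k,s_i)+w(d_k,s_j)) + B$.
A single application of the triangle inequality $w(d_k,s_j) \le w(d_k,s_i) + w(s_i,s_j)$ reduces the first term to $A + \tfrac{1}{2}B$, so $cost(\MA(1,u)) \le A + \tfrac{3}{2}B$.

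For TA, \cref{lemma:ta_min} with $\alpha = 1$ and $s=t$ produces per-pair terms of the form $\min\{2w(d_k,s_i)+w(d_k,s_j),\, 2w(d_k,s_j)+w(d_k,s_i)\}$, which under the WLOG assumption equals $2w(d_k,s_i)+w(d_k,s_j)$. Since $cost(\TA(1)) \le v_3(M_3)$ (triangle inequality $w(s_i,s_j)\le w(d_k,s_i)+w(d_k,s_j)$ applied to each served pair), another use of $w(d_k,s_j) \le w(d_k,s_i)+w(s_i,s_j)$ gives $cost(\TA(1)) \le 3A + B$. Taking the convex combination with weights $\tfrac{4}{5}$ and $\tfrac{1}{5}$, which equalizes the coefficients of $A$ and $B$, yields
$cost(\CA(1,u)) \le \min\{A + \tfrac{3}{2}B,\, 3A+B\} \le \tfrac{4}{5}(A+\tfrac{3}{2}B) + \tfrac{1}{5}(3A+B) = \tfrac{7}{5}(A+B) = \tfrac{7}{5}\,cost(M^*)$.

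For tightness, I would construct an instance with ratio $A : B = 1 : 4$ that forces both algorithms simultaneously to meet their bounds. A natural template clusters cars at a small number of locations and uses request pairs whose intra-pair distance dominates the car-to-request distance, so that the first-stage matching of MA disagrees with the OPT pairing (paying the full $\tfrac{3}{2}B$) while TA unavoidably doubles a pick-up leg (paying the full $3A$). The main obstacle will be the \emph{simultaneous} tightness across both algorithms and across all valid tie-breakings: one must verify that every minimum-weight matching produced by the first stage of MA and every minimum-weight matching produced by TA reproduces the inflated cost, so that $\CA$ cannot escape either bound.
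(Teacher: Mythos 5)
Your upper-bound argument is correct and is, up to notation, exactly the paper's own proof: writing $A=\sum_{(k,\{i,j\})\in M^*} w(d_k,s_i)$ and $B=\sum_{(k,\{i,j\})\in M^*} w(s_i,s_j)$, your bounds $cost(\MA(1,u))\le A+\tfrac{3}{2}B$ and $cost(\TA(1))\le 3A+B$ follow from \cref{lemma:ma}, \cref{lemma:ma_min} and \cref{lemma:ta_min} just as in~(\ref{equation404}), and your convex combination with weights $\tfrac{4}{5}$ and $\tfrac{1}{5}$ is literally the paper's inequality $5\,cost(\CA(1,u))\le 4\,cost(\MA(1,u))+cost(\TA(1))$. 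So the $7/5$ ratio itself is established.

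The gap is the second half of the statement: the theorem asserts the \emph{existence} of a tight instance, and you only describe a template and explicitly defer the one point that actually needs checking. You correctly identify that the instance must have $A:B=1:4$ with both bounds simultaneously tight, but the construction is not optional. The paper achieves it with two disjoint gadgets (Figure~\ref{fig404}): a ``$B$-gadget'' in which two cars sit at distinct locations, each co-located with one request, and the remaining two requests are co-located at a third point at distance $4$ from both cars (OPT $=8$; MA's first-stage matching can pair the two co-located requests together and the two car-located requests together, paying $12=\tfrac{3}{2}B$, while TA is optimal here); and an ``$A$-gadget'' in which both cars are co-located at a centre and the four requests form two co-located pairs at distance $1$ on either side (OPT $=2$; every $v_3$-matching has the same weight, so TA can split each co-located pair across the two cars and pay $6=3A$, while MA is optimal here). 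Both algorithms then total $14$ against OPT $=10$. Note that even this instance only shows the algorithms \emph{can} output cost $14$ under adversarial tie-breaking: the ties you flag as the main obstacle are genuinely present in the paper's example and are resolved by convention (``can find''), not eliminated. Without exhibiting a concrete instance, the ``Moreover'' clause remains unproven in your write-up.
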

\begin{proof}
We assume wlog that, for each $(k,\{ i,j\})\in M^*$, $cost(k,\{i,j\})=w(d_k, s_i)+u_{ij} $. We have:
\begin{equation}
   \begin{split}
   \label{equation404}
5 cost& (\text{CA($1, u$)}) \le 4 cost(\text{MA($1, u$)}) +cost(\text{TA($1$)})  \\
&\le  4( v_1(M_1)+ v_2(M_2))+ v_3(M_3) \ \  \text{(using~(\ref{equation401}) and~(\ref{equation402}))} \\
&\leq  \sum_{(k,\{ i,j\})\in M^*} (4 w(d_k, s_i)+ 3 w(d_k, s_j) + 4 w(s_i, s_j)) \ \    \text{(by Lemma's~\ref{lemma:ma_min} and~\ref{lemma:ta_min})} \\
&\leq  \sum_{(k,\{ i,j\})\in M^*} (7 w(d_k, s_i)+ 7 w(s_i, s_j)) \ \    \text{(by the triangle inequality)} \\
&=  \sum_{(k,\{ i,j\})\in M^*} 7 cost(k,\{i,j\}) \ \    \text{(by $cost(k,\{i,j\})=w(d_k, s_i)+u_{ij} $)} \\
&=   7 \ cost(M^*)
  \end{split}
\end{equation}

We now provide an instance for which this ratio is achieved.
Consider the instance $I$ depicted in~\cref{fig404}. If two points are not connected by an edge, their distance equals $ 5$.
Observe that an optimal solution is $ \{(k_1, \{1, 3\}), (k_2, \{2, 4\}),(k_3, \{5, 6\}), (k_4, \{7, 8\})\}$ with cost(M$^*$(I))=10. Note that $ M^*_R=\{\{1,3\}, \{2,4\}, \{5,6\}, \{7,8\}\}$. Let us now analyse the performance of MA($1, u$) and TA($1$) on instance $I$.
\begin{figure}
\centering
\tikz \draw     (4,0.9) node[circle]{$\{s_1, s_2\}$} (0.5,0.9) node[circle]{$\{k_1,s_3\}$} (7.5,0.9) node[circle]{$\{k_2,s_4\}$} (2.2,0.6) node[circle]{4} (5.8,0.6) node[circle]{4} (0.5,0.4) node[circle, fill=gray!45]{\ \ } -- (4,0.4) node[circle, fill=gray!45]{\ \ }--(7.5, 0.4) node[circle, fill=gray!45]{\ \ }-- cycle   (0.5,-1.4) node[circle]{$\{s_5, s_6\}$} (4,-1.4) node[circle]{$\{k_3,k_4\}$} (7.5,-1.4) node[circle]{$\{s_7, s_8\}$} (2.2,-0.6) node[circle]{1} (5.8,-0.6) node[circle]{1} (0.5,-0.9) node[circle, fill=gray!45]{\ \ } -- (4,-0.9) node[circle, fill=gray!45]{\ \ }--(7.5, -0.9) node[circle, fill=gray!45]{\ \ };
\caption{A worst-case instance for the combined algorithm CA($ 1, u$) of $\CS_{sum, s=t}$.}
\label{fig404}
\end{figure}

Based on the $u_{ij}$ values as defined in (\ref{equation101_u_1}), MA($1, u$) can find, in the first step, matching  $M_1=\{\{1,2\}, \{3,4\},  \{5,6\}, \{7,8\}\}$. Then, no matter how the second step matches the pairs to cars (since two cars stay at the same location), the total cost of MA($1, u$) will be $ 14$.

TA($1$) can assign request $1$ to car $\gamma(1)$, and request $3$ to car $\delta(1)$, and similarly, request $2$ to car $\gamma(2)$, and request $4$ to car $\delta(2)$; request $5$ to car $\gamma(3)$, and request $7$ to car $\delta(3)$; request $4$ to car $\gamma(4)$, and request $6$ to car $\delta(4)$. Note that $v_3(\{(k_3,5), (k_3,7), (k_4,6),(k_4,8)\})=v_3(\{(k_3,5), (k_3,6), (k_4,7),(k_4,8)\})=6$. Thus the total cost of TA($1$) is  $14$.

To summarize, the instance in~\cref{fig404} is a worst-case instance for the combined algorithm CA$(1,u)$. 
\end{proof}

\subsection{Approximation results for $\CS_{lat}$}
\label{subsec:approximation_results_CS_lat}
For $CS_{lat}$, we analyze algorithm CA($2, \mu$), which outputs the best of the two solutions, MA($2, \mu$) and TA($2$).

\begin{lemma}
\label{lemma:ca2}
	For each  $(k,\{i,j\})\in D\times R^2$,
$ 2w(d_k, s_i)+2w(d_k, s_j)+\mu_{ij}+\mu_{ji}+ \min \{2 w(d_k, s_i, t_i)+w(t_i, d_k, s_j, t_j), 2 w(d_k, s_j, t_j)+w(t_j, d_k, s_i, t_i)\} \le \min\{8w(d_k,s_i)+5\mu_{ij}, 8w(d_k,s_j)+5\mu_{ji}\}$.\\
\end{lemma}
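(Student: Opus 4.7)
The plan hinges on symmetry: the LHS is invariant under swapping $i$ and $j$ (the two arguments of the inner $\min$ get interchanged, and $\mu_{ij}+\mu_{ji}$ together with $2w(d_k,s_i)+2w(d_k,s_j)$ are already symmetric), while the outer $\min$ on the RHS takes exactly the two corresponding sides. Hence it suffices to prove $\text{LHS}\le 8w(d_k,s_i)+5\mu_{ij}$; the bound $\text{LHS}\le 8w(d_k,s_j)+5\mu_{ji}$ then follows from the same argument with $i$ and $j$ exchanged.

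To prove this inequality I would upper-bound $\min\{A,B\}$ by the first branch $A=2w(d_k,s_i,t_i)+w(t_i,d_k,s_j,t_j)=2w(d_k,s_i)+2w(s_i,t_i)+w(t_i,d_k)+w(d_k,s_j)+w(s_j,t_j)$ and then apply the triangle inequality in two places: $w(t_i,d_k)\le w(s_i,t_i)+w(d_k,s_i)$, and $w(d_k,s_j)\le w(d_k,s_i)+w(s_i,s_j)$, the latter used both in the $2w(d_k,s_j)$ of the LHS and in the copy of $w(d_k,s_j)$ hidden inside $A$. Writing $\sigma:=w(s_i,s_j)$, $\phi:=w(s_i,t_i)$, $\psi:=w(s_j,t_j)$, the coefficients of $w(d_k,s_i)$ add up cleanly to $8$, yielding $\text{LHS}\le 8w(d_k,s_i)+3\sigma+3\phi+\psi+\mu_{ij}+\mu_{ji}$, so the task reduces to showing $3\sigma+3\phi+\psi+\mu_{ji}\le 4\mu_{ij}$.

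For the $\mu_{ji}$ term I would use its third-option upper bound $\mu_{ji}\le 2w(s_j,t_j)+w(t_j,s_i)+w(s_i,t_i)=2\psi+w(t_j,s_i)+\phi$ and then $w(t_j,s_i)\le w(t_j,s_j)+w(s_j,s_i)=\psi+\sigma$, giving $\mu_{ji}\le 3\psi+\sigma+\phi$. Substituting shows $3\sigma+3\phi+\psi+\mu_{ji}\le 4(\sigma+\phi+\psi)$, so everything boils down to the compact auxiliary inequality $\sigma+\phi+\psi\le \mu_{ij}$.

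This auxiliary inequality is the main obstacle and the heart of the proof; I would establish it by case analysis over the three options defining $\mu_{ij}$. For option~3, whose value is $2\phi+w(t_i,s_j)+\psi$, the triangle inequality $\sigma\le\phi+w(t_i,s_j)$ immediately gives a value $\ge\phi+\sigma+\psi$. For option~1 with value $2\sigma+2w(s_j,t_i)+w(t_i,t_j)$, I would regroup as $\sigma+(\sigma+w(s_j,t_i))+(w(s_j,t_i)+w(t_i,t_j))$ and apply $\sigma+w(s_j,t_i)\ge\phi$ and $w(s_j,t_i)+w(t_i,t_j)\ge\psi$; option~2 is symmetric. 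Combining all pieces yields $\text{LHS}\le 8w(d_k,s_i)+\mu_{ij}+4\mu_{ij}=8w(d_k,s_i)+5\mu_{ij}$, as required. Note that merely bounding $\mu_{ji}\le 2\mu_{ij}$ along with $\sigma,\phi,\psi\le\mu_{ij}$ would yield only $8w(d_k,s_i)+10\mu_{ij}$; the key insight is that the single quantity $\mu_{ij}$ already dominates the \emph{sum} $\sigma+\phi+\psi$, which is precisely what allows the $5\mu_{ij}$ (rather than $10\mu_{ij}$) coefficient on the RHS.
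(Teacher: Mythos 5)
Your proof is correct. It differs from the paper's in how the work is organized: the paper proves the bound $\le 8w(d_k,s_i)+5\mu_{ij}$ by a three-way case split on which path realizes $\mu_{ij}$, and in each case expands \emph{the entire left-hand side} and matches it term-by-term against the expansion of $5\mu_{ij}$ (in one case even switching to the second branch of the inner $\min$), which makes for three long and error-prone chains of triangle inequalities. You instead always bound the inner $\min$ by its first branch, collect the coefficient $8$ on $w(d_k,s_i)$ once and for all, and funnel the entire case analysis into the single compact auxiliary inequality $w(s_i,s_j)+w(s_i,t_i)+w(s_j,t_j)\le\mu_{ij}$, combined with the uniform bound $\mu_{ji}\le 3w(s_j,t_j)+w(s_i,s_j)+w(s_i,t_i)$. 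This decomposition is cleaner and makes the provenance of the constant $5$ transparent (one $\mu_{ij}$ already present on the left plus $4\mu_{ij}$ from $4\bigl(w(s_i,s_j)+w(s_i,t_i)+w(s_j,t_j)\bigr)$), whereas the paper's direct verification buys nothing extra here beyond avoiding the small auxiliary lemma; your closing remark correctly identifies why the naive bounds $\mu_{ji}\le 2\mu_{ij}$ and $w(\cdot,\cdot)\le\mu_{ij}$ applied separately would not suffice. Both arguments use the $i\leftrightarrow j$ symmetry of the left-hand side to dispose of the second branch of the outer $\min$.
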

The proof can be found in~\cref{A2}.

 \begin{theorem}
\label{the:alg_ana_2}
$\CA(2, \mu)$ is a 5/3-approximation algorithm for $\CS_{lat}$.  
\end{theorem}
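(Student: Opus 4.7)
The plan is to mirror the combining strategy used in Theorem~\ref{the:alg_ana_3}. Since $\CA(2,\mu)$ outputs the cheaper of the two solutions,
\begin{equation*}
3\, wait(\CA(2,\mu)) \le 2\, wait(\MA(2,\mu)) + wait(\TA(2)),
\end{equation*}
so the whole theorem reduces to showing that the right-hand side is bounded by $5\, wait(M^*)$. As a first step I would assume, without loss of generality, that $wait(k,\{i,j\}) = 2w(d_k,s_i) + \mu_{ij}$ for every $(k,\{i,j\})\in M^*$, so that each summand of $wait(M^*)$ has a clean expression in terms of the quantities appearing in~\cref{lemma:ca2}.

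For the $\MA$-part, I would invoke~\cref{lemma:ma} with parameters $\alpha=2$ and $v=\mu$ to identify $wait(\MA(2,\mu))$ with $v_1(M_1)+v_2(M_2)$, and then apply~\cref{lemma:ma_min} to the allocation $M^*$ to obtain
\begin{equation*}
wait(\MA(2,\mu)) \le \tfrac{1}{2}\sum_{(k,\{i,j\})\in M^*} \bigl(2(w(d_k,s_i)+w(d_k,s_j))+\mu_{ij}+\mu_{ji}\bigr).
\end{equation*}
For the $\TA$-part, I would first verify the termwise inequality $2w(d_k,s_i)+\mu_{ij}\le 2w(d_k,s_i,t_i)+w(t_i,d_k,s_j,t_j)$, which follows from the third summand in the definition of $\mu_{ij}$ in~(\ref{equation102_u}) combined with the triangle inequality. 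This gives $wait(\TA(2))\le v_3(M_3)$, and then~\cref{lemma:ta,lemma:ta_min} applied with $\alpha=2$ and to $M^*$ yield
\begin{equation*}
wait(\TA(2)) \le \sum_{(k,\{i,j\})\in M^*} \min\{2w(d_k,s_i,t_i)+w(t_i,d_k,s_j,t_j),\, 2w(d_k,s_j,t_j)+w(t_j,d_k,s_i,t_i)\}.
\end{equation*}

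To conclude, I would form $2\,wait(\MA(2,\mu))+wait(\TA(2))$, exchange the order of summation, and observe that the resulting per-pair summand is exactly the left-hand side appearing in~\cref{lemma:ca2}. That lemma then dominates the summand by $\min\{8w(d_k,s_i)+5\mu_{ij},\,8w(d_k,s_j)+5\mu_{ji}\}$, which is in turn at most $5\,wait(k,\{i,j\})$ since $8\le 10=5\cdot 2$ on the $w(d_k,\cdot)$ coefficient while the $\mu$-coefficient is exactly $5$. Summing over $M^*$ and using the WLOG assumption delivers $2\,wait(\MA(2,\mu))+wait(\TA(2))\le 5\,wait(M^*)$, hence $wait(\CA(2,\mu))\le \tfrac{5}{3}\,wait(M^*)$. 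I expect the main technical step to be invoking~\cref{lemma:ca2} with the correct coefficients, so that the $\MA$-bound (which is weak on the $w(d_k,\cdot)$ terms) and the $\TA$-bound (which is weak on the $\mu$-terms) combine in the right $2\!:\!1$ ratio; the rest is precisely the same bookkeeping as in~(\ref{equation404}), with $\alpha=2$ and $v=\mu$ replacing the parameters used there.
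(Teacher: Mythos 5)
Your proposal is correct and follows essentially the same route as the paper's proof: the same $2\!:\!1$ combination of $\MA(2,\mu)$ and $\TA(2)$, the same use of Lemmas~\ref{lemma:ma}, \ref{lemma:ma_min}, \ref{lemma:ta}, \ref{lemma:ta_min} applied to $M^*$, and the same per-pair invocation of Lemma~\ref{lemma:ca2} to reach the bound $\min\{8w(d_k,s_i)+5\mu_{ij},\,8w(d_k,s_j)+5\mu_{ji}\}\le 5\,wait(k,\{i,j\})$. The only differences are cosmetic: you make the (harmless, not actually needed) WLOG normalization of $wait(k,\{i,j\})$ explicit, and you spell out the termwise verification of $wait(\TA(2))\le v_3(M_3)$ that the paper leaves implicit.
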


\begin{proof}

 \begin{equation*}
   \begin{split}
  \label{equation403_11}
3 wait&(\text{CA($2, \mu$)}) \le  2 wait(\text{MA($2, \mu$)}) + wait(\text{TA($2$)}) \\ 
&\le  2 (v_1(M_1)+ v_2(M_2))+ v_3(M_3) \  \ \text{(by~\cref{lemma:ma} and~\cref{lemma:ta})}\\ 
&\leq  \sum_{(k,\{ i,j\})\in M^*} \min\{8w(d_k,s_i)+5\mu_{ij}, 8w(d_k,s_j)+5\mu_{ji}\}    \\
 &\leq \sum_{(k,\{i,j\})\in M^*} 5 \min\{2w(d_k, s_i)+\mu_{ij}, 2w(d_k, s_j)+\mu_{ji}\} \\
&=   5 \ wait(M^*).
  \end{split}
\end{equation*}
The third inequality follows from~\cref{lemma:ma_min},  \cref{lemma:ta_min} and~\cref{lemma:ca2}, i.e., $2 (v_1(M_1)+ v_2(M_2))+ v_3(M_3) \le  \sum_{(k,\{ i,j\})\in M^*} (2w(d_k, s_i)+2w(d_k, s_j)+\mu_{ij}+\mu_{ji}+ \min \{2 w(d_k, s_i, t_i)+w(t_i, d_k, s_j, t_j), 2 w(d_k, s_j, t_j)+w(t_j, d_k, s_i, t_i)\}) \le \sum_{(k,\{ i,j\})\in M^*} \min\{8w(d_k,s_i)+5\mu_{ij}, 8w(d_k,s_j)+5\mu_{ji}\} $. 
 \end{proof}

\subsection{Approximation results for $\CS_{lat,s=t}$}
\label{subsec:approximation_results_CS_latst}
Let us now consider the special case of $\CS_{lat}$ where the pick-up and drop-off location is identical to each request, $\CS_{lat,s=t}$. Note that  $ \mu_{ij}=\mu_{ji}=w(s_i,s_j)$.
Now, we show that for $\CS_{lat,s=t}$ algorithm CA$(2,\mu)$ is a $\frac{3}{2}$-approximation algorithm, a ratio which is strictly better than the ratio's of MA$(2,\mu)$ and TA(2) for this problem.


 \begin{theorem}
\label{the:Pl1_03}
$\CA(2, \mu)$ is a 3/2-approximation algorithm for $\CS_{lat,s=t}$. Moreover, there exists an instance $I$ for which  wait$(\CA(I))$=3/2 wait$(M^*(I))$.
\end{theorem}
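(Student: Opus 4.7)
The approach mirrors the combined-algorithm arguments used earlier for $\CS_{sum,s=t}$ (Theorem~\ref{the:alg_ana_3}) and for $\CS_{lat}$ (Theorem~\ref{the:alg_ana_2}): I will bound $2 \cdot wait(\CA(2,\mu))$ from above by $wait(\MA(2,\mu)) + wait(\TA(2))$, apply Lemma~\ref{lemma:ma_min} and Lemma~\ref{lemma:ta_min} to the optimal allocation $M^*$ with $\alpha=2$ and $v=\mu$, exploit the simplifications $\mu_{ij}=w(s_i,s_j)$ and $w(d_k,s_i,t_i)=w(d_k,s_i)$ that are available in the $s=t$ setting, and close the chain with a single application of the triangle inequality.

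Concretely, since $\CA$ returns the cheaper of its two candidate solutions, $2\cdot wait(\CA(2,\mu)) \le wait(\MA(2,\mu)) + wait(\TA(2))$. Lemma~\ref{lemma:ma} with $\alpha=2$, $v=\mu$ gives $wait(\MA(2,\mu))=v_1(M_1)+v_2(M_2)$, and Lemma~\ref{lemma:ta} combined with a short detour-through-$d_k$ argument (comparing the TA ``go home then serve $j$'' tour to the ordering $d_k \to s_i \to t_i \to s_j \to t_j$ in~(\ref{equation102})) gives $wait(\TA(2))\le v_3(M_3)$. Applying Lemma~\ref{lemma:ma_min} and Lemma~\ref{lemma:ta_min} to $M^*$ and specializing to $s_i=t_i$ yields
\begin{align*}
v_1(M_1)+v_2(M_2) &\le \sum_{(k,\{i,j\})\in M^*}\bigl(w(d_k,s_i)+w(d_k,s_j)+w(s_i,s_j)\bigr),\\
v_3(M_3) &\le \sum_{(k,\{i,j\})\in M^*}\min\bigl\{3w(d_k,s_i)+w(d_k,s_j),\;3w(d_k,s_j)+w(d_k,s_i)\bigr\}.
\end{align*}
Assume without loss of generality that $wait(k,\{i,j\}^*)=2w(d_k,s_i)+w(s_i,s_j)$, i.e.\ $w(d_k,s_i)\le w(d_k,s_j)$; then the min on the right-hand side equals $3w(d_k,s_i)+w(d_k,s_j)$, and adding the two inequalities gives a per-triple total of $4w(d_k,s_i)+2w(d_k,s_j)+w(s_i,s_j)$. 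One final use of the triangle inequality $w(d_k,s_j)\le w(d_k,s_i)+w(s_i,s_j)$ converts this into $6w(d_k,s_i)+3w(s_i,s_j)=3\cdot wait(k,\{i,j\}^*)$, and summing over $M^*$ yields $2\cdot wait(\CA(2,\mu))\le 3\cdot wait(M^*)$, i.e.\ the $3/2$ ratio.

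The main obstacle is not the chain of inequalities, which is essentially forced by the choice of weighing $\MA$ and $\TA$ equally, but the construction of a tight instance certifying that $3/2$ cannot be improved. A candidate mirrors the worst-case instances used for related results: a small, symmetric configuration in which (i) every triangle inequality invoked above holds with equality (so the optimal ``witness'' points are collinear with $s_i$ between $d_k$ and $s_j$), (ii) the minimum-weight matching $M_1$ in $G_1$ pairs requests that belong to different cars in $M^*$, and (iii) the bipartite assignment in $\TA$ cannot distinguish between the $\gamma(k)$- and $\delta(k)$-roles and is driven into a symmetric, structurally wrong pairing. Tightness would then be verified by directly computing $wait(M^*)$, $wait(\MA(2,\mu))$ and $wait(\TA(2))$ on this small graph, analogous to the check performed behind Figures~\ref{fig401} and~\ref{fig404}.
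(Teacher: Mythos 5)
Your upper-bound argument is correct and is essentially the paper's own proof: the same splitting $2\,wait(\CA(2,\mu))\le wait(\MA(2,\mu))+wait(\TA(2))$, the same reduction to $v_1(M_1)+v_2(M_2)+v_3(M_3)$ via Lemma~\ref{lemma:ma} and Lemma~\ref{lemma:ta} (your ``detour through $d_k$'' observation, that the route $d_k\to s_i\to t_i\to s_j\to t_j$ is never worse than the TA route that returns to $d_k$, is exactly the justification needed for $wait(\TA(2))\le v_3(M_3)$), the same per-triple bounds $w(d_k,s_i)+w(d_k,s_j)+w(s_i,s_j)$ and $3w(d_k,s_i)+w(d_k,s_j)$ from Lemmas~\ref{lemma:ma_min} and~\ref{lemma:ta_min} specialized to $s=t$, and the same final application of $w(d_k,s_j)\le w(d_k,s_i)+w(s_i,s_j)$ to reach $3\,wait(k,\{i,j\})$ per triple. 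No issues there.

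The gap is in the second half of the statement, which explicitly asserts the existence of an instance with $wait(\CA(I))=\tfrac{3}{2}\,wait(M^*(I))$. You describe qualitative properties such an instance should have (tight triangle inequalities, $M_1$ pairing requests across different optimal cars, TA forced into a symmetric wrong pairing), but you never exhibit one or verify the numbers, so this part of the theorem remains unproven in your write-up. The paper closes it with the concrete instance of Figure~\ref{fig405}: four cars and eight coincident-pickup requests arranged in two triangles (side lengths $2$ and $1$ respectively, all unshown distances equal to $5$), for which $wait(M^*)=8$ while both $\MA(2,\mu)$ and $\TA(2)$ are driven to cost $12$ --- MA because its first-stage matching pairs the co-located requests $\{1,2\}$, $\{3,4\}$, $\{5,6\}$, $\{7,8\}$ rather than the optimal cross pairs, and TA because its bipartite costs cannot distinguish the good assignment from a tied bad one. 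Your intuition about what the instance must look like is right, but to prove the ``moreover'' clause you need to actually write down such a configuration and compute all three quantities.
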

\begin{proof}
We assume wlog that, for each $(k,\{ i,j\})\in M^*$, $wait(k,\{i,j\})=2w(d_k, s_i)+\mu_{ij} $. We have:
\begin{equation}
   \begin{split}
 \label{equation405}
2 wait&(\text{CA($2, \mu$)}) \le  wait(\text{MA($2, \mu$)}) + wait(\text{TA($2$)}) \\
&\le \sum_{(k,\{i,j\})\in \text{MA}} \min\{2w(d_k,  s_i) + \mu_{ij}, 2w(s_k, s_j)+\mu_{ji}\}\\
&\ \ \ +\sum_{(k,\{i,j\})\in \text{TA}} \min\{2w(d_k,  s_i) + \mu_{ij}, 2w(s_k, s_j)+\mu_{ji}\}
 \ \  \text{(by~(\ref{equation102_u_1}) and (\ref{eq:totallatency}))}  \\
&\le  v_1(M_1)+ v_2(M_2)+ v_3(M_3) \  \ \text{(by~\cref{lemma:ma} and~\cref{lemma:ta})}  \\
&\leq  \sum_{(k,\{ i,j\})\in M^*} (w(s_i,s_j)+ w(d_k, s_i)+ w(d_k, s_j) + 3 w(d_k, s_i)+ w(d_k, s_j))  \\
&\leq  \sum_{(k,\{ i,j\})\in M^*} (3w(s_i,s_j)+6 w(d_k, s_i))  \ \ \text{(by the triangle inequality)} \\
&=  \sum_{(k,\{ i,j\})\in M^*} 3 wait(k,\{i,j\}) \ \    \text{(by $wait(k,\{i,j\})=2w(d_k, s_i)+\mu_{ij} $)} \\
&=   3 \ wait(M^*).
  \end{split}
\end{equation}
The fourth inequality holds by~\cref{lemma:ma_min} and~\ref{lemma:ta_min}.

We now provide an instance for which this ratio is achieved.
Consider the instance $I$ depicted in Figure~(\ref{fig405}). If two points are not connected by an edge, their distance equals $ 5$. 
Observe that an optimal solution is $ \{(k_1, \{1, 3\}), (k_2, \{2, 4\}),(k_3, \{5, 6\}), (k_4, \{7, 8\})\}$ with wait(M$^*$(I))=8. Note that $ M^*_R=\{\{1,3\}, \{2,4\}, \{5,6\}, \{7,8\}\}$.
 Let us now analyse the performance of MA($2, \mu$) and TA(2) on instance $I$.

\begin{figure}
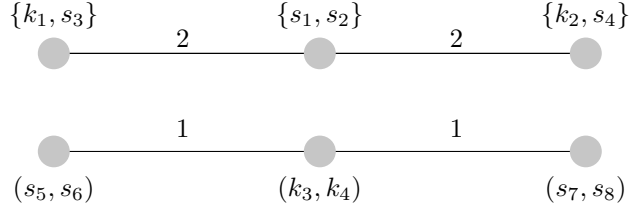

\centering
\tikz \draw     (4,0.9) node[circle]{$\{s_1, s_2\}$} (0.5,0.9) node[circle]{$\{k_1,s_3\}$} (7.5,0.9) node[circle]{$\{k_2,s_4\}$} (2.2,0.6) node[circle]{2} (5.8,0.6) node[circle]{2} (0.5,0.4) node[circle, fill=gray!45]{\ \ } -- (4,0.4) node[circle, fill=gray!45]{\ \ }--(7.5, 0.4) node[circle, fill=gray!45]{\ \ }-- cycle   (0.5,-1.4) node[circle]{$(s_5, s_6)$} (4,-1.4) node[circle]{$(k_3,k_4)$} (7.5,-1.4) node[circle]{$(s_7, s_8)$} (2.2,-0.6) node[circle]{1} (5.8,-0.6) node[circle]{1} (0.5,-0.9) node[circle, fill=gray!45]{\ \ } -- (4,-0.9) node[circle, fill=gray!45]{\ \ }--(7.5, -0.9) node[circle, fill=gray!45]{\ \ };
\caption{A worst-case instance for the combined algorithm CA($ 2, \mu$) of $CS_{lat, s=t}$.}
\label{fig405}
\end{figure}

Based on the $u_{ij}$ values as defined in (\ref{equation101_u_1}), MA($2, \mu$) can find, in the first step, matching $M_1=\{\{1,2\}, \{3,4\}, \{5,6\}, \{7,8\}\}$. Then, no matter how the second step matches the pairs to cars, the total cost of MA($2, \mu$) will be $12$.

TA($2$) can assign request $3$ to car $\gamma(1)$ and request $1$ to car $\delta(1)$, and similarly, request $4$ to car $\gamma(2)$ and request $2$ to car $\delta(2)$; request $5$ to car $\gamma(3)$ and request $7$ to car $\delta(3)$; request $4$ to car $\gamma(4)$ and request $6$ to car $\delta(4)$. Note that $v_3(\{(k_3,5), (k_3,7), (k_4,6),(k_4,8)\})=v_1(\{(k_3,5), (k_3,6), (k_4,7),(k_4,8)\})=8$. Thus the total cost of TA($2$) is  $12$.

To summarize, the instance in~\cref{fig405} is a worst-case instance for the combined algorithm CA$(2, \mu)$. 
\end{proof}

\section{Conclusions}
\label{sec:conclusions}

We have analyzed two algorithms for four different versions of a car sharing problem. One algorithm, called match and assign, first matches the requests into pairs, and then assigns the pairs to the cars. Another algorithm, called transportation assigns two requests to each car. These two algorithms emphasize different ingredients of the total cost and the total latency. Accordingly, we have proved that (for most problem variants) the worst-case ratio of the algorithm defined by the best of the two corresponding solutions is strictly better than the worst-case ratios of the individual algorithms.

We point out that the algorithms can be generalized to handle a variety of situations. We now list these situations, and shortly comment on the corresponding worst-case behavior.
 
\begin{description}
\item[Generalized car-sharing problem:$|R|=a \cdot n$.] In this situation, $a \cdot n$ requests are given, and each car can serve $a$ requests, $a \geq 2$.
We can extend algorithm TA for the resulting problem by replacing a single car by $a$ cars, and use an appropriately defined cost between a request in $R$ and a car. We obtain that the extended TA algorithm is a $(2a-1)$-approximation for $CS_{sum}$ and an $a$-approximation for $CS_{lat}$, generalizing our results for $a=2$. We refer to~\cref{subsec:GTAalgorithms} for the proofs.

\item[Related car-sharing problem: different speed.]
In this situation, we allow that cars have different speeds.  
Indeed, let car $k$ have speed $ p_k$ for each $k\in D$. 
We denote the travel time of serving requests in $R_k$ by $cost(k, R_k)/ p_k$ and the travel time of an allocation $M$ by $\sum_{(k, R_k)\in M} cost(k, R_k)/p_k$.
Analogously, we can adapt the latency of serving requests in $ R_k$ and the latency of an allocation $M$. 
Although it is unclear how to generalize algorithm MA, we can still use algorithm TA($\alpha$) in this situation. By defining $v_3(k,i) $ in terms of the cost above, we get the worst-case ratios of TA$(\alpha)$ as shown in~\cref{tab1}.

\item[Car redundancy or deficiency: $2|D| > |R|$ or $ 2|D| < |R|$.]
We shortly sketch how to modify
 TA($ \alpha$) for this situation.
For the problem with $2|D| > |R|$, by adding a number of dummy requests $R_d$ with $ |R_d|=2n-|R|$, where the distance between any two requests in $R_d$ is $0$, the distance between a request in $R_d$ and $R$  is a large constant, and the distance between a request in $ R_d$ and a car in $ D$ is $0$, an instance of our problem arises.
For the problem with $2|D|< |R|$, by adding a number of dummy cars $ D_d$ with $ |D_d|=\lceil |R|/2 \rceil-n$, where the distance between a car in $ D_d$ and a request in $ R$  is $0$, an instance of our problem arises.
We claim that minor modifications of the proofs for  TA imply that TA(1) is a 3-approximation algorithm for CS$_{sum}$, and TA(2) is a 2-approximation algorithm for CS$_{lat}$,
 On the other hand, it is not clear how to generalize  MA($\alpha, v$) for this situation.

\end{description}

\bibliographystyle{plainurl}
\bibliography{main}

\newpage
\appendix
\section{Omitted Proofs}\label{app:omitted-proofs}
This section contains proofs omitted in the main body. 

\subsection{Proof of the claim $u_{ji}\le 2u_{ij}$ in~\cref{lem:alg_ana_401}}\label{A0_1}
If $u_{ji}\le u_{ij}$, the claim is proved;
otherwise, we prove the claim by distinguishing three cases based on $ u_{ij}$.
  If  $ u_{ij}=w(s_i,s_j,t_i, t_j)$, then, according to~(\ref{equation101_u}), $ u_{ji}\le w(s_j,s_i)+w(s_i,t_i)+w(t_i, t_j)$, and we have $ u_{ji}\le 2u_{ij}$;
 if $ u_{ij}=w(s_i,s_j,t_j, t_i)$, then, according to~(\ref{equation101_u}),  $ u_{ji}\le w(s_j,s_i)+w(s_i,t_j)+w(t_j, t_i)$, and we have $ u_{ji}\le 2u_{ij}$; otherwise, $ u_{ij}=w(s_i,t_i,s_j, t_j)$, and according to~(\ref{equation101_u}), $ u_{ji}\le w(s_j,t_j)+w(s_j,s_i)+w(s_i, t_i)$, so we have $ u_{ji}\le  2u_{ij}$.


\subsection{Proof of the claim $w(d_k,s_i, t_i, d_k,s_j, t_j) \le 3 cost(k, \{i,j\}) $ in~\cref{lem:alg_ana_402}}\label{A1}

According to~~(\ref{equation101_u_1}), $3 cost(k, \{i,j\})= 3w(d_k,s_i) +3u_{ij}$. The claim $3 cost(k, \{i,j\})\ge w(d_k,s_i, t_i,d_k, \\ s_j, t_j)$ is proved by distinguishing three cases based on $ u_{ij}$:

If $ u_{ij}=w(s_i,s_j,t_i, t_j)$, the inequality $3 cost(k, \{i,j\})\ge w(d_k,s_i)+ (w(s_i, s_j)+w(s_j, t_i)) +(w(k, s_i) +w(s_i, s_j)+w(s_j, t_i))+ (w(d_k,s_i)+w(s_i,s_j))+(w(s_j, t_i)+w(t_i, t_j)) \ge w(d_k,s_i, t_i, d_k, \\ s_j, t_j)$ follows by applying the triangle inequality.

If $ u_{ij}=w(s_i,s_j,t_j, t_i)$, the inequality $3 cost(k, \{i,j\})\ge w(d_k,s_i)+ (w(s_i, s_j)+(s_j, t_j)+w(t_j, t_i)) +(w(d_k, s_i) +w(s_i, s_j)+w(s_j, t_j)+w(t_j, t_i))+ (w(d_k,s_i)+w(s_i,s_j))+w(s_j, t_j) \ge w(d_k,s_i, t_i, d_k,s_j, t_j)$ follows by applying the triangle inequality.

If $ u_{ij}=w(s_i,t_i,s_j, t_j)$, the inequality $3 cost(k, \{i,j\})\ge w(d_k,s_i)+ w(s_i, t_i) +(w(d_k, s_i) +w(s_i, t_i))+ (w(d_k,s_i)+w(s_i,s_j))+w(s_j, t_j) \ge w(d_k,s_i, t_i, d_k,  s_j, t_j)$ follows by applying the triangle inequality.

All three cases entail the inequality $w(d_k,s_i, t_i, d_k,s_j, t_j) \le 3 cost(k, \{i,j\}) $.

\subsection{Proof of~\cref{lemma:ca2}}\label{A2}
We first prove $2w(d_k, s_i)+2w(d_k, s_j)+\mu_{ij}+\mu_{ji}+\min \{2 w(d_k, s_i, t_i)+w(t_i, d_k, s_j, t_j),  2 w(d_k, s_j, t_j)+w(t_j, d_k, s_i, t_i)\}\le 8w(d_k,s_i)+5\mu_{ij}$.
We distinguish three cases based on $ \mu_{ij}$:

\textit{Case 1} $ \mu_{ij}=w(s_i, t_i)+w(s_i, t_i, s_j,t_j)$. We have $ 2w(d_k, s_i)+\mu_{ij}= 2w(d_k, s_i)+2 w(s_i, t_i)+w(t_i, s_j)+w(s_j,t_j)$. Since $ w(d_k, t_i)\le w(d_k, s_i)+w(s_i, t_i)$, $w(d_k, s_j)\le(d_k, s_i)+w(s_i, t_i)+w(t_i, s_j) $, $u_{ij}=2 w(s_i, t_i)+w(t_i, s_j)+w(s_j,t_j) $ and  $ u_{ji}\le 2 w(s_j, t_j)+w(t_j, s_i)+w(s_i,t_i) \le 3 w(s_j, t_j)+2w(s_i, t_i)+w(t_i,s_j)  $, we have $2w(d_k, s_i)+2w(s_k, s_j)+\mu_{ij}+\mu_{ji}+2 w(d_k, s_i, t_i)+w(t_i, d_k, s_j, t_j)\le 8w(d_k, s_i)+10w(s_i, t_i)+5w(t_i, s_j)+ 5w(s_j, t_j) \le 8 w(d_k, s_i)+5\mu_{ij}$.

\textit{Case 2} $ \mu_{ij}=w(s_i, s_j,t_i)+ w(s_i, s_j,t_i, t_j)$.  We have  $ 2w(d_k, s_i)+\mu_{ij}= 2w(d_k, s_i)+2 w(s_i, s_j)+2w(s_j, t_i)+w(t_i, t_j)$. Since $w(s_i,t_i)\le w(s_i, s_j)+w(s_j, t_i)$, $ w(d_k, t_i)\le w(d_k, s_i)+w(s_i, s_j)+w(s_j, t_i)$,  $w(d_k, s_j)\le (d_k, s_i)+w(s_i, s_j) $, $w(s_j,t_j)\le w(s_j, t_i)+w(t_i, t_j)$, $u_{ij}=2 w(s_i, s_j)+2w(s_j, t_i)+w(t_i,t_j) $ and  $ u_{ji}\le 2 w(s_j, t_j)+w(t_j, s_i)+w(s_i,t_i) \le 4w(s_j, t_i)+2w(s_i,s_j)+3w(t_i, t_j) $, we have $2w(d_k, s_i)+2w(s_k, s_j)+\mu_{ij}+\mu_{ji}+2 w(d_k, s_i, t_i)+w(t_i, d_k, s_j, t_j)\le 8w(d_k, s_i)+10w(s_i, s_j)+10w(s_j, t_i)+ 5w(t_i, t_j) \le 8w(d_k, s_i)+5\mu_{ij}$.

\textit{Case 3} $ \mu_{ij}=w(s_i, s_j,t_j)+ w(s_i, s_j,t_j, t_i)$.  We have $ 2w(d_k, s_i)+\mu_{ij}= 2w(d_k, s_i)+2 w(s_i, s_j)+2w(s_j, t_j)+w(t_i, t_j)$. Since $w(d_k, s_j)\le w(d_k, s_i)+w(s_i, s_j)$, $w(d_k, t_j)\le w(d_k, s_i)+w(s_i, s_j)+w(s_j, t_j)$, and $w(s_i, t_i)\le (s_i, s_j)+w(s_j, t_j)+w(t_j, t_i) $, $u_{ij}=2 w(s_i, s_j)+2w(s_j, t_j)+w(t_i,t_j) $ and  $ u_{ji}\le 2 w(s_j, t_j)+w(t_j, s_i)+w(s_i,t_i)  \le 4w(s_j, t_i)+2w(s_i,s_j)+w(t_i, t_j) $, we have $2w(d_k, s_i)+2w(s_k, s_j)+\mu_{ij}+\mu_{ji}+2 w(d_k, s_j, t_j)+w(t_j, d_k, s_i, t_i)\le 8w(d_k, s_i)+10w(s_i, s_j)+10w(s_j, t_j)+ 5w(t_j, t_i)\le 8w(d_k, s_i)+5\mu_{ij}$.

Analogously, we have $2w(d_k, s_i)+2w(s_k, s_j)+\mu_{ij}+\mu_{ji}+\min \{2 w(d_k, s_i, t_i)+w(t_i, d_k, s_j, t_j), \\2 w(d_k, s_j, t_j)+w(t_j, d_k, s_i, t_i)\}\le 8w(d_k,s_j)+5\mu_{ji}$.

\section{Transportation algorithm for the generalized car-sharing problem} \label{subsec:GTAalgorithms}
Here, we present the transportation algorithm for the generalized car-sharing problem, in which each car serves $a$ ($ a>2$) requests.
In the transportation algorithm, we replace each car $ k\in D$ by  $a$ virtual cars $\gamma_1(k), \ldots, \gamma_a(k)$, resulting in car sets $\Gamma_1 =\{\gamma_1(1), ..., \gamma_1(n)\} $, ..., $\Gamma_a =\{\gamma_a(1), ..., \gamma_a(n)\} $. Next we assign requests to the $a\cdot n$ cars using a particular definition of the costs. 

By extending the definition of the travel time in~(\ref{equation101}), let $cost(k, R_k)$ denote the travel time needed to serve requests in $R_k$ and $cost(M)=\sum_{k=1}^{n} cost(k, R_k)$ denote the travel time of allocation $ M=\{(k, R_k): k\in D, R_k\in R\} $, in which
$cost(k, R_k)=cost(k, \{i_1, i_2, ..., i_a\})$ is the minimum travel time of visiting all locations $\{d_k, s_{i_1}, t_{i_1}, ...,  s_{i_a}, t_{i_a}\}$ where $s_{i_h}$ is visited before $ t_{i_h}$ for all $ i_h\in R_k$.
Similarly, by extending the definition of waiting time in~(\ref{equation102}), let $wait(k, R_k)$ denote the latency of serving requests in $R_k$ and $wait(M)=\sum_{k=1}^{n} wait(k, R_k)$ denote the latency of allocation $ M=\{(k, R_k): k\in D, R_k\in R\} $ where $ |R_k|=a$.

\subsection{TA$(a)_{sum}$ for $\CS_{sum}$}

 \begin{algorithm}	
	\caption{Transportation algorithm for $\CS_{sum}$ (TA$(a)_{sum}$)}
	\label{alg:GTA}
		\begin{algorithmic}[1]
	\STATE \emph{\normalsize Input}: non-negative weighted graph $G=(V,E,w)$, requests $R=\{i=(s_i,t_i): 1\le i\le m, s_i, t_i\in V\}$, cars $ D=\{k: 1\le k \le n, d_k\in V\}$, virtual car sets $\Gamma_1$, ..., $ \Gamma_a$.\\
\STATE \emph{\normalsize Output}: An allocation \emph{TA}$=\{(k,R_k):k\in D, R_k\in R\} $.\\
\STATE For $k=1,2,..., n-1$ do\\
\STATE \ \ \ For $1\le  j\le a, i \in R$ do\\
\STATE \ \ \ \ \ \ \ \  $v_3(\gamma_j(k),i)=w(d_k, s_i, t_i)+ w(t_i,d_k)$\\
\STATE \ \ \ end for\\
\STATE end for\\
 \STATE For $1\le  j\le a, i \in R$ do\\
\STATE \ \ \ \ \ \ \ \  $v_3(\gamma_j(n),i)=w(d_n, s_i, t_i)$\\
\STATE end for\\
\STATE Let $G_3\equiv (\bigcup_{j=1}^{a} \Gamma_j \cup R, v_3)$ be the complete bipartite graph with \emph{left} vertex-set $\bigcup_{j=1}^{a}\Gamma_j$, \emph{right} vertex-set $R$, and edges $v_3(x,i)$ for $ x\in \bigcup_{j=1}^{a}\Gamma_j$ and $i\in R $.\\
\STATE Find a minimum weight perfect  matching $M_3$ in  $G_3\equiv (\bigcup_{j=1}^{a}\Gamma_j \cup R, v_3)$ with weight $ v_3(M_3)$.\\
\STATE  Output allocation \emph{TA}$=M_4\equiv \{(k,\{i_1, ..., i_a\}):  (\gamma_1(k),i_1),  ..., (\gamma_a(k), i_a) \in M_3 \}$.
	\end{algorithmic}		
\end{algorithm}

We refer to the solution found by the extended transportation algorithm for the general car-sharing problem with minimum travel time as TA$(a)_{sum}$. For each $(k,\{i_1, ..., i_a\})\in $TA$(a)_{sum}$,
we order the requests $\{i_1,  ..., i_a\}$ such that $w(d_k, t_{i_j})\ge w(d_k, t_{i_{j-1}})$ for all $ 1< j\le a$.
The crucial points of algorithm TA$(a)_{sum}$ are found in lines 3 to 10 in~\cref{alg:GTA} where a travel time is defined; a resulting quantity is $v_3(M_3)$. For every request set $\{i_1,i_2, ..., i_a\}$ that are assigned to a car $k$, $v_3(M_3)$ chooses the path $d_k\rightarrow s_{i_1} \rightarrow t_{i_1} \rightarrow d_k \rightarrow s_{i_2} \rightarrow t_{i_2}\rightarrow .. \rightarrow d_k \rightarrow s_{i_a} \rightarrow t_{i_a}$.

Using $cost(M)=\sum_{k=1}^{n} cost(k, R_k)$ and $ cost (k,\{i_1, i_2, ..., i_a\})\le   \sum_{j=1}^{a}  w(d_k, s_{i_j}, t_{i_j}, d_k) - \max_{1 \le i_h \le a} w(d_k, t_{i_h})$, we have:
\begin{equation}
   \label{claim:gta1}
\text{cost(TA$(a)_{sum}$)$\le  v_3(M_3)$.}
\end{equation}

We now prove two lemma's concerning this quantity $v_3(M_3)$, which will be of use in~\cref{lem:alg_geta_sum}.
\begin{lemma}
\label{lemma:gta_sum}
For any $a\ge 2$, we have:
\[ v_3(M_3)= \sum_{(k,\{i_1, i_2, ..., i_a\})\in M_4} ( \sum_{j=1}^{a}  w(d_k, s_{i_j}, t_{i_j}, d_k) - \max_{1 \le i_h \le a} w(d_k, t_{i_j})).\]
\end{lemma}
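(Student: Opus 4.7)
The plan is to decompose $v_3(M_3)$ car-by-car and then use a local exchange argument to identify, for each real car, which of its served requests is paired with the ``one-way'' virtual car.

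Since $M_3$ is a perfect matching between the $an$ virtual cars in $\bigcup_{j=1}^{a}\Gamma_j$ and the $an$ requests in $R$, for every $k\in D$ the $a$ virtual cars $\gamma_1(k),\ldots,\gamma_a(k)$ are matched to a unique set of $a$ requests, and by the definition of $M_4$ in line~13 this set is exactly $R_k=\{i_{k,1},\ldots,i_{k,a}\}$. Hence
\[ v_3(M_3) \;=\; \sum_{k\in D}\sum_{j=1}^{a} v_3\bigl(\gamma_{\pi_k(j)}(k),\,i_{k,j}\bigr),\]
where $\pi_k$ is the bijection of $\{1,\ldots,a\}$ encoding, within car $k$, which virtual car is paired with which request.

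Next, I would substitute the cost formulas from the algorithm. A ``round-trip'' virtual car of $k$ contributes $w(d_k,s_i,t_i)+w(t_i,d_k)=w(d_k,s_i,t_i,d_k)$, while the distinguished ``one-way'' virtual car of $k$ contributes $w(d_k,s_i,t_i)$, i.e.\ the same round-trip expression minus the return leg $w(t_i,d_k)$. Writing $i^{*}_k$ for the request that $M_3$ matches to the one-way virtual car of $k$, the per-car contribution becomes
\[ \sum_{j=1}^{a} w(d_k,s_{i_{k,j}},t_{i_{k,j}},d_k)\;-\;w(t_{i^{*}_k},d_k).\]

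The crux is then the identification $w(t_{i^{*}_k},d_k)=\max_{1\le h\le a} w(d_k,t_{i_{k,h}})$. Because $v_3(\gamma_j(k),\cdot)$ depends only on $k$ and on whether $\gamma_j(k)$ is of round-trip or one-way type, any permutation of the $a$ matches among $\gamma_1(k),\ldots,\gamma_a(k)$ yields another perfect matching in $G_3$, whose weight differs from $v_3(M_3)$ only through the return-leg saving. If $i^{*}_k$ were not a maximizer of $w(d_k,t_{i_{k,h}})$, then swapping $i^{*}_k$ with a maximizer would strictly decrease the total matching weight, contradicting the minimality of $M_3$. Summing $\sum_{j}w(d_k,s_{i_{k,j}},t_{i_{k,j}},d_k)-\max_{h}w(d_k,t_{i_{k,h}})$ over all $k\in D$ establishes the identity. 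The only obstacle worth flagging is making the exchange argument airtight; this is immediate because the swap is confined to the $a$ virtual cars of a single real car, so the rest of $M_3$ is undisturbed and the swapped set remains a perfect matching in $G_3$.
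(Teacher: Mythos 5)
Your proposal is correct and follows essentially the same route as the paper: decompose $v_3(M_3)$ car by car, observe that the per-car weight equals the sum of round trips minus the return leg saved at the one-way virtual copy, and use a local exchange confined to the $a$ virtual copies of a single car to conclude that the minimum matching must hand the one-way copy the request maximizing $w(d_k,t_{i_h})$. The paper phrases this exchange as a necessary-and-sufficient condition on the serving order (request $i_a$ with largest $w(d_k,t_{i_a})$ is served last), but that is the same swap argument you give; note only that both your proof and the paper's implicitly read Algorithm~\ref{alg:GTA} in its intended form, where each car has $a-1$ round-trip copies and one one-way copy, rather than as the loop bounds are literally typeset.
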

\begin{proof}
We claim that $ v_3(M_3)$ is minimized if and only if for each car $ k\in D$, car $k $ serves request $i_a\in R_k $ after serving all serves requests in $R_k\setminus  \{i_a\}=\{i_1, i_2, ..., i_{a-1}\}$. If this claim holds, then based on line 3 to line 10 in~\cref{alg:GTA}, $ v_3(k, \{i_1, i_2,..., i_a\})=  \sum_{j \neq h}  w(d_k, s_{i_j}, t_{i_j}, d_k) - \max_{1 \le i_h \le a} w(d_k, t_{i_j})$ and thus 
\[v_3(M_3)= \sum_{(k,\{i_1, i_2, ..., i_a\})\in M_4} (  \sum_{j=1}^{a}   w(d_k, s_{i_j}, t_{i_j}, d_k) - \max_{1 \le i_h \le a} w(d_k, t_{i_h})).\] 

It remains to prove the claim. Consider any car $ k\in D$ and $ i_x \in R_k \setminus  \{i_a\}$. We will prove that $ v_3(M_3)$ is minimized when serving request  $i_x $ earlier than request $ i_a$  if and only if $  w(d_k, t_{i_a})> w(d_k, t_{i_x})$.

\emph{Necessary condition} If request  $i_x \in R_k \setminus  \{i_a\}$ is served earlier than request $ i_a$, then, when minimizing $ v_3(k, R_k)$, it is true that $  w(d_k, t_{i_a})> w(d_k, t_{i_x})$ holds.  If $ w(d_k, s_{i_x}, t_{i_x}) + w(d_k, t_{i_x}) +  w(d_k, s_{i_a}, t_{i_a}) \le  w(d_k, s_{i_a}, t_{i_a}) +  w(d_k, t_{i_a}) +  w(d_k, s_{i_x}, t_{i_x})$, then $  w(d_k, t_{i_x})\le  w(d_k, t_{i_a})$.

\emph{Sufficient condition} If  $  w(d_k, t_{i_a})> w(d_k, t_{i_x})$, then request  $i_x $ must be served earlier than request $ i_a$ for minimizing $ v_3(k, \{i_1, i_2,..., i_a\})$. If $  w(d_k, t_{i_a})\ge  w(d_k, t_{i_x})$, then $ w(d_k, s_{i_a}, t_{i_a}) +  w(d_k, t_{i_a}) +  w(d_k, s_{i_x}, t_{i_x}) \ge  w(d_k, s_{i_x}, t_{i_x}) + w(d_k, t_{i_x}) +  w(d_k, s_{i_a}, t_{i_a})$,  that means  $ v_3(M_3)$ is smaller when serving request  $i_x $ earlier than request $ i_a$.
\end{proof}

\begin{lemma}
\label{lemma:gta_sum}
For $a\ge 2$ and for each allocation $ M$, we have:
\[ v_3(M_3)\le \sum_{(k,\{i_1,i_2, ..., i_a\})\in M} ( \sum_{j=1}^{a}  w(d_k, s_{i_j}, t_{i_j}, d_k) - \max_{1 \le i_h \le a} w(d_k, t_{i_h})).\]
\end{lemma}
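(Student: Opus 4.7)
The plan is straightforward and mirrors~\cref{lemma:ta_min}: since $M_3$ is a minimum-weight perfect matching in $G_3$, we have $v_3(M_3)\le v_3(M_3')$ for every other perfect matching $M_3'$ in~$G_3$. Thus the proof reduces to exhibiting one perfect matching in $G_3$ whose total weight equals the right-hand side of the claimed inequality, and then invoking this minimality.

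Given an arbitrary allocation $M=\{(k,R_k):k\in D\}$ with $R_k=\{i_1,\ldots,i_a\}$, I build $M_3'$ by routing each request $i\in R_k$ to one of the virtual cars $\gamma_1(k),\ldots,\gamma_a(k)$ attached to car~$k$. Following the cost-minimizing convention already justified in the preceding lemma, within car~$k$ the request $i_h\in R_k$ maximizing $w(d_k,t_i)$ is sent to the virtual-car slot whose cost formula omits the return edge $w(t_i,d_k)$, and the remaining $a-1$ requests are assigned arbitrarily to the other virtual cars of~$k$. A direct summation then shows that car~$k$ contributes exactly
\[
\sum_{j=1}^{a} w(d_k,s_{i_j},t_{i_j},d_k)\;-\;\max_{1\le h\le a} w(d_k,t_{i_h})
\]
to $v_3(M_3')$, because the $a-1$ ``return'' slots pay the full out-and-back costs while the distinguished slot saves precisely the maximum of the return legs.

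Aggregating over all $k\in D$ produces a bipartite perfect matching $M_3'$ in $G_3$ whose total weight coincides with the right-hand side of the stated inequality, and the minimality of $M_3$ delivers the claim. I anticipate no real obstacle: the only bookkeeping is verifying that $M_3'$ is indeed a perfect matching (the sets $R_k$ partition $R$ by the definition of an allocation, and the virtual cars $\{\gamma_j(k)\}_{j,k}$ are distinct by construction) and that the per-car contribution telescopes to the bracketed expression exactly as in the proof of the preceding equality statement.
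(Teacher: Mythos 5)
Your proposal is correct and is exactly the argument the paper leaves implicit: the paper's entire proof is the remark that the bound is ``obvious since $M_3$ is a minimum weight perfect matching,'' and your explicit construction of the competitor matching $M_3'$ (routing the request of $R_k$ with maximal $w(d_k,t_i)$ to the slot without the return edge) is the standard way to make that one-liner rigorous. No substantive difference in approach.
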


This lemma is obvious since $M_3$ is a minimum weight perfect matching. 

\begin{lemma}
\label{lem:alg_geta_sum}
$\TA_{sum}(a)$ is a $(2a-1)$-approximation algorithm for $CS_{sum}$. 
\end{lemma}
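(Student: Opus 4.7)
The plan is to chain two previously established upper bounds with a triangle-inequality estimate carried out along each car's optimal tour. First, inequality~(\ref{claim:gta1}) gives $cost(\TA(a)_{sum}) \le v_3(M_3)$, and applying Lemma~\ref{lemma:gta_sum} (in the form that is valid for an arbitrary allocation) to $M^*$ yields
\[
v_3(M_3) \le \sum_{(k,R^*_k)\in M^*}\Bigl(\sum_{j=1}^{a} w(d_k, s_{i_j}, t_{i_j}, d_k) - \max_{1\le h\le a} w(d_k, t_{i_h})\Bigr).
\]
So it suffices to prove, car by car, the inequality
\[
(\star)\qquad \sum_{j=1}^{a} w(d_k, s_{i_j}, t_{i_j}, d_k) - \max_{1\le h\le a} w(d_k, t_{i_h}) \le (2a-1)\,cost(k, R^*_k),
\]
and then sum over $k\in D$.

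To prove $(\star)$, fix a car $k$ and an optimal serving order for $R^*_k=\{i_1,\ldots,i_a\}$, whose visited locations form a path $d_k = y_0, y_1, \ldots, y_{2a}$ with every $s_{i_j}$ preceding its $t_{i_j}$. Set $C := cost(k, R^*_k) = \sum_{\ell=1}^{2a} w(y_{\ell-1}, y_\ell)$, let $p_j$ and $q_j$ be the positions of $s_{i_j}$ and $t_{i_j}$ along this path, and put $P_j := \sum_{\ell=1}^{q_j} w(y_{\ell-1}, y_\ell)$, the length of the prefix ending at $t_{i_j}$. Along this prefix the triangle inequality gives both $w(d_k, s_{i_j}) + w(s_{i_j}, t_{i_j}) \le P_j$ and $w(d_k, t_{i_j}) \le P_j$, so
\[
T_j := w(d_k, s_{i_j}) + w(s_{i_j}, t_{i_j}) + w(t_{i_j}, d_k) \le 2 P_j.
\]
Since $|\{j : q_j \ge \ell\}| \le a$ for every edge index $\ell$, summing yields $\sum_{j=1}^{a} P_j \le aC$. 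The naive bound $\sum_j T_j \le 2aC$ is one factor of $C$ too weak, and the ``$-\max$'' term must recover that missing $C$.

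The clean way is to single out the request $j^{**}$ whose dropoff occupies the \emph{last} position $y_{2a}$ of the optimal tour; such a request exists because a feasible serving must end at a dropoff. Then $P_{j^{**}} = C$ and, using $w(t_{i_{j^{**}}}, d_k) = w(d_k, t_{i_{j^{**}}})$,
\[
T_{j^{**}} - w(d_k, t_{i_{j^{**}}}) = w(d_k, s_{i_{j^{**}}}) + w(s_{i_{j^{**}}}, t_{i_{j^{**}}}) \le P_{j^{**}} = C.
\]
Because $\max_{1\le h\le a} w(d_k, t_{i_h}) \ge w(d_k, t_{i_{j^{**}}})$, the left-hand side of $(\star)$ is bounded by $\sum_j T_j - w(d_k, t_{i_{j^{**}}})$, and the estimate
\[
\bigl(T_{j^{**}} - w(d_k, t_{i_{j^{**}}})\bigr) + \sum_{j\ne j^{**}} T_j \le C + 2\sum_{j\ne j^{**}} P_j = C + 2\Bigl(\sum_{j=1}^{a} P_j - C\Bigr) \le (2a-1)C
\]
establishes $(\star)$. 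Summing $(\star)$ over $k\in D$ and combining with the opening two inequalities yields $cost(\TA(a)_{sum}) \le (2a-1)\,cost(M^*)$. The only delicate point is recognizing that the ``savings'' of one $C$ must be credited to the last-dropoff request $j^{**}$ rather than to the argmax of $w(d_k, t_{i_h})$; once that pairing is made, everything else is routine path-prefix bookkeeping with the triangle inequality.
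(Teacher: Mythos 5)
Your proof is correct, and its skeleton is the same as the paper's: chain inequality~(\ref{claim:gta1}) with the arbitrary-allocation form of Lemma~\ref{lemma:gta_sum} applied to $M^*$, then establish the per-car bound $(\star)$ and sum over $k\in D$. The only divergence is in how $(\star)$ is justified. The paper does it in one line: each $w(d_k,s_{i_j},t_{i_j})$ is a triangle-inequality shortcut of a prefix of car $k$'s optimal tour and hence at most $C:=cost(k,R^*_k)$, and after subtracting the maximum the $a-1$ surviving return legs are each at most $\max_h w(d_k,t_{i_h})\le C$, giving $aC+(a-1)C=(2a-1)C$ directly (the justification printed in the paper has the inequality reversed, but this is clearly a typo). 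Your prefix-sum bookkeeping reaches the same bound with more machinery than is needed, and your closing remark --- that the saved $C$ \emph{must} be credited to the last-dropoff request $j^{**}$ rather than to the argmax of $w(d_k,t_{i_h})$ --- is not accurate: since $P_j\le C$ for \emph{every} $j$, crediting the argmax $h^*$ works equally well, namely $T_{h^*}-w(d_k,t_{i_{h^*}})\le P_{h^*}\le C$ together with $\sum_{j\ne h^*}T_j\le 2(a-1)C$, which is essentially the paper's shorter route. This is a cosmetic point only; nothing in your argument breaks.
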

\begin{proof}
\begin{equation*}
  \begin{split}
   \label{equation_gta_sum}
&cost(\text{TA$(a)_{sum}$})\le  v_3(M_3)  \  \ \text{(by~(\ref{claim:gta1}))}\\
 &\leq \sum_{(k,\{i_1,i_2, ..., i_a\})\in M^*} ( \sum_{j=1}^{a}  w(d_k, s_{i_j}, t_{i_j}, d_k) - \max_{1 \le i_h \le a} w(d_k, t_{i_h})) \ \    \text{(by~\cref{lemma:gta_sum})} \\
  &= \sum_{(k,\{i_1,i_2, ..., i_a\})\in M^*} ( \sum_{j=1}^{a}  w(d_k, s_{i_j}, t_{i_j})+\sum_{j=1}^{a}  w(t_{i_j}, d_k) - \max_{1 \le i_h \le a} w(d_k, t_{i_h}))  \\
  &\leq \sum_{(k,R_k)\in M^*} (2a-1) cost(k, R_k) \ \    \text{(by $cost(k, R_k)\le \max_{i_h\in R_k} w(d_k, s_{i_h}, t_{i_h})$)} \\
   &= (2a-1) \ cost(M^*)
     \end{split}
\end{equation*}
\end{proof}

\subsection{TA$(a)_{lat}$ for $\CS_{lat}$}

 \begin{algorithm}	
	\caption{Transportation algorithm for $\CS_{lat}$ (TA$(a)_{lat}$)}
	\label{alg:GTA_lat}
		\begin{algorithmic}[1]
	\STATE \emph{\normalsize Input}: non-negative weighted graph $G=(V,E,w)$, requests $R=\{i=(s_i,t_i): 1\le i\le m, s_i, t_i\in V\}$, cars $ D=\{k: 1\le k \le n, d_k\in V\}$, dummy car sets $D_j=\{k: k\in D\}$ for all $ 1\le j\le a$.\\
\STATE \emph{\normalsize Output}: An allocation \emph{TA}$=\{(k,R_k):k\in D, R_k\in R\} $.\\
\STATE For $k=1,2,..., n$ do\\
\STATE For $1\le  j\le a, i \in R$ do\\
\STATE \ \ \ \ \ \ \ \  $v_3(\gamma_j(k),i)=(a-j+1)\cdot w(d_k, s_i, t_i)+(a-j)\cdot w(t_i,d_k)$\\
\STATE end for\\
\STATE end for\\
\STATE Let $G_3\equiv (\bigcup_{j=1}^{a} \Gamma_j \cup R, v_3)$ be the complete bi-partite graph with \emph{left} vertex-set $\bigcup_{j=1}^{a}\Gamma_j$, \emph{right} vertex-set $R$, and edges $v_3(x,i)$ for $ x\in \bigcup_{j=1}^{a}\Gamma_j$ and $i\in R $.\\
\STATE Find a minimum weight perfect  matching $M_3$ in  $G_3\equiv (\bigcup_{j=1}^{a}\Gamma_j \cup R, v_3)$ with weight $ v_3(M_3)$.\\
\STATE  Output allocation \emph{TA}$=M_4\equiv \{(k,\{i_1, ..., i_a\}): (\gamma_1(k),i_1), ..., (\gamma_a(k), i_a) \in M_3 \}$.
	\end{algorithmic}			
\end{algorithm}

We refer to the solution found by the extended transportation algorithm for the general car-sharing problem with minimum total latency as TA$(a)_{lat}$.
We count the waiting time for each customer depending to which dummy car the corresponding request was assigned.
For each $(k,\{i_1,  ..., i_a\})\in $TA$(a)_{lat}$, we order every request set $\{i_1, ..., i_a\}$ such that $w(d_k, s_{i_j}, t_{i_j})\ge w(d_k, s_{i_{j-1}}, t_{i_{j-1}})$ for all $ 1< j\le a$.
The crucial point of algorithm TA$(a)_{lat}$ is found in line 5, where a waiting time is defined; a resulting quantity is $v_3(M_3)$. For every request set $\{i_1,i_2, ..., i_a\}$ that is assigned to car $k$, $v_3(M_3)$ chooses the path $d_k\rightarrow s_{i_1} \rightarrow t_{i_1} \rightarrow d_k \rightarrow s_{i_2} \rightarrow t_{i_2}\rightarrow .. \rightarrow d_k \rightarrow s_{i_a} \rightarrow t_{i_a}$.

Using $wait(M)=\sum_{k=1}^{n} wait(k, R_k)$ and $ wait (k,\{i_1, i_2, ..., i_a\})\le  \sum_{j=1}^{a} ((a-j+1) \cdot w(d_k, s_{i_j}, t_{i_j}) + (a-j) \cdot  w(d_k, t_{i_j}))$, we have:
\begin{equation}
   \label{claim:gta2}
\text{wait(TA$(a)_{lat}$)$\le  v_3(M_3)$.}
\end{equation}

We now prove two lemma's concerning this quantity $v_3(M_3)$, which will be of use in~\cref{lem:alg_geta_lat}.
\begin{lemma}
\label{lemma:gta_lat}
For any $a\ge 2$, we have:
\[ v_3(M_3)= \sum_{(k,\{i_1, i_2, ..., i_a\})\in M_4}  \sum_{j=1}^{a} ( (a-j+1)  \cdot w(d_k, s_{i_j}, t_{i_j}) + (a-j) \cdot  w(d_k, t_{i_j})).\]
\end{lemma}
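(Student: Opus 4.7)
The statement is an identity that follows essentially by unpacking the definition of $v_3$ in line~5 of Algorithm~\ref{alg:GTA_lat} together with the rule for producing the output allocation $M_4$ from the matching $M_3$ in line~10. The plan mirrors the proof of the preceding lemma for $\TA(a)_{sum}$, but is simpler because nothing needs to be subtracted from the sum of edge weights: every coefficient in the formula already matches the per-edge weight defined by the algorithm.

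I would proceed in two short steps. First, I would translate the matching $M_3$ into the allocation $M_4$: by line~10, each triple $(k, \{i_1, \ldots, i_a\}) \in M_4$ accounts for exactly the $a$ edges $(\gamma_1(k), i_1), \ldots, (\gamma_a(k), i_a) \in M_3$, and every edge of $M_3$ arises from exactly one such triple. Summing edge weights therefore gives
\[
v_3(M_3) \;=\; \sum_{(k, \{i_1, \ldots, i_a\}) \in M_4}\; \sum_{j=1}^{a} v_3(\gamma_j(k), i_j).
\]
Second, I would substitute the definition $v_3(\gamma_j(k), i_j) = (a-j+1)\, w(d_k, s_{i_j}, t_{i_j}) + (a-j)\, w(t_{i_j}, d_k)$ from line~5 and use the symmetry of the distance function to rewrite $w(t_{i_j}, d_k)$ as $w(d_k, t_{i_j})$. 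This yields exactly the right-hand side of the lemma.

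The step I expect to be the main obstacle is justifying the labeling: the proof above identifies $i_j$ with the request matched to $\gamma_j(k)$, whereas the paragraph preceding the lemma also imposes the ordering convention $w(d_k, s_{i_j}, t_{i_j}) \ge w(d_k, s_{i_{j-1}}, t_{i_{j-1}})$ on the requests assigned to each car $k$. Compatibility between the two labelings has to be checked, and this is where I would, in analogy with the preceding lemma for $\TA(a)_{sum}$, invoke a rearrangement argument: since $M_3$ is of minimum weight, exchanging the requests currently matched to $\gamma_j(k)$ and $\gamma_{j'}(k)$ (for any $j < j'$) cannot decrease the total weight. A direct computation shows that this exchange changes the objective by exactly $(j' - j)\bigl\{[w(d_k, s_{i_{j'}}, t_{i_{j'}}) + w(d_k, t_{i_{j'}})] - [w(d_k, s_{i_j}, t_{i_j}) + w(d_k, t_{i_j})]\bigr\}$, so the minimizer sorts the requests at each car in a canonical way that is compatible with the stated convention (using the triangle-inequality bound $w(d_k, t_i) \le w(d_k, s_i, t_i)$ to reconcile the round-trip ordering forced by the swap argument with the convention's ordering by $w(d_k, s_i, t_i)$). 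Once the labeling is aligned, the two substitutions above complete the proof.
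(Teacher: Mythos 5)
Your proof follows essentially the same route as the paper's: the paper also reduces the identity to the claim that the minimum-weight matching $M_3$ must, within each car $k$, assign its $a$ requests to $\gamma_1(k),\ldots,\gamma_a(k)$ in a canonical sorted order, and it proves that claim by exactly the swap computation you describe (exchanging the requests at positions $j<j'$ changes the weight by $(j'-j)$ times the difference of the round-trip quantities $w(d_k,s_i,t_i)+w(d_k,t_i)$). The unpacking of $M_3$ into $M_4$ and the substitution of the weights from line 5 of the algorithm are identical to what the paper does.

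The one step that does not go through is your proposed reconciliation of the two labelings. The swap argument forces the minimizer to sort the requests at car $k$ by non-decreasing round-trip cost $w(d_k,s_i,t_i)+w(d_k,t_i)$, whereas the convention stated before the lemma sorts by non-decreasing $w(d_k,s_{i_j},t_{i_j})$ alone, and these orders can genuinely disagree: place all points on a line with $d_k=0$, $s_i=1$, $t_i=-4$, $s_j=3$, $t_j=-1$; then $w(d_k,s_i,t_i)=6<7=w(d_k,s_j,t_j)$ but the round trips are $10>8$. The inequality $w(d_k,t_i)\le w(d_k,s_i,t_i)$ compares two quantities attached to the \emph{same} request and says nothing about how the sums compare \emph{across} requests, so it cannot close this gap. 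To be fair, the paper inherits the identical mismatch — its proof establishes the round-trip ordering and never reconciles it with the convention it states — so you have reproduced a defect rather than introduced one; but your explicit claim that the triangle inequality resolves it is false and should be removed. The clean fix is either to restate the labeling convention in terms of the round-trip cost $w(d_k,s_{i_j},t_{i_j})+w(d_k,t_{i_j})$, or to note that the identity holds verbatim with $i_j$ defined simply as the request matched to $\gamma_j(k)$ in $M_3$, which is all that the downstream approximation bound actually uses.
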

\begin{proof}
We claim that $ v_3(M_3)$ is minimized if and only if for each car $ k\in D$, car $k $ serves requests $R_k=\{i_1, i_2, ..., i_a\}$ ($(k,i_1), (k, i_2), ..., (k, i_a) \in M_3$) in order of non-decreasing travel time $w(d_k, s_{i_h}, t_{i_h}) +  w(d_k, t_{i_h})$ ($i_h\in R_k$). If this claim holds, then based on line 5 in~\cref{alg:GTA_lat}, $ v_3(k, \{i_1, i_2,..., i_a\})= \sum_{j=1}^{a} ( (a-j+1)  \cdot w(d_k, s_{i_j}, t_{i_j}) + (a-j) \cdot  w(d_k, t_{i_j}))$ and thus $ v_3(M_3)= \sum_{(k,\{i_1, i_2, ..., i_a\})\in M_4}  \sum_{j=1}^{a} ( (a-j+1)  \cdot w(d_k, s_{i_j}, t_{i_j}) + (a-j) \cdot  w(d_k, t_{i_j}))$.

It remains to prove the claim. Consider any car $ k\in D$ and $ i_x, i_y \in R_k$. We will prove that $ v_3(M_3)$ is minimized when serving request  $i_x$ earlier than request $ i_y$  if and only if $ w(d_k, s_{i_x}, t_{i_x}) +  w(d_k, t_{i_x})< w(d_k, s_{i_y}, t_{i_y}) + w(d_k, t_{i_y})$.

\emph{Necessary condition} If request  $i_x \in R_k$ is served earlier than request $ i_y\in R_k$, then, when minimizing $ v_3(k, R_k)$, it is true that $ w(d_k, s_{i_x}, t_{i_x}) +  w(d_k, t_{i_x})< w(d_k, s_{i_y}, t_{i_y}) + w(d_k, t_{i_y})$ holds.  If $ (a-x+1)  \cdot w(d_k, s_{i_x}, t_{i_x}) + (a-x)  \cdot  w(d_k, t_{i_x}) + (a-y+1)  \cdot w(d_k, s_{i_y}, t_{i_y}) + (a-y)  \cdot  w(d_k, t_{i_y}) \le  (a-y+1)  \cdot w(d_k, s_{i_x}, t_{i_x}) + (a-y)  \cdot  w(d_k, t_{i_x}) + (a-x+1)  \cdot w(d_k, s_{i_y}, t_{i_y}) + (a-x)  \cdot  w(d_k, t_{i_y})$ with $ x< y$, then $ (y-x)( w(d_k, s_{i_x}, t_{i_x}) +  w(d_k, t_{i_x})) \le (y-x)(w(d_k, s_{i_y}, t_{i_y}) + w(d_k, t_{i_y}))$, and thus we have $   w(d_k, s_{i_x}, t_{i_x}) +  w(d_k, t_{i_x})< w(d_k, s_{i_y}, t_{i_y}) + w(d_k, t_{i_y})$.

\emph{Sufficient condition} If  $ w(d_k, s_{i_x}, t_{i_x}) +  w(d_k, t_{i_x})<  w(d_k, s_{i_y}, t_{i_y}) + w(d_k, t_{i_y})$, then request  $i_x $ must be served earlier than request $ i_y$ when minimizing $ v_3(k, \{i_1, i_2,..., i_a\})$. If $   w(d_k, s_{i_x}, t_{i_x}) +  w(d_k, t_{i_x})< w(d_k, s_{i_y}, t_{i_y}) + w(d_k, t_{i_y})$ with $ a\le y>x>0 $, then $(y-x)\cdot (w(d_k, s_{i_x}, t_{i_x}) +  w(d_k, t_{i_x}))\le (y-x)\cdot  (w(d_k, s_{i_y}, t_{i_y}) + w(d_k, t_{i_y}))$, and thus $ (a-x+1)  \cdot w(d_k, s_{i_x}, t_{i_x}) + (a-x)  \cdot  w(d_k, t_{i_x}) + (a-y+1)  \cdot w(d_k, s_{i_y}, t_{i_y}) + (a-y)  \cdot  w(d_k, t_{i_y}) \le  (a-y+1)  \cdot w(d_k, s_{i_x}, t_{i_x}) + (a-y)  \cdot  w(d_k, t_{i_x}) + (a-x+1)  \cdot w(d_k, s_{i_y}, t_{i_y}) + (a-x)  \cdot  w(d_k, t_{i_y})$. This means that $v_3(M_3)$ is smaller when serving request  $i_x $ earlier than request $ i_y$.
\end{proof}

\begin{lemma}
\label{lemma:gta_lat}
For  $a\ge 2$ and for each allocation $ M=\{(k, \{i_1, i_2, ..., i_a\} ): k\in D,  \{i_1, i_2, ..., i_a\} \in R\} $, suppose $w(d_k, s_{i_j}, t_{i_j})\ge w(d_k, s_{i_{j-1}}, t_{i_{j-1}})$ for all $ 1< j\le a$, we have:
\[ v_3(M_3)\le \sum_{(k,\{i_1,i_2, ..., i_a\})\in M}  \sum_{j=1}^{a} ((a-j+1)\cdot w(d_k, s_{i_j}, t_{i_j}) + (a-j) \cdot  w(d_k, t_{i_j})). \]
\end{lemma}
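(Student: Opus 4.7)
The plan is to exploit the fact that $M_3$ is a minimum weight perfect matching in $G_3 \equiv (\bigcup_{j=1}^a \Gamma_j \cup R, v_3)$: therefore, for any perfect matching $M'$ of $G_3$, we have $v_3(M_3) \le v_3(M')$. The strategy is to build, from an arbitrary allocation $M$, a perfect matching $M'$ of $G_3$ whose total weight equals exactly the right-hand side of the claimed inequality.

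First, I would fix the construction. For each pair $(k, \{i_1, i_2, \ldots, i_a\}) \in M$, with the indexing chosen so that the assumption $w(d_k, s_{i_j}, t_{i_j}) \ge w(d_k, s_{i_{j-1}}, t_{i_{j-1}})$ holds for $1 < j \le a$, put the edge $(\gamma_j(k), i_j)$ into $M'$ for every $j = 1, \ldots, a$. Since the sets $R_k$ partition $R$ and each virtual car $\gamma_j(k) \in \bigcup_{j=1}^a \Gamma_j$ is used exactly once, $M'$ is a perfect matching in $G_3$.

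Second, I would compute $v_3(M')$ directly using the edge-weight definition from line 5 of Algorithm~\ref{alg:GTA_lat}, namely $v_3(\gamma_j(k), i) = (a-j+1)\cdot w(d_k, s_i, t_i) + (a-j)\cdot w(t_i, d_k)$. Summing this over all edges of $M'$ yields
\[
v_3(M') = \sum_{(k,\{i_1,\ldots,i_a\})\in M}\sum_{j=1}^{a}\bigl((a-j+1)\cdot w(d_k, s_{i_j}, t_{i_j}) + (a-j)\cdot w(d_k, t_{i_j})\bigr),
\]
which matches the right-hand side of the lemma exactly. Combining with $v_3(M_3) \le v_3(M')$ finishes the proof.

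I do not expect any substantive obstacle here; as the authors themselves remark for the analogous $\CS_{sum}$ lemma, the statement is essentially immediate from optimality of $M_3$ once the natural correspondence between an allocation and a matching in $G_3$ is written down. The only mild subtlety is that the ordering hypothesis on the requests in $R_k$ is what makes the term-by-term match between $v_3(M')$ and the right-hand side visible; it plays no other role, since any reordering would still yield a valid perfect matching in $G_3$ and hence another upper bound on $v_3(M_3)$.
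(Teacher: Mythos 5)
Your proposal is correct and is exactly the argument the paper intends: the paper dismisses this lemma with "obvious since $M_3$ is a minimum weight perfect matching," and you have simply made the underlying construction (mapping the allocation $M$ to the perfect matching $M'$ with edges $(\gamma_j(k), i_j)$ respecting the given ordering) explicit. No gaps.
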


This lemma is obvious since $M_3$ is a minimum weight perfect  matching.

\begin{lemma}
\label{lem:alg_geta_lat}
$\TA_{lat}(a)$ is an $a$-approximation algorithm for $\CS_{lat}$.
\end{lemma}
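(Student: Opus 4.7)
The plan is to chain three bounds, in the spirit of the proofs already carried out for $\TA(2)$ earlier in the paper. First, by the inequality labelled (\ref{claim:gta2}), we have $wait(\TA(a)_{lat}) \le v_3(M_3)$. Second, we apply the allocation upper bound for $v_3(M_3)$ (the lemma preceding this one) to the optimal allocation $M^*$, after reordering the requests in each $R^*_k$ so that $w(d_k, s_{i_j}, t_{i_j})$ is non-decreasing in $j$; this is legal because the hypothesis only involves the labelling, not the physical serving order used by $M^*$. This step yields
\[
v_3(M_3) \le \sum_{(k,\{i_1,\dots,i_a\})\in M^*} \sum_{j=1}^{a} \Bigl( (a-j+1)\, w(d_k, s_{i_j}, t_{i_j}) + (a-j)\, w(d_k, t_{i_j}) \Bigr).
\]

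Third, I would simplify this expression and compare it with $a\cdot wait(M^*)$. Using the triangle inequality $w(d_k,t_{i_j}) \le w(d_k, s_{i_j}, t_{i_j})$, each summand collapses to $(2a-2j+1)\, w(d_k, s_{i_j}, t_{i_j})$. The coefficients $c_j := 2a-2j+1$ satisfy the pairing identity $c_j + c_{a+1-j} = 2a$ and sum to $a^2$; since the weights $w(d_k, s_{i_j}, t_{i_j})$ are sorted non-decreasingly, pairing indices $j$ with $a+1-j$ and writing
\[
\sum_{j=1}^{a} c_j\, w(d_k, s_{i_j}, t_{i_j}) - a\sum_{j=1}^{a} w(d_k, s_{i_j}, t_{i_j}) = \sum_{j=1}^{\lfloor a/2 \rfloor} (c_j - a)\bigl( w(d_k, s_{i_j}, t_{i_j}) - w(d_k, s_{i_{a+1-j}}, t_{i_{a+1-j}})\bigr),
\]
every term on the right is a nonnegative coefficient times a nonpositive difference, so the overall expression is $\le 0$. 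This gives the per-car bound $\sum_j (2a-2j+1)\, w(d_k, s_{i_j}, t_{i_j}) \le a \sum_j w(d_k, s_{i_j}, t_{i_j})$ (equivalently, one could cite Chebyshev's sum inequality directly).

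To close the chain, I would observe that in the optimal tour for car $k$ serving $R^*_k$, every customer $i$ must be picked up at $s_i$ before being dropped off at $t_i$, so their latency is at least $w(d_k, s_i) + w(s_i, t_i) = w(d_k, s_i, t_i)$ by the triangle inequality applied to the $d_k \to s_i$ portion of the tour. Summing over all customers yields $wait(k, R^*_k) \ge \sum_{i \in R^*_k} w(d_k, s_i, t_i)$, and summing over cars gives the final bound $v_3(M_3) \le a\cdot wait(M^*)$, hence $wait(\TA(a)_{lat}) \le a\cdot wait(M^*)$.

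The main obstacle is the rearrangement step: the individual coefficient $2a-2j+1$ exceeds $a$ for small $j$, so a term-by-term comparison against $a \cdot w(d_k, s_{i_j}, t_{i_j})$ would fail. The proof only goes through because the algorithm's sorting pairs the largest coefficients with the smallest $w$-values; the pair-symmetry $c_j + c_{a+1-j} = 2a$ is what makes the averaging exact. Everything else is a fairly routine generalisation of the $a=2$ case already handled in the paper.
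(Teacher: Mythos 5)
Your proof is correct and follows essentially the same route as the paper: the chain $wait(\TA(a)_{lat})\le v_3(M_3)\le\sum_{(k,R^*_k)}\sum_j((a-j+1)w(d_k,s_{i_j},t_{i_j})+(a-j)w(d_k,t_{i_j}))\le a\sum_{(k,R^*_k)}\sum_j w(d_k,s_{i_j},t_{i_j})\le a\cdot wait(M^*)$, with the middle step justified by the triangle inequality plus the non-decreasing ordering of the $w(d_k,s_{i_j},t_{i_j})$. Your explicit pairing identity $c_j+c_{a+1-j}=2a$ is just a cleaner bookkeeping of the same rearrangement argument the paper invokes, and your final per-customer lower bound $wait(k,R^*_k)\ge\sum_{i\in R^*_k}w(d_k,s_i,t_i)$ is exactly the paper's closing step (which the paper states with an inadvertently reversed inequality sign).
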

\begin{proof}
\begin{equation*}
  \begin{split}
   \label{equation_gta_lat}
&wait(\text{TA$_{lat}(a)$})\le  v_3(M_3)  \  \ \text{(by~(\ref{claim:gta2}))}\\
 &\leq \sum_{(k,\{i_1,i_2, ..., i_a\})\in M^*}  \sum_{j=1}^{a} ( (a-j+1) \cdot w(d_k, s_{i_j}, t_{i_j}) + (a-j) \cdot  w(d_k, t_{i_j})) \ \    \text{(by~\cref{lemma:gta_lat})} \\
 &= \sum_{(k,\{i_1,i_2, ..., i_a\})\in M^*} \sum_{j=1}^{a} (a\cdot w(d_k, s_{i_j}, t_{i_j})- (j-1) \cdot w(d_k, s_{i_j}, t_{i_j}) + (a-j) \cdot  w(d_k, t_{i_j}) ) \\
  &\leq \sum_{(k,\{i_1,i_2, ..., i_a\})\in M^*}  a \cdot \sum_{j=1}^{a} w(d_k, s_{i_j}, t_{i_j})  \\
    &\leq \sum_{(k,\{i_1,i_2, ..., i_a\})\in M^*} a\cdot wait (k, R_k) \ \    \text{(by $wait(k, R_k)\le \sum_{i_h\in R_k} w(d_k, s_{i_h}, t_{i_h})$)} \\
   &= a \cdot wait(M^*)
     \end{split}
\end{equation*}
The third inequality holds since we ordered the requests in $R_k $ ($k\in D$) with non-decreasing $ w(d_k, s_{i_j}, t_{i_j})$ for all $ j\in R_k$, and by triangle inequality.
\end{proof}

\section{Proofs for the rest results in~\cref{tab1}}\label{app:tableresults}

Recall the results in~\cref{tab1}. We will prove the rest of the results that were not proven in the main text.

\subsection{Remaining proofs of results for  $\CS_{sum}$}\label{A5_sum}
In $\CS_{sum}$, for each  $(k,\{i,j\})$, we have
\begin{eqnarray}\label{inequation1_sum} cost(k,\{i,j\})= \min\{w(d_k, s_i)+ u_{ij}, w(d_k, s_j)+ u_{ji}\}, \mbox{ and }\\
\label{inequation2_sum} cost(k,\{i,j\})\ge \max\{w(d_k, s_i, t_i), w(d_k, s_j, t_j), w(d_k, t_i), w(d_k, t_j)\}.\end{eqnarray}

 \begin{lemma}
\label{claim_uu}
For each request pair $\{i,j\}\in R$, we have
\[\frac{\mu_{ij}+\mu_{ji}}{2}+w(s_i, s_j)\le \min\{3u_{ij}, 3u_{ji}\}\] and \[\frac{\mu_{ij}+\mu_{ji}}{2}+w(s_i, s_j)\le \min\{2\mu_{ij}, 2\mu_{ji}\}.\]
\end{lemma}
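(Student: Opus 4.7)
My plan is to prove the four component inequalities
$\frac{\mu_{ij}+\mu_{ji}}{2}+w(s_i,s_j)\le 3u_{ij}$, $\le 3u_{ji}$, $\le 2\mu_{ij}$, and $\le 2\mu_{ji}$
one at a time, by case analysis on which of the three candidate paths in~(\ref{equation101_u}) and~(\ref{equation102_u}) realises the minimum on the right-hand side. Since interchanging $i$ and $j$ maps the pair $(u_{ij},\mu_{ij})$ to $(u_{ji},\mu_{ji})$ while leaving $w(s_i,s_j)$ and the left-hand side invariant, the inequalities on the right will follow from the ones on the left by this symmetry. To keep the bookkeeping light I would introduce the abbreviations $a=w(s_i,s_j)$, $b=w(s_i,t_i)$, $c=w(s_i,t_j)$, $d=w(s_j,t_i)$, $e=w(s_j,t_j)$, $f=w(t_i,t_j)$, so that each of the three path-latencies defining $u_{ij}$, $\mu_{ij}$, $u_{ji}$, $\mu_{ji}$ becomes a simple affine expression in these six variables.

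In each case I would bound $\mu_{ij}$ above by the latency of the same path that realises the right-hand side, and bound $\mu_{ji}$ above by the latency of a carefully chosen path starting at $s_j$. For the first inequality (right-hand side $3u_{ij}$) the ``chained'' path $s_j\to t_j\to s_i\to t_i$ for $\mu_{ji}$ works uniformly across the three cases once the triangle inequalities $e\le d+f$, $c\le b+f$, $c\le a+e$ and $b\le a+d$ have been applied; for instance, when $u_{ij}=a+d+f$ this gives $\mu_{ij}\le 2a+2d+f$ and $\mu_{ji}\le 2a+4d+3f$, and hence $\frac{\mu_{ij}+\mu_{ji}}{2}+a\le 3a+3d+2f\le 3u_{ij}$. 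For the second inequality (right-hand side $2\mu_{ij}$, equivalently $\mu_{ji}+2a\le 3\mu_{ij}$) I would pair the minimising path for $\mu_{ij}$ with its natural ``twin'' starting at $s_j$ and invoke $b\le a+d$ or $c\le a+e$ to reach a bound of the form $\mu_{ji}\le 4a+2d+f$ or $4a+2e+f$, after which $\mu_{ji}+2a\le 3\mu_{ij}$ drops out by arithmetic.

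The delicate case in both inequalities is the ``chained'' one, where $u_{ij}=b+d+e$ (respectively $\mu_{ij}=2b+d+e$) and the right-hand side does not explicitly contain $a=w(s_i,s_j)$. The extra $+a$ on the left then has to be absorbed by the slack in the single triangle inequality $a\le b+d$, and the main thing to verify is that after this substitution the coefficients of $a,b,c,d,e,f$ on both sides line up. A short coefficient check shows that $3a\le 3b+3d$ is exactly enough in both cases, and together with the symmetry argument above this establishes all four component inequalities, completing the proof of the lemma.
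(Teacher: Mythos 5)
Your plan is correct and follows essentially the same route as the paper's proof: a case analysis on which of the three candidate paths attains the minimum, upper bounds on $\mu_{ij}$ and $\mu_{ji}$ by explicit path latencies, the triangle inequality (with $w(s_i,s_j)\le w(s_i,t_i)+w(t_i,s_j)$ absorbing the extra term in the chained case), and the $i\leftrightarrow j$ symmetry to dispose of the $u_{ji}$ and $\mu_{ji}$ bounds. The only cosmetic difference is that the paper bounds $\mu_{ji}$ by the chained path $2w(s_j,t_j)+w(t_j,s_i,t_i)$ uniformly in all cases of both inequalities, whereas you switch to the ``twin'' paths for the second inequality; both choices go through.
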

\begin{proof}
We first prove $\frac{\mu_{ij}+\mu_{ji}}{2}+w(s_i, s_j)\le 3u_{ij}$ and $\frac{\mu_{ij}+\mu_{ji}}{2}+w(s_i, s_j)\le 2\mu_{ij}$ by distinguishing three cases.

\textit{Case 1} $ u_{ij}=w(s_i, t_i,s_j, t_j)$ (resp.  $\mu_{ij}= 2w(s_i, t_i)+w(t_i, s_j, t_j)$). Observe that $\mu_{ij}\le 2w(s_i, t_i)+w(t_i, s_j, t_j)$ (resp. $ u_{ij}\le w(s_i, t_i,s_j, t_j)$) and $\mu_{ji}\le 2w(s_j, t_j)+w(t_j, s_i, t_i)$. By the triangle inequality $ w(s_i, t_j)\le w(s_i, t_i, s_j, t_j) $ and $ w(s_i, s_j)\le w(s_i, t_i, s_j) $, we have $\frac{\mu_{ij}+\mu_{ji}}{2}+w(s_i, s_j)\le 3 w(s_i,t_i)+2w(t_i,s_j)+2w(s_j, t_j) \le 3u_{ij}$ (resp. $\frac{\mu_{ij}+\mu_{ji}}{2}+w(s_i, s_j)\le  3 w(s_i,t_i)+2w(t_i,s_j)+2w(s_j, t_j)\le 2\mu_{ij}$).

\textit{Case 2} $ u_{ij}=w(s_i,s_j,t_i, t_j)$ (resp. $\mu_{ij}= 2w(s_i, s_j, t_i)+w(t_i, t_j)$). Observe that $\mu_{ij}\le 2w(s_i, s_j, t_i)+w(t_i, t_j)$  (resp. $ u_{ij}\le w(s_i,s_j,t_i, t_j)$) and $\mu_{ji}\le 2w(s_j, t_j)+w(t_j, s_i, t_i)$. By the triangle inequality $ w(s_j, t_j)\le w(s_j, t_i, t_j) $ and $ w(s_i, t_i)\le w(s_i, s_j, t_i) $ and $ w(s_i, t_j)\le w(s_i, s_j, t_i, t_j) $, we have $\frac{\mu_{ij}+\mu_{ji}}{2}+w(s_i, s_j)\le 3 w(s_i,s_j)+3w(s_j,t_i)+w(t_i, t_j) \le 3u_{ij}$ (resp.  $\frac{\mu_{ij}+\mu_{ji}}{2}+w(s_i, s_j)\le 3 w(s_i,s_j)+3w(s_j,t_i)+w(t_i, t_j)\le 2\mu_{ij}$).

\textit{Case 3} $ u_{ij}=w(s_i,s_j, t_j, t_i)$ (resp. $\mu_{ij}= 2w(s_i, s_j, t_j)+w(t_j, t_i)$). Observe that $\mu_{ij}\le 2w(s_i, s_j, t_j)+w(t_j, t_i)$ (resp. $ u_{ij}\le w(s_i,s_j, t_j, t_i)$)  and $\mu_{ji}\le 2w(s_j, t_j)+w(t_j, s_i, t_i)$. By the triangle inequality $ w(s_i, t_j)\le w(s_i, s_j, t_j) $ and $ w(s_i, t_i)\le w(s_i, s_j, t_j, t_i) $, we have $\frac{\mu_{ij}+\mu_{ji}}{2}+w(s_i, s_j)\le 3 w(s_i,s_j)+3w(s_j,t_j)+w(t_j, t_i) \le 3u_{ij}$ (resp.  $\frac{\mu_{ij}+\mu_{ji}}{2}+w(s_i, s_j)\le 3 w(s_i,s_j)+3w(s_j,t_j)+w(t_j, t_i) \le 2\mu_{ij}$).

The proofs  of  $\frac{\mu_{ij}+\mu_{ji}}{2}+w(s_i, s_j)\le 3u_{ji}$ and $\frac{\mu_{ij}+\mu_{ji}}{2}+w(s_i, s_j)\le 2\mu_{ji}$ are symmetric, thus the claim is proved.
\end{proof}

\begin{lemma}
\label{lem:allapp_lem_1}
$\MA(2, \mu)$ is a $3$-approximation algorithm for $\CS_{sum}$.
\end{lemma}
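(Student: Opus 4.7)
The plan is to follow the same template used in the proof of \cref{lem:alg_ana_401} for $\MA(1,u)$, but now pay the price for using $\mu$ instead of $u$ in the matching phase, and exploit the auxiliary \cref{claim_uu} to absorb that price.

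First I would observe that $\mu_{ij}\ge u_{ij}$ for every ordered pair $(i,j)$: each of the three terms in~(\ref{equation102_u}) is the sum of the latency to the first dropoff plus the corresponding term in~(\ref{equation101_u}), so it is at least the latter. This immediately gives, for any car $k$,
\[ \min\{w(d_k,s_i)+u_{ij},\,w(d_k,s_j)+u_{ji}\}\ \le\ \min\{2w(d_k,s_i)+\mu_{ij},\,2w(d_k,s_j)+\mu_{ji}\}. \]
Applying this term-wise to the output allocation and then using \cref{lemma:ma} with $(\alpha,v)=(2,\mu)$ yields $\cos$t$(\MA(2,\mu))\le v_1(M_1)+v_2(M_2)$.

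Next I would invoke \cref{lemma:ma_min} with $(\alpha,v)=(2,\mu)$ applied to an optimal allocation $M^*$, which gives
\[ v_1(M_1)+v_2(M_2)\ \le\ \sum_{(k,\{i,j\})\in M^*}\Bigl(w(d_k,s_i)+w(d_k,s_j)+\tfrac{\mu_{ij}+\mu_{ji}}{2}\Bigr). \]
Assuming wlog that $cost(k,\{i,j\})=w(d_k,s_i)+u_{ij}$ in $M^*$, I would replace $w(d_k,s_j)$ by $w(d_k,s_i)+w(s_i,s_j)$ via the triangle inequality, and then appeal to \cref{claim_uu} which guarantees $w(s_i,s_j)+\tfrac{\mu_{ij}+\mu_{ji}}{2}\le 3u_{ij}$. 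Each summand is therefore bounded by $2w(d_k,s_i)+3u_{ij}\le 3\bigl(w(d_k,s_i)+u_{ij}\bigr)=3\,cost(k,\{i,j\})$, and summing over $M^*$ gives $\cos$t$(\MA(2,\mu))\le 3\,cost(M^*)$.

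The only nontrivial ingredient is the inequality $w(s_i,s_j)+\tfrac{\mu_{ij}+\mu_{ji}}{2}\le 3u_{ij}$, but this has already been handled in \cref{claim_uu} by a three-case analysis on which of the three alternatives in~(\ref{equation101_u}) realises $u_{ij}$, so once that lemma is available the rest of the proof is a short chain of inequalities entirely analogous to~(\ref{equation401}) in the $\MA(1,u)$ argument. No new structural insight is needed beyond the observation $\mu\ge u$ that licenses using a $\mu$-based matching to control a $u$-based cost.
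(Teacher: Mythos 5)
Your proof is correct and follows essentially the same route as the paper's: bound $cost(\MA(2,\mu))$ by $v_1(M_1)+v_2(M_2)$ (via $u_{ij}\le\mu_{ij}$ and \cref{lemma:ma}), apply \cref{lemma:ma_min} with $\alpha=2$ to the optimal allocation, use the triangle inequality to fold $w(d_k,s_j)$ into $2w(d_k,s_i)+w(s_i,s_j)$, and finish with \cref{claim_uu}. If anything, your write-up is cleaner than the paper's displayed chain, which carries a stray factor of $2$ on $w(d_k,s_i)+w(d_k,s_j)$ in its first inequality that your version correctly avoids.
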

\begin{proof}
\begin{equation}
   \begin{split}
 \label{equation_ma2_sum}
cost&(\text{MA$(2, \mu)$})=  \sum_{(k,\{ i,j\})\in \text{MA}} \min\{w(d_k, s_i)+u_{ij}, w(d_k, s_j)+u_{ji}\} \ \ \text{(by~(\ref{equation101_u_1}))}\\
&\leq  \sum_{(k,\{ i,j\})\in M^*} (2w(d_k, s_i)+2w(d_k, s_j)+\frac{\mu_{ij}+\mu_{ji}}{2})\\
&\leq  \sum_{(k,\{ i,j\})\in M^*} (2w(d_k, s_i)+ w(s_i, s_j)+\frac{\mu_{ij}+\mu_{ji}}{2})\ \ \text{(by the triangle inequality)}\\
&\leq  \sum_{(k,\{ i,j\})\in M^*} (2w(d_k, s_i)+3 \min\{u_{ij}, u_{ji}\})\ \ \text{(by~\cref{claim_uu})}\\
&\leq  \sum_{(k,\{ i,j\})\in M^*} 3 cost(k,\{i,j\})  \ \ \text{(by~(\ref{inequation1_sum}))}\\
&\leq   3 \ cost(M^*)
  \end{split}
\end{equation}
The first inequality follows from~\cref{lemma:ma_min}.
\end{proof}

\begin{lemma}
\label{lem:allapp_lem_2}
$\TA(2)$ is a $4$-approximation algorithm for $\CS_{sum}$.
\end{lemma}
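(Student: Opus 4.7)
The plan is to mirror the proof of~\cref{lem:alg_ana_402}, adapting it to $\alpha=2$. Assume without loss of generality that $cost(k,\{i,j\})=w(d_k,s_i)+u_{ij}$ for each $(k,\{i,j\})\in M^*$; I will then chain the inequalities $cost(\TA(2))\le v_3(M_3)\le 4\cdot cost(M^*)$.

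For the first step I would rewrite $cost(\TA(2))$ using~(\ref{equation101_u_1}) and compare it term-by-term with the formula of~\cref{lemma:ta} specialized to $\alpha=2$. The key observation is that, by the triangle inequality $w(t_i,s_j)\le w(t_i,d_k)+w(d_k,s_j)$ together with $u_{ij}\le w(s_i,t_i,s_j,t_j)$,
\[w(d_k,s_i)+u_{ij}\le w(d_k,s_i,t_i)+w(t_i,d_k,s_j,t_j)\le 2w(d_k,s_i,t_i)+w(t_i,d_k,s_j,t_j),\]
and symmetrically with $i,j$ swapped, so that $cost(\TA(2))\le v_3(M_3)$.

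Applying~\cref{lemma:ta_min} with $M=M^*$ and $\alpha=2$ then bounds $v_3(M_3)$ by a sum over $M^*$ of terms of the form $\min\{2w(d_k,s_i,t_i)+w(t_i,d_k,s_j,t_j),\,2w(d_k,s_j,t_j)+w(t_j,d_k,s_i,t_i)\}$. The only step that differs substantively from the $\alpha=1$ argument is the final per-summand bound, for which I would decompose
\[2w(d_k,s_i,t_i)+w(t_i,d_k,s_j,t_j)=2w(d_k,s_i,t_i)+w(d_k,t_i)+w(d_k,s_j,t_j)\]
and invoke~(\ref{inequation2_sum}), which guarantees that each of $w(d_k,s_i,t_i)$, $w(d_k,t_i)$ and $w(d_k,s_j,t_j)$ is at most $cost(k,\{i,j\})$; the three-term decomposition therefore sums to at most $4\cdot cost(k,\{i,j\})$.

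I do not foresee any genuine obstacle: the argument is essentially a routine adaptation of the proof of~\cref{lem:alg_ana_402}. The extra factor compared to that proof arises precisely from the additional copy of $w(d_k,s_i,t_i)$ produced by the coefficient $\alpha=2$, turning the $3$-approximation of $\TA(1)$ into the $4$-approximation of $\TA(2)$.
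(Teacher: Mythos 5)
Your proposal is correct and follows essentially the same route as the paper: bound $cost(\TA(2))$ by $v_3(M_3)$, apply \cref{lemma:ta_min} with $M=M^*$ and $\alpha=2$, and then bound each summand $2w(d_k,s_i,t_i)+w(d_k,t_i)+w(d_k,s_j,t_j)$ by $4\,cost(k,\{i,j\})$ via~(\ref{inequation2_sum}). The only cosmetic difference is your initial ``wlog'' normalization of $M^*$, which is never actually used because the final bound from~(\ref{inequation2_sum}) is independent of which request is picked up first.
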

\begin{proof}
\begin{equation}
   \begin{split}
 \label{equation_ta2_sum}
&cost(\text{TA$(2)$})=  \sum_{(k,\{ i,j\})\in \text{TA}} \min\{w(d_k, s_i)+u_{ij}, w(d_k, s_j)+u_{ji}\} \ \ \text{(by~(\ref{equation101_u_1}))}\\
&\leq  \sum_{(k,\{ i,j\})\in M^*} \min\{2 w(d_k, s_i, t_i)+ w(t_i, d_k, s_j, t_j), 2 w(d_k, s_j, t_j)+ w(t_j, d_k, s_i, t_i) \}  \\
&\leq  \sum_{(k,\{ i,j\})\in M^*} 4 cost(k,\{i,j\})  \ \ \text{(by~(\ref{inequation2_sum}))}\\
&\leq   4 \ cost(M^*)
  \end{split}
\end{equation}
The first inequality holds since TA is a minimum weight perfect matching in algorithm TA$(2)$.
\end{proof}

 \begin{theorem*}
\label{the:allapp_the_1}
$\CA(2, \mu)$ is a 3-approximation algorithm for $\CS_{sum}$. 
\end{theorem*}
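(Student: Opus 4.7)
The plan is to observe that the theorem reduces almost immediately to the already-established ratio for $\MA(2,\mu)$. By the definition of the combined algorithm, for every instance $I$ we have $cost(\CA(2,\mu)(I)) = \min\{cost(\MA(2,\mu)(I)), cost(\TA(2)(I))\}$. In particular, $cost(\CA(2,\mu)(I)) \le cost(\MA(2,\mu)(I))$, so any upper bound on the $\MA$-solution transfers directly to the $\CA$-solution.

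Given this, the first (and essentially only) step is to invoke \cref{lem:allapp_lem_1}, which states that $cost(\MA(2,\mu)) \le 3 \, cost(M^*)$. Chaining these two inequalities yields $cost(\CA(2,\mu)) \le 3 \, cost(M^*)$, establishing the $3$-approximation ratio. There is no need to appeal to \cref{lem:allapp_lem_2} (the $4$-approximation for $\TA(2)$) or to attempt a convex combination of the two algorithms' costs, as was done in \cref{the:alg_ana_2} for $\CS_{lat}$; here, no improvement over the $\MA$-ratio is claimed in~\cref{tab1}.

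The main conceptual observation is precisely that, in contrast to $\CS_{sum,s=t}$, $\CS_{lat}$ and $\CS_{lat,s=t}$, the combination of $\MA(2,\mu)$ and $\TA(2)$ does \emph{not} yield a strictly better worst-case ratio for $\CS_{sum}$ than $\MA(2,\mu)$ alone. Thus there is no real obstacle: the proof is a one-line consequence of \cref{lem:allapp_lem_1} together with the minimum-selection rule that defines $\CA$. If one wished to push further and verify that the ratio $3$ is tight for $\CA(2,\mu)$ on $\CS_{sum}$, one would need to exhibit a separate worst-case instance on which both $\MA(2,\mu)$ and $\TA(2)$ simultaneously attain cost $3 \, cost(M^*)$; but since the theorem statement does not claim tightness, this is not required.
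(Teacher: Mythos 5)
Your proposal is correct and matches the paper's own argument: the paper likewise derives the bound from $cost(\CA(I))=\min\{cost(\MA(I)),cost(\TA(I))\}$ together with the $3$-approximation bound for $\MA(2,\mu)$ (citing~\cref{lem:allapp_lem_1}'s inequality), with no improvement over the $\MA$ ratio claimed. Your additional remark that tightness is not asserted and hence needs no worst-case instance is also consistent with the paper.
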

According to~(\ref{equation_ma2_sum}) and~(\ref{equation_ta2_sum}), we have:  $\text{cost(CA}(I))=\min\{\text{cost(MA}(I)),  \text{cost(TA}(I))\} \le$ 3 cost($M^*$(I)) for any instance $I$.

\subsection{Remaining proofs of results for $\CS_{sum, s=t}$}\label{A5_sum_st}
In $\CS_{sum, s=t}$, for each $(k,\{i,j\})$, by~(\ref{equation101}) and the triangle inequality, we have
\begin{eqnarray}\label{inequation1_sum_s=t} \frac{3}{2}cost(k,\{i,j\})\ge \frac{1}{2}(w(d_k, s_i)+ w(d_k, s_j))+ w(s_i, s_j) \mbox{ and}, \\ \label{inequation2_sum_s=t} cost(k,\{i,j\})\ge \max\{w(d_k, s_i), w(d_k, s_j)\}.\end{eqnarray}

\begin{lemma*}
\label{lem:allapp_lem_3}
$\MA(1,u)$ is a $3/2$-approximation algorithm for $\CS_{sum,s=t}$. Moreover, there exists an instance $I$ for which  cost$(\MA(I))$=3/2 cost$(M^*(I))$.
\end{lemma*}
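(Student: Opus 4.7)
The plan is to follow the template of~\cref{lem:alg_ana_401} and specialize its bounds to $\CS_{sum,s=t}$, where the identity $u_{ij}=u_{ji}=w(s_i,s_j)$ lets us avoid the estimate $u_{ji}\le 2u_{ij}$ that, in the general case, inflates the ratio from $3/2$ to $2$. Throughout I would assume without loss of generality that $cost(k,\{i,j\})=w(d_k,s_i)+w(s_i,s_j)$ for each $(k,\{i,j\})\in M^*$.

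The chain of inequalities I intend to derive is as follows. By~\cref{lemma:ma} applied with $\alpha=1$ and $v=u$, $cost(\MA(1,u))=v_1(M_1)+v_2(M_2)$; by~\cref{lemma:ma_min} applied to $M^*$, this is at most $\frac{1}{2}\sum_{(k,\{i,j\})\in M^*}(w(d_k,s_i)+w(d_k,s_j)+u_{ij}+u_{ji})$. Substituting $u_{ij}=u_{ji}=w(s_i,s_j)$ and then using the triangle inequality $w(d_k,s_j)\le w(d_k,s_i)+w(s_i,s_j)$ leaves the bound $\frac{1}{2}\sum_{(k,\{i,j\})\in M^*}(2w(d_k,s_i)+3w(s_i,s_j))$, which is at most $\frac{3}{2}\sum_{(k,\{i,j\})\in M^*}(w(d_k,s_i)+w(s_i,s_j))=\frac{3}{2}\,cost(M^*)$. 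The key structural point to emphasize is that the symmetry of $u$ in the $s=t$ setting removes exactly one use of the triangle inequality from the proof of~\cref{lem:alg_ana_401}, which is what saves the factor.

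For tightness I would take $n=2$ with cars $1$ and $2$ on a line at $A$ and $B$, $w(A,B)=2$; requests $3$ and $4$ at $A$ and $B$ respectively; and requests $1$ and $2$ at the midpoint $M$ with $w(A,M)=w(B,M)=1$. The optimal allocation $\{(1,\{1,3\}),(2,\{2,4\})\}$ has cost~$2$. In the first step of MA, all three perfect matchings of $\{1,2,3,4\}$ have $v_1$-weight exactly $2$; under an adversarial tiebreak MA picks $M_1=\{\{1,2\},\{3,4\}\}$, and the second step serves $\{1,2\}$ at cost~$1$ and $\{3,4\}$ at cost~$2$, for total cost~$3$ and ratio~$3/2$. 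The main obstacle is the tie in the first step; I would either appeal to adversarial tiebreaking (as is implicit in the tightness proofs of~\cref{the:alg_ana_1} and~\cref{the:alg_ana_3}) or instead perturb $w(A,B)$ downward to $2-\delta$, which makes $\{\{1,2\},\{3,4\}\}$ the unique minimum-weight matching in $G_1$ and drives the ratio $(3-\delta)/2$ to $3/2$ as $\delta\to 0$.
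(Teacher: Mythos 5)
Your proof is correct and follows essentially the same route as the paper's: both bound $cost(\MA(1,u))$ via \cref{lemma:ma_min} with $u_{ij}=u_{ji}=w(s_i,s_j)$ and then apply the single triangle inequality $w(d_k,s_j)\le w(d_k,s_i)+w(s_i,s_j)$, which is exactly the content of the paper's inequality $\frac{3}{2}cost(k,\{i,j\})\ge \frac{1}{2}(w(d_k,s_i)+w(d_k,s_j))+w(s_i,s_j)$. Your tightness instance is structurally identical to the paper's (which reuses the subgraph on $(k_1,s_3)$, $(s_1,s_2)$, $(k_2,s_4)$ of \cref{fig404}, likewise with a tie in the first matching step resolved adversarially), so no gap there either.
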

\begin{proof}
\begin{equation}
   \begin{split}
 \label{equation_ma1_sum_s=t}
cost(\text{MA$(1,u)$})&=\sum_{(k,\{i,j\})\in \text{MA}} \min\{w(d_k, s_i) +u_{ij}, w(d_k, s_j)+u_{ji}\}  \ \   \text{(by~(\ref{equation101_u_1}))} \\
&\leq  \sum_{(k,\{ i,j\})\in M^*} (w(s_i,s_j)+ \frac{1}{2}(w(d_k, s_i)+ w(d_k, s_j)))  \\
&\leq   \sum_{(k,\{ i,j\})\in M^*} 3/2 cost(k,\{i,j\})  \ \ \text{(by~(\ref{inequation1_sum_s=t}))} \\
&\leq   3/2 \ cost(M^*)
  \end{split}
\end{equation}
The first inequality holds  since MA is a minimum weight perfect matching in algorithm MA$(1,u)$ and $u_{ij}=u_{ji}=w(s_i,s_j) $.  To see that equality  may hold in (\ref{equation_ma1_sum_s=t}), consider the subgraph induced by the nodes $(k_1,s_3)$, $(s_1,s_2)$, and $(k_2,s_4)$ in \cref{fig404}.
\end{proof}

\begin{lemma*}
\label{lem:allapp_lem_4}
$\TA(1)$ is a $3$-approximation algorithm for $\CS_{sum,s=t}$.  Moreover, there exists an instance $I$ for which  cost$(\TA(I))$=3 cost$(M^*(I))$.
\end{lemma*}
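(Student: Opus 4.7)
The plan is to adapt the proof of~\cref{lem:alg_ana_402} to the specialisation $s_i = t_i$ and then exhibit a tight instance. Recall that in $\CS_{sum,s=t}$ we have $u_{ij} = u_{ji} = w(s_i,s_j)$ and $w(s_i,t_i) = 0$. Starting from~\cref{lemma:ta}, the cost of the TA$(1)$-output is at most $v_3(M_3)$, and by~\cref{lemma:ta_min} applied with $\alpha = 1$ and $M = M^*$,
\[
v_3(M_3) \le \sum_{(k,\{i,j\}) \in M^*} \min\bigl\{w(d_k, s_i, t_i) + w(t_i, d_k, s_j, t_j),\ w(d_k, s_j, t_j) + w(t_j, d_k, s_i, t_i)\bigr\}.
\]
In the $s=t$ case both paths collapse, and the bound becomes $\sum_{(k,\{i,j\}) \in M^*} \min\{2 w(d_k, s_i) + w(d_k, s_j),\ 2 w(d_k, s_j) + w(d_k, s_i)\}$.

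The crucial step is a single triangle-inequality charge. Relabel each optimal pair so that $s_i$ is the closer pick-up from $d_k$, meaning $cost(k,\{i,j\}) = w(d_k, s_i) + w(s_i, s_j)$. Then the minimum above equals $2 w(d_k, s_i) + w(d_k, s_j)$, and the triangle inequality $w(d_k, s_j) \le w(d_k, s_i) + w(s_i, s_j)$ yields
\[
2 w(d_k, s_i) + w(d_k, s_j) \le 3 w(d_k, s_i) + w(s_i, s_j) \le 3\, cost(k, \{i,j\}).
\]
Summing over $(k,\{i,j\}) \in M^*$ gives $cost(\text{TA}(1)) \le 3\, cost(M^*)$.

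For tightness I would restrict the instance of~\cref{fig404} to the subgraph on $\{s_5, s_6\}$, $\{k_3, k_4\}$, $\{s_7, s_8\}$, interpreted in the $s=t$ variant (each drop-off identified with its pick-up; unspecified distances are the shortest-path closure of the drawn edges, so in particular $w(s_5, s_7) = 2$). An optimal allocation pairs the co-located requests via $(k_3,\{5,6\})$ and $(k_4,\{7,8\})$, each of cost~$1$, so $cost(M^*) = 2$. In $G_3$ every edge $v_3(\gamma(k),i)$ equals $2$ and every $v_3(\delta(k),i)$ equals $1$, so every perfect matching has weight~$6$; TA may therefore return $(k_3,\{5,7\})$ and $(k_4,\{6,8\})$, each of cost $3$, totalling~$6$ and attaining the ratio~$3$.

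The main obstacle lies in the tight-instance verification rather than the upper bound. One must argue that all perfect matchings in $G_3$ share the same weight (so that TA's tie-breaking may be adversarial), and that the triangle inequality $w(d_k, s_j) \le w(d_k, s_i) + w(s_i, s_j)$ used above is tight in this configuration. Both conditions are realised through the zero-distance identifications of the request groups and of the car group, which collapse $w(d_k, s_j)$ onto $w(d_k, s_i) + w(s_i, s_j)$ with equality.
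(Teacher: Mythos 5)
Your proof is correct and follows essentially the same route as the paper: bound $cost(\mathrm{TA}(1))$ by $v_3(M_3)$, use the minimality of the matching $M_3$ to pass to $M^*$, observe that in the $s=t$ case each term collapses to $\min\{2w(d_k,s_i)+w(d_k,s_j),\,2w(d_k,s_j)+w(d_k,s_i)\}$, and charge this against $3\,cost(k,\{i,j\})$ by one application of the triangle inequality (the paper uses the equivalent bound $cost(k,\{i,j\})\ge\max\{w(d_k,s_i),w(d_k,s_j)\}$). Your tight instance is the same bottom-row subgraph of the paper's Figure~\ref{fig404}, and your reading of the unspecified distances as the shortest-path closure (so $w(s_5,s_7)=2$) is the correct interpretation needed for the metric to satisfy the triangle inequality.
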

\begin{proof}
\begin{equation}
   \begin{split}
 \label{equation_ta1_sum_s=t}
cost(\text{TA$(1)$})&=\sum_{(k,\{i,j\})\in \text{TA}} \min\{w(d_k, s_i) +u_{ij}, w(d_k, s_j)+u_{ji}\}  \ \ \text{(by~(\ref{equation101_u_1}))} \\
&\leq  \sum_{(k,\{ i,j\})\in TA} \min\{2 w(d_k, s_i)+ w(d_k, s_j), 2 w(d_k, s_j)+ w(d_k, s_i) \}  \\
&\leq  \sum_{(k,\{ i,j\})\in M^*} \min\{2 w(d_k, s_i)+ w(d_k, s_j), 2 w(d_k, s_j)+ w(d_k, s_i) \}  \\
&\leq  \sum_{(k,\{ i,j\})\in M^*} 3 cost(k,\{i,j\})  \ \ \text{(by~(\ref{inequation2_sum_s=t}))}\\
&\leq   3 \ cost(M^*)
  \end{split}
\end{equation}

The first inequality holds by $u_{ij}=u_{ji}=w(s_i,s_j)$ and the triangle inequality, and the second inequality holds since TA is a minimum weight perfect matching in algorithm TA$(1)$.
To see that equality may hold in (\ref{equation_ta1_sum_s=t}), consider the subgraph induced by the nodes $(s_5, s_6)$, $(k_3, k_4)$, and $(s_7,s_8)$ in~\cref{fig404}.
\end{proof}

\begin{lemma}
\label{lem:allapp_lem_5}
$\MA(2,\mu$) is a $3/2$-approximation algorithm for $\CS_{sum,s=t}$.
\end{lemma}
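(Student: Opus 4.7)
The plan is to reduce this to essentially the same bound that was used for $\MA(1,u)$ in Lemma~\ref{lem:allapp_lem_3}. The key observation is that in $\CS_{sum,s=t}$ we have $\mu_{ij}=\mu_{ji}=w(s_i,s_j)$, so the first-step weight in $\MA(2,\mu)$ is $v_1(\{i,j\})=w(s_i,s_j)$ and the second-step weight is $v_2(k,\{i,j\})=2\min\{w(d_k,s_i),w(d_k,s_j)\}$. In particular, $v_2$ is exactly twice the car-to-pickup term that actually appears in the travel cost, while $v_1$ equals that travel cost's second term. This gives the clean identity
\[
\mathrm{cost}(\MA(2,\mu)) \;=\; \sum_{(k,\{i,j\})\in M_2}\bigl(\min\{w(d_k,s_i),w(d_k,s_j)\}+w(s_i,s_j)\bigr) \;=\; v_1(M_1)+\tfrac{1}{2}v_2(M_2),
\]
which is the whole point of choosing $\alpha=2$: the factor two in $v_2$ is absorbed exactly by this halving.

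Next, I would invoke the two component bounds that appear inside the proof of Lemma~\ref{lemma:ma_min}, namely~(\ref{equation_cycleequation1}) and~(\ref{equation_cycleequation2}), applied with $v=\mu$, $\alpha=2$, and $M=M^*$. Since $\mu_{ij}+\mu_{ji}=2w(s_i,s_j)$ in this special case, they yield
\[
v_1(M_1)\le \sum_{(k,\{i,j\})\in M^*} w(s_i,s_j),\qquad
\tfrac{1}{2}v_2(M_2)\le \tfrac{1}{2}\sum_{(k,\{i,j\})\in M^*}\bigl(w(d_k,s_i)+w(d_k,s_j)\bigr).
\]
Summing the two bounds gives exactly the right-hand side that appears in the first inequality of~(\ref{equation_ma1_sum_s=t}) in the proof of Lemma~\ref{lem:allapp_lem_3}.

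From here the argument is identical to that of Lemma~\ref{lem:allapp_lem_3}. Assuming, without loss of generality, that $\mathrm{cost}(k,\{i,j\})=w(d_k,s_i)+w(s_i,s_j)$ for each $(k,\{i,j\})\in M^*$, the triangle inequality gives $w(d_k,s_j)\le w(d_k,s_i)+w(s_i,s_j)$, whence
\[
w(s_i,s_j)+\tfrac{1}{2}\bigl(w(d_k,s_i)+w(d_k,s_j)\bigr)\le w(d_k,s_i)+\tfrac{3}{2}w(s_i,s_j)\le \tfrac{3}{2}\,\mathrm{cost}(k,\{i,j\}).
\]
Summing over $M^*$ delivers $\mathrm{cost}(\MA(2,\mu))\le \tfrac{3}{2}\,\mathrm{cost}(M^*)$, completing the proof.

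There is no real obstacle here: the only thing to be careful about is not to invoke Lemma~\ref{lemma:ma_min} as a single-line aggregate bound (which, combined with $\mathrm{cost}(\MA(2,\mu))\le v_1(M_1)+v_2(M_2)$, only yields a $2$-approximation), but rather to split it into its two constituent inequalities and exploit the $\tfrac{1}{2}$ factor supplied by the identity $\mathrm{cost}(\MA(2,\mu))=v_1(M_1)+\tfrac{1}{2}v_2(M_2)$.
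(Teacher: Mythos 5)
Your proof is correct and follows essentially the same route as the paper's: it rests on the identity $\mathrm{cost}(\MA(2,\mu))=v_1(M_1)+\tfrac{1}{2}v_2(M_2)$, bounds the two terms against $M^*$ via the two constituent inequalities~(\ref{equation_cycleequation1}) and~(\ref{equation_cycleequation2}) inside the proof of \cref{lemma:ma_min}, and finishes with the triangle inequality exactly as in~(\ref{inequation1_sum_s=t}). Your closing remark that the aggregate statement of \cref{lemma:ma_min} alone would only yield a factor $2$ is a fair clarification of what the paper's terse citation of that lemma actually relies on.
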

\begin{proof}
\begin{equation*}
   \begin{split}
 \label{equation_ma2_sum_s=t}
cost&(\text{MA($2,\mu$)})=\sum_{(k,\{i,j\})\in \text{MA}} \min\{w(d_k, s_i) +u_{ij}, w(d_k, s_j)+u_{ji}\}  \ \   \text{(by~(\ref{equation101_u_1}))} \\
&=  \sum_{(k,\{ i,j\})\in MA} (w(s_i,s_j)+ \min\{w(d_k, s_i), w(d_k, s_j)\}) \\
&\leq  \sum_{(k,\{ i,j\})\in M^*} (w(s_i,s_j)+  \frac{1}{2}(w(d_k, s_i)+ w(d_k, s_j))) \\
&\leq  3/2 \sum_{(k,\{ i,j\})\in M^*} cost(k,\{i,j\})  \ \ \text{(by~(\ref{inequation1_sum_s=t})) } \\
&\leq   3/2 \ cost(M^*)
  \end{split}
\end{equation*}
The first inequality holds because $u_{ij}=u_{ji}=w(s_i,s_j)$, and the second inequality inequality follows from~\cref{lemma:ma_min}.
\end{proof}

\begin{lemma}
\label{lem:allapp_lem_6}
$\TA(2)$ is a $4$-approximation algorithm for $\CS_{sum,s=t}$.
\end{lemma}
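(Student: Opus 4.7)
The plan is to mirror the analysis of $\TA(1)$ in Lemma~\ref{lem:alg_ana_402}: first bound $cost(\TA(2))$ by $v_3(M_3)$, then apply Lemma~\ref{lemma:ta_min} to pass to the optimum, and finally bound each resulting summand by $4\,cost(k,\{i,j\})$.

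For the first step, specialize the cost definitions in lines 4--5 of Algorithm~\ref{alg:TA} to the case $s_i=t_i$: one obtains $v_3(\gamma(k),i) = 3\,w(d_k, s_i)$ and $v_3(\delta(k),i) = w(d_k, s_i)$. For any pair $(k,\{i,j\}) \in \TA$, assume without loss of generality that $w(d_k,s_i) \le w(d_k,s_j)$, so $cost(k,\{i,j\}) = w(d_k, s_i) + w(s_i, s_j) \le 2w(d_k, s_i) + w(d_k, s_j)$ by the triangle inequality. The $v_3$-contribution of this pair to $v_3(M_3)$ is either $3w(d_k,s_i)+w(d_k,s_j)$ or $3w(d_k,s_j)+w(d_k,s_i)$, depending on how $M_3$ orients the match, and both expressions are at least $2w(d_k,s_i)+w(d_k,s_j)$. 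Summing over all pairs yields $cost(\TA(2)) \le v_3(M_3)$.

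Next, Lemma~\ref{lemma:ta_min} with $\alpha=2$, specialized to $s=t$, gives
\[
v_3(M_3) \le \sum_{(k,\{i,j\}) \in M^*} \min\{3w(d_k, s_i) + w(d_k, s_j),\ 3w(d_k, s_j) + w(d_k, s_i)\}.
\]
Writing $a = w(d_k, s_i)$ and $b = w(d_k, s_j)$, observe that $\min\{3a+b,\,3b+a\} = 3\min(a,b) + \max(a,b)$, and by~(\ref{inequation2_sum_s=t}) both $\min(a,b)$ and $\max(a,b)$ are bounded by $cost(k,\{i,j\})$. Hence each summand is at most $4\,cost(k,\{i,j\})$, and the bound $cost(\TA(2)) \le 4\,cost(M^*)$ follows.

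The proof is short; the only point requiring care is the first step, since the $v_3$-contribution of a pair depends on which of $i,j$ is matched to $\gamma(k)$ rather than $\delta(k)$. This is handled uniformly by noting that in both orientations the $v_3$-contribution dominates $2w(d_k,s_i)+w(d_k,s_j)$, which in turn dominates $cost(k,\{i,j\})$ by the triangle inequality.
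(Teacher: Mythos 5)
Your proof is correct and follows essentially the same route as the paper's: bound $cost(\TA(2))$ by $v_3(M_3)$, which in the $s=t$ case with $\alpha=2$ is a sum of terms $\min\{3w(d_k,s_i)+w(d_k,s_j),\,3w(d_k,s_j)+w(d_k,s_i)\}$, pass to $M^*$ via the minimality of $M_3$, and bound each summand by $4\max\{w(d_k,s_i),w(d_k,s_j)\}\le 4\,cost(k,\{i,j\})$ using~(\ref{inequation2_sum_s=t}). Your explicit check that both orientations of a pair in $M_3$ dominate $cost(k,\{i,j\})$ is just a slightly more detailed justification of the paper's first inequality.
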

\begin{proof}
\begin{equation*}
   \begin{split}
 \label{equation_ta2_sum_s=t}
cost&(\text{TA($2$)})=  \sum_{(k,\{ i,j\})\in \text{TA}} \min\{w(d_k, s_i)+u_{ij}, w(d_k, s_j)+u_{ji}\} \ \ \text{(by~(\ref{equation101_u_1}))}\\
&\leq  \sum_{(k,\{ i,j\})\in TA} \min\{3 w(d_k, s_i)+ w(d_k, s_j), 3 w(d_k, s_j)+ w(d_k, s_i) \}  \\
&\leq  \sum_{(k,\{ i,j\})\in M^*} \min\{3 w(d_k, s_i)+ w(d_k, s_j), 3 w(d_k, s_j)+ w(d_k, s_i) \}  \\
&\leq  \sum_{(k,\{ i,j\})\in M^*} 4 cost(k,\{i,j\})  \ \ \text{(by~(\ref{inequation2_sum_s=t}))}\\
&\leq   4 \ cost(M^*)
  \end{split}
\end{equation*}
The first inequality holds by $u_{ij}=u_{ji}=w(s_i,s_j)$ and triangle inequality, and the second inequality holds since TA is a minimum weight perfect matching in algorithm TA$(2)$. 
\end{proof}

 \begin{theorem}
\label{the:allapp_the_2}
$\CA(2, \mu)$ is a 10/7-approximation algorithm for $\CS_{sum,s=t}$. 
\end{theorem}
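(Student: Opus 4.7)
The plan is to obtain the $10/7$ ratio as a carefully weighted convex combination of the per-triple bounds implicitly established in the proofs of Lemma~\ref{lem:allapp_lem_5} (for $\MA(2,\mu)$) and Lemma~\ref{lem:allapp_lem_6} (for $\TA(2)$). Since the combined algorithm outputs the cheaper of the two, any inequality of the form $\alpha \cdot cost(\MA(2,\mu)) + \beta \cdot cost(\TA(2)) \le \gamma \cdot cost(M^*)$ yields $cost(\CA(2,\mu)) \le \tfrac{\gamma}{\alpha+\beta} cost(M^*)$, so the game is to choose $\alpha,\beta$ to minimise $\gamma/(\alpha+\beta)$.

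First I would extract the two per-triple estimates that are already inside those proofs. Assume wlog that for $(k,\{i,j\})\in M^*$ we have $cost(k,\{i,j\}) = w(d_k,s_i) + w(s_i,s_j)$, i.e.\ $w(d_k,s_i) \le w(d_k,s_j)$; note that by the triangle inequality, $w(d_k,s_j) \le w(d_k,s_i)+w(s_i,s_j)$. The proof of Lemma~\ref{lem:allapp_lem_5} gives
\[ cost(\MA(2,\mu)) \le \sum_{(k,\{i,j\})\in M^*} \bigl(w(s_i,s_j) + \tfrac{1}{2}(w(d_k,s_i)+w(d_k,s_j))\bigr), \]
and the proof of Lemma~\ref{lem:allapp_lem_6} gives
\[ cost(\TA(2)) \le \sum_{(k,\{i,j\})\in M^*} \bigl(3w(d_k,s_i) + w(d_k,s_j)\bigr). \]

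Next I would take the combination $6\cdot cost(\MA(2,\mu)) + cost(\TA(2))$. Summing the two estimates above with these weights gives, for each triple,
\[ 6w(s_i,s_j) + 3w(d_k,s_i) + 3w(d_k,s_j) + 3w(d_k,s_i) + w(d_k,s_j) = 6w(s_i,s_j) + 6w(d_k,s_i) + 4w(d_k,s_j). \]
Applying $w(d_k,s_j)\le w(d_k,s_i)+w(s_i,s_j)$ bounds this by $10w(d_k,s_i) + 10w(s_i,s_j) = 10\cdot cost(k,\{i,j\})$. Summing over $M^*$ yields $6\cdot cost(\MA(2,\mu)) + cost(\TA(2)) \le 10\cdot cost(M^*)$. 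Since $7\cdot \min\{cost(\MA(2,\mu)),cost(\TA(2))\} \le 6\cdot cost(\MA(2,\mu)) + cost(\TA(2))$, we obtain $cost(\CA(2,\mu)) \le \tfrac{10}{7} cost(M^*)$.

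There is no real obstacle here: the two ingredient bounds are already available in the appendix, and the only nontrivial step is choosing the combination weights $6$ and $1$. These weights arise from the two-constraint LP that equalises the coefficients of $w(d_k,s_i)$ and $w(s_i,s_j)$ in the resulting bound, namely solving $\alpha + 4\beta = \tfrac{3\alpha}{2}+\beta$, which forces $\alpha = 6\beta$ and then gives the ratio $(\alpha+4\beta)/(\alpha+\beta) = 10/7$. Everything else is the triangle inequality and the WLOG reduction already used throughout Section~\ref{sec:approximation_results}.
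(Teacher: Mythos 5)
Your proposal is correct and follows essentially the same route as the paper's proof: the same per-triple bounds from Lemmas~\ref{lem:allapp_lem_5} and~\ref{lem:allapp_lem_6}, the same weighted combination $6\,cost(\MA(2,\mu)) + cost(\TA(2)) \le 10\,cost(M^*)$, and the same final application of the triangle inequality to $w(d_k,s_j)$. The only addition is your explicit derivation of the weights $6$ and $1$ from equalising coefficients, which the paper leaves implicit.
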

\begin{proof}
\begin{equation*}
   \begin{split}
 \label{equation_ca2_sum_s=t}
&7 cost(\text{CA($2,\mu$)})\le  6cost(\text{MA($2,\mu$)}) + cost(\text{TA($2$)})  \\
&\leq  \sum_{(k,\{ i,j\})\in M^*} 6(w(s_i,s_j)+(w(d_k, s_i), w(d_k, s_j))/2) \\
& \ \ \ \ +\sum_{(k,\{ i,j\})\in M^*} \min\{3w(d_k, s_i)+ w(d_k, s_j), 3w(d_k, s_j)+ w(d_k, s_i)\}\\
&\leq  \sum_{(k,\{ i,j\})\in M^*} (6w(s_i,s_j)+6\min\{w(d_k, s_i), w(d_k, s_j)\} + 4\max\{w(d_k, s_i), w(d_k, s_j)\}) \\
&\leq  \sum_{(k,\{ i,j\})\in M^*} (10w(s_i,s_j)+10\min\{w(d_k, s_i), w(d_k, s_j)\} )\ \ \text{(by the triangle inequality)}\\
&\leq  \sum_{(k,\{ i,j\})\in M^*} 10cost(k,\{i,j\}) \ \ \text{(by~(\ref{equation101}))}   \\
&\leq   10 \ cost(M^*)
  \end{split}
\end{equation*}
The second inequality holds since MA is a minimum weight perfect matching in algorithm MA$(2, \mu)$ and TA is a minimum weight perfect matching in algorithm TA$(2)$.
\end{proof}
 
\subsection{Remaining proofs of results for  $\CS_{lat}$}\label{A5_lat}
In $\CS_{lat}$, for each $(k,\{i,j\})$, we have:
\begin{eqnarray}\label{inequation1_lat} wait(k,\{i,j\})= \min\{2w(d_k, s_i)+ \mu_{ij}, 2w(d_k, s_j)+ \mu_{ji}\}, \mbox{ and}, \\
\label{inequation2_lat} wait(k,\{i,j\})\ge \max\{ w(d_k, s_i, t_i)+ w(d_k, s_j, t_j), w(d_k, t_i)+w(d_k, t_j)\}.\end{eqnarray}

\begin{lemma}
\label{lem:allapp_lem_7}
$\MA(2, \mu)$ is a $2$-approximation algorithm for $\CS_{lat}$.  Moreover, there exists an instance $I$ for which  wait$(\MA(I))$=2 wait$(M^*(I))$.
\end{lemma}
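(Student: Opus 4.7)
The plan is to invoke the two structural lemmas about MA with the specific parameters $\alpha=2$, $v=\mu$, and then reduce the per-pair upper bound to Lemma~\ref{claim_uu}. First, instantiating Lemma~\ref{lemma:ma} with these parameters, together with~(\ref{equation102_u_1}) and~(\ref{eq:totallatency}), identifies the latency of MA's output with $v_1(M_1)+v_2(M_2)$. Next, Lemma~\ref{lemma:ma_min} applied to $M = M^*$ yields
$v_1(M_1)+v_2(M_2) \le \sum_{(k,\{i,j\})\in M^*} \bigl( w(d_k,s_i)+w(d_k,s_j)+\frac{\mu_{ij}+\mu_{ji}}{2} \bigr).$

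The crux is then a per-pair inequality. Assume WLOG that $wait(k,\{i,j\})=2w(d_k,s_i)+\mu_{ij}$ for each $(k,\{i,j\}) \in M^*$ (the other case being symmetric). By the triangle inequality $w(d_k,s_j)\le w(d_k,s_i)+w(s_i,s_j)$ and by Lemma~\ref{claim_uu}, which gives $\frac{\mu_{ij}+\mu_{ji}}{2}+w(s_i,s_j)\le 2\mu_{ij}$, one obtains
$w(d_k,s_i)+w(d_k,s_j)+\frac{\mu_{ij}+\mu_{ji}}{2} \le 2w(d_k,s_i)+2\mu_{ij} = 2\,wait(k,\{i,j\}).$
Summing over $M^*$ then delivers $wait(\MA(2,\mu)) \le 2\,wait(M^*)$. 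The main obstacle is identifying the correct inequality on $\mu$ to slot into the per-pair bound; but Lemma~\ref{claim_uu} (already proved in the appendix) is tailored exactly for this, after which the remaining manipulation is routine bookkeeping.

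For the tightness claim, I would exhibit a small instance whose tie-breaking can be exploited. Take $n=2$ with $d_{k_1}=A$ and $d_{k_2}=C$; let $s_i=t_i$ for $i=1,2,3,4$ be located at $A,B,B,C$ respectively; and set $w(A,B)=w(B,C)=1$ and $w(A,C)=2$. The optimal allocation $\{(k_1,\{1,2\}),(k_2,\{3,4\})\}$ has total latency $2$. However, all three perfect matchings of the requests have the same weight $v_1=2$ (namely $\{\{1,2\},\{3,4\}\}$, $\{\{1,3\},\{2,4\}\}$, and $\{\{1,4\},\{2,3\}\}$), so MA can select $\{\{1,4\},\{2,3\}\}$; the second step then assigns the pair $\{2,3\}$ to some car at cost $2$, producing an allocation with total latency $4$ and ratio exactly $2$, as required.
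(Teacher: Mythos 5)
Your proof is correct and follows essentially the same route as the paper: identify $wait(\MA(2,\mu))$ with $v_1(M_1)+v_2(M_2)$ via Lemma~\ref{lemma:ma}, bound it against $M^*$ via Lemma~\ref{lemma:ma_min}, and close the per-pair gap with the triangle inequality together with the bound $\frac{\mu_{ij}+\mu_{ji}}{2}+w(s_i,s_j)\le 2\mu_{ij}$ from Lemma~\ref{claim_uu} (the paper carries the $\min$ through where you take a WLOG, which is equivalent). Your tight instance, while not the one the paper cites (a subgraph of Figure~\ref{fig405}), is structurally the same kind of tie-breaking construction and checks out: all three first-stage matchings have weight $2$, the adversarial choice $\{\{1,4\},\{2,3\}\}$ yields total latency $4$ against an optimum of $2$.
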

\begin{proof}
\begin{equation}
   \begin{split}
 \label{equation_ma2_lat}
wait&(\text{MA$(2, \mu)$})=\sum_{(k,\{i,j\})\in \text{MA}} \min\{2w(d_k,  s_i) + \mu_{ij}, 2w(s_k, s_j)+\mu_{ji}\} \ \  \text{(by~(\ref{equation102_u_1}))}  \\
&\leq  \sum_{(k,\{ i,j\})\in M^*} \frac{2w(d_k, s_i)+ \mu_{ij}+ 2w(d_k, s_j)+ \mu_{ji}}{2}\\
&\leq  \sum_{(k,\{ i,j\})\in M^*}   \frac{4\min \{w(d_k, s_i), w(d_k, s_j)\}+ \mu_{ij}+ 2w(s_i, s_j)+ \mu_{ji}}{2} \\
&\leq  \sum_{(k,\{ i,j\})\in M^*}  \min\{4w(d_k, s_i)+2 \mu_{ij}, 4w(d_k, s_j)+2 \mu_{ji} \} \ \ \text{(by~\cref{claim_uu})} \\
&\leq  \sum_{(k,\{ i,j\})\in M^*} 2wait(k,\{i,j\})  \ \ \text{(by~(\ref{inequation1_lat}))} \\
&\leq   2 \ wait(M^*)
  \end{split}
\end{equation}

The first inequality holds since MA is a minimum weight perfect matching in algorithm MA$(2, \mu)$, and by the triangle inequality.
 To see that equality  may hold in (\ref{equation_ma2_lat}), consider the subgraph induced by the nodes $(k_1,s_3)$, $(s_1,s_2)$, and $(k_2,s_4)$ in \cref{fig405}.
\end{proof}

\begin{lemma}
\label{lem:allapp_lem_8}
$\TA(2)$ is a $2$-approximation algorithm for $\CS_{lat}$.  Moreover, there exists an instance $I$ for which  wait$(\TA(I))$=2 wait$(M^*(I))$.
\end{lemma}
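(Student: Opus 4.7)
The plan is to follow the template used for Lemma~\ref{lem:alg_ana_402} (TA$(1)$ for $\CS_{sum}$), now with $\alpha=2$ and latency replacing travel time. I will establish the chain
\begin{align*}
wait(\TA(2)) &\le v_3(M_3) \\
&\le \sum_{(k,\{i,j\})\in M^*}\min\{2w(d_k,s_i,t_i)+w(t_i,d_k,s_j,t_j),\ 2w(d_k,s_j,t_j)+w(t_j,d_k,s_i,t_i)\} \\
&\le 2\,wait(M^*).
\end{align*}
The first inequality follows from Lemma~\ref{lemma:ta}: each summand in $v_3(M_3)$ equals the total latency of the concrete serving order in which the car picks up and drops off the first request, returns to $d_k$, and only then serves the second request. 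By the triangle inequality this is at least the latency of the ordering $s_i,t_i,s_j,t_j$ (without the detour through $d_k$), which is one of the six orderings enumerated in~(\ref{equation102}), so $wait(k,\{i,j\})$ is bounded above by the $v_3$-summand. The second inequality is Lemma~\ref{lemma:ta_min} applied with $\alpha=2$.

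The crux is the third inequality: for each $(k,\{i,j\})\in M^*$, I claim the $\min$-branch is at most $2\,wait(k,\{i,j\})$. Write $a = w(d_k,s_i,t_i)$ and $b = w(d_k,s_j,t_j)$, and assume without loss of generality $a\le b$. Then the first branch of the $\min$ equals $2a + w(d_k,t_i) + b$, and the triangle inequality gives $w(d_k,t_i) \le w(d_k,s_i)+w(s_i,t_i) = a$, so the branch is at most $3a + b \le 2(a+b)$ (using $a\le b$). By~(\ref{inequation2_lat}), $wait(k,\{i,j\}) \ge w(d_k,s_i,t_i)+w(d_k,s_j,t_j) = a+b$, hence $2(a+b) \le 2\,wait(k,\{i,j\})$. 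Summing over $M^*$ yields the desired $wait(\TA(2)) \le 2\,wait(M^*)$.

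For tightness I would construct a small instance---two cars and four requests, patterned on the top subgraph of Figure~\ref{fig405} with the ``off-edge'' distance inflated suitably---so that the minimum-weight perfect matching produced by TA is forced to pair two requests with distant drop-off locations, while the optimal allocation pairs nearby requests and achieves half the total latency. The main obstacle is this tightness part: the distances must be chosen, and ties broken, so that TA cannot avoid the bad matching even though all minimum-weight matchings share the same $v_3$-value. The upper-bound direction, by contrast, is a clean chain of triangle-inequality reductions once the summand-wise bound above is established.
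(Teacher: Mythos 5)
Your upper-bound chain is correct and is essentially the paper's own argument: the paper also goes $wait(\TA(2))\le v_3(M_3)\le\sum_{M^*}\min\{2w(d_k,s_i,t_i)+w(t_i,d_k,s_j,t_j),\,2w(d_k,s_j,t_j)+w(t_j,d_k,s_i,t_i)\}\le 2\,wait(M^*)$. The only difference is in the last step: the paper bounds the $\min$ by the average $\tfrac{1}{2}(3w(d_k,s_i,t_i)+3w(d_k,s_j,t_j)+w(t_i,d_k)+w(t_j,d_k))$ and then invokes both parts of the lower bound~(\ref{inequation2_lat}), whereas you select the branch with the smaller $w(d_k,s_\cdot,t_\cdot)$ and get $3a+b\le 2(a+b)$ using only the first part of~(\ref{inequation2_lat}). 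Both are valid; yours is marginally more economical.

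The genuine gap is the tightness claim, which is part of the statement and which you explicitly leave unconstructed. The worry you raise about tie-breaking is not actually an obstacle under the paper's convention: throughout (see the proofs of \cref{the:alg_ana_1,the:alg_ana_3,the:Pl1_03}) TA is only required to output \emph{some} minimum-weight perfect matching, so it suffices to exhibit one bad optimal matching of $G_3$. The paper's instance is the \emph{bottom} triangle of \cref{fig405} (you point at the top one, which is the MA worst case): requests $5,6$ at one point, $7,8$ at another, cars $3,4$ at a midpoint at distance $1$ from each, with $s=t$ for every request. Then every edge of $G_3$ has $v_3(\gamma(k),i)=3$ and $v_3(\delta(k),i)=1$, so \emph{all} perfect matchings of $G_3$ are minimum; in particular TA may assign $\{5,7\}$ to car $3$ and $\{6,8\}$ to car $4$, giving $wait=2(2+w(s_5,s_7))=8$, while the optimum pairs co-located requests for $wait(M^*)=4$. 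Without writing down such an instance (and verifying that the bad matching is among the minimizers), the ``Moreover'' half of the lemma remains unproved.
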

\begin{proof}
\begin{equation}
   \begin{split}
 \label{equation_ta2_lat}
&wait(\text{TA($2$)})=\sum_{(k,\{i,j\})\in \text{TA}} \min\{2w(d_k,  s_i) + \mu_{ij}, 2w(s_k, s_j)+\mu_{ji}\}
 \ \  \text{(by~(\ref{equation102_u_1}))}  \\
&\leq  \sum_{(k,\{ i,j\})\in M^*} \min\{2 w(d_k, s_i, t_i)+ w(t_i, d_k, s_j, t_j),2 w(d_k, s_j, t_j)+ w(t_j, d_k, s_i, t_i) \}  \\
&\leq  \sum_{(k,\{ i,j\})\in M^*} \frac{3 w(d_k, s_i, t_i)+3 w(d_k, s_j, t_j)+ w(t_i, d_k)+ w(t_j, d_k)}{2}  \\
&\leq  \sum_{(k,\{ i,j\})\in M^*} 2wait(k,\{i,j\})  \ \ \text{(by~(\ref{inequation2_lat}))}\\
&\leq   2 \ wait(M^*)
  \end{split}
\end{equation}
 
The first inequality holds since TA is a minimum weight perfect matching in algorithm TA$(2)$ and by the triangle inequality. 
To see that equality may hold in (\ref{equation_ta2_lat}), consider the subgraph induced by the nodes $(s_5,s_6)$, $(k_3,k_4)$, and $(s_7,s_8)$ in \cref{fig405}.  \end{proof}

 \begin{lemma}
\label{claim_uuu}
For each request pair $\{i,j\} \in R$, we have
\[u_{ij}+u_{ji}+2w(s_i, s_j)\le \min\{4\mu_{ij}, 4\mu_{ji}\}.\]
\end{lemma}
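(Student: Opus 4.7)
The plan is to mirror the case analysis used in \cref{claim_uu}: split according to which of the three branches in the definition of $\mu_{ij}$ in (\ref{equation102_u}) attains the minimum, then upper bound each of $u_{ij}$, $u_{ji}$, and $w(s_i,s_j)$ by a combination of the edge lengths that make up $\mu_{ij}$. Showing $u_{ij}+u_{ji}+2w(s_i,s_j)\le 4\mu_{ji}$ follows by symmetry, so only the bound against $4\mu_{ij}$ must be worked out.

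In each case the key observation is that $u_{ij}$ can be directly bounded by the path-length portion $w(s_i,s_j,t_i,t_j)$, $w(s_i,s_j,t_j,t_i)$, or $w(s_i,t_i,s_j,t_j)$ that appears in the corresponding branch of $\mu_{ij}$, because this quantity is one of the three candidates in the definition~(\ref{equation101_u}) of $u_{ij}$. More concretely:
\begin{itemize}
\item If $\mu_{ij}=w(s_i,t_i)+w(s_i,t_i,s_j,t_j)=2w(s_i,t_i)+w(t_i,s_j)+w(s_j,t_j)$, then $u_{ij}\le w(s_i,t_i)+w(t_i,s_j)+w(s_j,t_j)$.
\item If $\mu_{ij}=2w(s_i,s_j)+2w(s_j,t_i)+w(t_i,t_j)$, then $u_{ij}\le w(s_i,s_j)+w(s_j,t_i)+w(t_i,t_j)$.
\item If $\mu_{ij}=2w(s_i,s_j)+2w(s_j,t_j)+w(t_j,t_i)$, then $u_{ij}\le w(s_i,s_j)+w(s_j,t_j)+w(t_j,t_i)$.
\end{itemize}

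The trickier term is $u_{ji}$, which is not a substring of $\mu_{ij}$. I will use the straightforward candidate $u_{ji}\le w(s_j,t_j)+w(s_j,s_i)+w(s_i,t_i)$ from~(\ref{equation101_u}) and then apply the triangle inequality to rewrite the ``missing'' edges $w(s_i,t_i)$ (in cases 2 and 3) or $w(s_j,s_i)$ (in case 1) along the path whose length coincides with $\mu_{ij}$. A similar triangle-inequality estimate $w(s_i,s_j)\le w(s_i,t_i)+w(t_i,s_j)$ (in case 1) or the trivial bound (in cases 2 and 3) handles the extra $2w(s_i,s_j)$ term. Summing the three contributions yields in each case an expression of the form $\alpha\cdot A+\beta\cdot B+\gamma\cdot C$ where $A,B,C$ are the three edge lengths making up $\mu_{ij}$ and $\alpha,\beta,\gamma\le 4\cdot(\text{coefficient in }4\mu_{ij})$; a quick check of the three cases above shows that the largest coefficient one encounters is at most $5$ against $8$, or $2$ against $4$, so the inequality $\le 4\mu_{ij}$ is comfortable.

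The main (only) obstacle is purely bookkeeping: making sure that after the triangle-inequality substitutions, every remaining edge appears with a coefficient no larger than what $4\mu_{ij}$ provides. This is routine but must be done carefully for each of the three cases, since the ``worst'' substitution (the one bounding the $w(s_i,t_i)$ summand of $u_{ji}$) introduces the longest detour and therefore absorbs most of the available slack.
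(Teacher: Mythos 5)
Your overall strategy is the same as the paper's: case analysis on which branch of~(\ref{equation102_u}) attains $\mu_{ij}$, bounding $u_{ij}$ by the candidate path in~(\ref{equation101_u}) that aligns with that branch, and then absorbing $u_{ji}$ and $2w(s_i,s_j)$ via the triangle inequality into the three edge lengths making up $\mu_{ij}$; the resulting coefficients ($5$ against $8$, etc.) do leave the slack you describe. However, there is one step that fails as stated: the bound $u_{ji}\le w(s_j,t_j)+w(s_j,s_i)+w(s_i,t_i)$ is not a ``candidate from~(\ref{equation101_u})'' and is false in general. The three candidates for $u_{ji}$ are $w(s_j,s_i,t_j,t_i)$, $w(s_j,s_i,t_i,t_j)$ and $w(s_j,t_j,s_i,t_i)$; the last of these equals $w(s_j,t_j)+w(t_j,s_i)+w(s_i,t_i)$, with middle edge $w(t_j,s_i)$ rather than $w(s_j,s_i)$, and no reordering of the four locations produces your expression. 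For a concrete counterexample take collinear points $s_j=0$, $s_i=1$, $t_i=2$, $t_j=-M$ with $M\ge 2$: your right-hand side is $M+2$, while the three candidates evaluate to $2M+4$, $M+4$ and $2M+2$, so $u_{ji}=M+4>M+2$. Since this bound is invoked in all three cases, the chain of inequalities as proposed is not valid.

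The repair is exactly what the paper does: choose the candidate for $u_{ji}$ case by case so that it shares edges with the path realizing $\mu_{ij}$, namely $w(s_j,t_j,s_i,t_i)$ when $\mu_{ij}=2w(s_i,t_i)+w(t_i,s_j)+w(s_j,t_j)$, $w(s_j,s_i,t_i,t_j)$ when $\mu_{ij}=2w(s_i,s_j)+2w(s_j,t_i)+w(t_i,t_j)$, and $w(s_j,s_i,t_j,t_i)$ when $\mu_{ij}=2w(s_i,s_j)+2w(s_j,t_j)+w(t_j,t_i)$. After one triangle-inequality substitution for the single misaligned edge in each case, the totals come to $5w(s_i,t_i)+4w(t_i,s_j)+3w(s_j,t_j)$, $5w(s_i,s_j)+2w(s_j,t_i)+2w(t_i,t_j)$ and $5w(s_i,s_j)+2w(s_j,t_j)+2w(t_j,t_i)$ respectively, each dominated by $4\mu_{ij}$; the bound against $4\mu_{ji}$ then indeed follows by the symmetry of the left-hand side, as you say. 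So the architecture of your argument is right, but the uniform bound on $u_{ji}$ must be replaced by these case-dependent choices before the bookkeeping can be carried out.
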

\begin{proof}
We first prove $u_{ij}+u_{ji}+2w(s_i, s_j)\le 4\mu_{ij}$ by distinguishing three cases.

\textit{Case 1} $ \mu_{ij}=2w(s_i, t_i)+w(t_i,s_j, t_j)$. Observe that $u_{ij}\le w(s_i, t_i, s_j, t_j)$ and $u_{ji}\le w(s_j, t_j, s_i, t_i) $. By the triangle inequality $ w(s_i, t_j)\le w(s_i, t_i, s_j, t_i) $ and $ w(s_i, s_j)\le w(s_i, t_i, s_j) $, we have $u_{ij}+u_{ji}+2w(s_i, s_j) \le 5w(s_i, t_i)+ 4w(t_i, s_j)+2w(s_j, t_j) \le 4 \mu_{ij}$.

\textit{Case 2} $ \mu_{ij}=2w(s_i,s_j,t_i)+w(t_i, t_j)$. Observe that $u_{ij}\le w(s_i, s_j, t_i, t_j)$ and $u_{ji}\le w(s_j, s_i, t_i, t_j) $. By the triangle inequality $ w(s_i, t_i)\le w(s_i, t_j, t_i) $, we have $u_{ij}+u_{ji}+2w(s_i, s_j)  \le 5w(s_i, s_j)+ 2w(s_j, t_i)+2w(t_i, t_j) \le 4 \mu_{ij}$.

\textit{Case 3} $ \mu_{ij}=2w(s_i,s_j, t_j)+w(t_j, t_i)$.  Observe that $u_{ij}\le w(s_i, s_j, t_j, t_i)$ and $u_{ji}\le w(s_j, s_i, t_j, t_i) $. By the triangle inequality $ w(s_i, t_j)\le w(s_i, s_j, t_j) $, we have $u_{ij}+u_{ji}+2w(s_i, s_j) \le 5w(s_i, s_j)+ 2w(s_j, t_j)+2w(t_j, t_i) \le 4 \mu_{ij}$.

The proof of   $u_{ij}+u_{ji}+2w(s_i, s_j)\le 4\mu_{ji}$ follows from symmetry.
 \end{proof}

\begin{lemma}
\label{lem:allapp_lem_9}
$\MA(1,u)$ is a $3$-approximation algorithm for $\CS_{lat}$.
\end{lemma}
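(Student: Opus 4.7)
The plan is to bound $wait(\text{MA}(1,u))$ in three stages: first convert the latency objective into the tour-length objective the algorithm actually minimizes; second, invoke Lemmas~\ref{lemma:ma} and~\ref{lemma:ma_min}; third, close the loop per pair via a new combinatorial inequality relating $u$ and $\mu$.

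For the first two stages I would use $\mu_{ij}\le 2u_{ij}$, which holds because in each of the three orderings in~(\ref{equation102_u}) the latency of the first drop-off is at most the tour length, so the two summands of the ordering are each at most the corresponding tour in~(\ref{equation101_u}). Combined with~(\ref{equation102_u_1}) this gives $wait(k,\{i,j\})\le 2\min\{w(d_k,s_i)+u_{ij},\,w(d_k,s_j)+u_{ji}\}$ for every pair. Summing over the pairs of MA and applying Lemma~\ref{lemma:ma}, then Lemma~\ref{lemma:ma_min} with $\alpha=1$, $v=u$, yields
\[
wait(\text{MA}(1,u))\le 2\bigl(v_1(M_1)+v_2(M_2)\bigr)\le \sum_{(k,\{i,j\})\in M^*}\bigl(w(d_k,s_i)+w(d_k,s_j)+u_{ij}+u_{ji}\bigr).
\]

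For the per-pair closing, fix $(k,\{i,j\})\in M^*$ and assume wlog that $wait(k,\{i,j\})=2w(d_k,s_i)+\mu_{ij}$. Using $w(d_k,s_j)\le w(d_k,s_i)+w(s_i,s_j)$, the summand becomes at most $2w(d_k,s_i)+w(s_i,s_j)+u_{ij}+u_{ji}$. If one can show
\[
w(s_i,s_j)+u_{ij}+u_{ji}\le 3\mu_{ij},
\]
then the summand is at most $2w(d_k,s_i)+3\mu_{ij}\le 6w(d_k,s_i)+3\mu_{ij}=3\,wait(k,\{i,j\})$, and summation over $M^*$ delivers the $3$-approximation; the analogous inequality $w(s_i,s_j)+u_{ij}+u_{ji}\le 3\mu_{ji}$ handles the symmetric wlog choice.

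The main obstacle is establishing the combinatorial inequality $w(s_i,s_j)+u_{ij}+u_{ji}\le 3\mu_{ij}$, which I would prove by case analysis on which of the three terms in~(\ref{equation102_u}) attains $\mu_{ij}$. In the symmetric Cases~1 and~2, bounding $u_{ij}$ by the matching tour in~(\ref{equation101_u}), $u_{ji}$ by a reverse ordering followed by one triangle-inequality step (expanding $w(s_i,t_i)$ or $w(s_i,t_j)$ along the tour), and $w(s_i,s_j)$ directly, one obtains coefficients on the four underlying pairwise distances that termwise dominate those in $3\mu_{ij}$. The tight case is Case~3, where $\mu_{ij}=2w(s_i,t_i)+w(t_i,s_j)+w(s_j,t_j)$; here I would bound $u_{ji}$ along the tour $s_j,t_j,s_i,t_i$ by iterating $w(t_j,s_i)\le w(t_j,s_j)+w(s_j,t_i)+w(t_i,s_i)$ and bound $w(s_i,s_j)\le w(s_i,t_i)+w(t_i,s_j)$, which gives exactly coefficient $4$ on $w(s_i,t_i)$ versus the $6$ available in $3\mu_{ij}$. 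The structurally similar~\cref{claim_uuu} only proves $u_{ij}+u_{ji}+2w(s_i,s_j)\le 4\mu_{ij}$, so the careful choice of the reverse tour used to bound $u_{ji}$ in Case~3 is precisely what allows tightening the right-hand side from $4\mu_{ij}$ to $3\mu_{ij}$ and delivering the claimed ratio.
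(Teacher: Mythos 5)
Your proof is correct and follows essentially the same route as the paper's: both reduce $wait(\MA(1,u))$ to $\sum_{(k,\{i,j\})\in M^*}(w(d_k,s_i)+w(d_k,s_j)+u_{ij}+u_{ji})$ via Lemmas~\ref{lemma:ma} and~\ref{lemma:ma_min} applied to $M^*$, and then finish with a per-pair inequality relating the $u$'s to the $\mu$'s. Your inequality $w(s_i,s_j)+u_{ij}+u_{ji}\le 3\min\{\mu_{ij},\mu_{ji}\}$ (whose three-case analysis, including the tight coefficient count in the case $\mu_{ij}=2w(s_i,t_i)+w(t_i,s_j)+w(s_j,t_j)$, checks out) is precisely the bound the paper's displayed chain requires, whereas the paper cites \cref{claim_uuu}, which only yields $u_{ij}+u_{ji}+2w(s_i,s_j)\le 4\min\{\mu_{ij},\mu_{ji}\}$; so your argument makes that step explicit and rigorous.
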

\begin{proof}
\begin{equation}
   \begin{split}
 \label{equation_ma1_lat}
&wait(\text{MA($1,u$)})= \sum_{(k,\{i,j\})\in \text{MA}} \min\{2w(d_k,  s_i) + \mu_{ij}, 2w(s_k, s_j)+\mu_{ji}\}  \ \  \text{(by~(\ref{equation102_u_1}))}  \\
&\leq  \sum_{(k,\{ i,j\})\in M^*} \max\{\mu_{ij}, \mu_{ji}\}+\min\{2w(d_k,  s_i), 2w(s_k, s_j)\} \\
&\leq  \sum_{(k,\{ i,j\})\in M^*} (u_{ij}+u_{ji}+ w(d_k, s_i)+ w(d_k, s_j)) \\
&\leq  \sum_{(k,\{ i,j\})\in M^*} (u_{ij}+u_{ji}+ 2\min \{w(d_k, s_i), w(d_k, s_j)\}+ w(s_i, s_j)) \\
&\leq  \sum_{(k,\{ i,j\})\in M^*} (2\min\{w(d_k, s_i), w(d_k, s_j)\}+3\min\{\mu_{ij},\mu_{ji}\}) \ \ \text{(by~\cref{claim_uuu}) }\\
&\leq  \sum_{(k,\{ i,j\})\in M^*}  3 wait(k,\{i,j\}) \ \ \text{(by~(\ref{inequation1_lat_s=t}) and~(\ref{inequation2_lat_s=t})) }  \\
&\leq   3 \ wait(M^*)
  \end{split}
\end{equation}
The second inequality follows from~\cref{lemma:ma_min}, and  the second inequality holds by the triangle inequality.
\end{proof}

\begin{lemma}
\label{lem:allapp_lem_10}
$\TA(1)$ is a $3$-approximation algorithm for $\CS_{lat}$.
\end{lemma}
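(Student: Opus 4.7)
The plan is to establish the chain $wait(\TA(1)) \leq 2\,v_3(M_3) \leq 3\,wait(M^*)$, in the spirit of the proof of Theorem~\ref{the:alg_ana_2}.

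\emph{Stage 1:} $wait(\TA(1)) \leq 2\,v_3(M_3)$. For each pair $(k,\{i,j\}) \in M_4$, the schedule $d_k \to s_i \to t_i \to d_k \to s_j \to t_j$ is feasible and has total latency $2w(d_k,s_i,t_i)+w(t_i,d_k)+w(d_k,s_j,t_j)$, which (as all weights are non-negative) is at most $2[w(d_k,s_i,t_i)+w(t_i,d_k)+w(d_k,s_j,t_j)] = 2[v_3(\gamma(k),i)+v_3(\delta(k),j)]$. The analogous bound with $i$ and $j$ swapped holds by symmetry; since $wait(k,\{i,j\})$ is the minimum over all feasible schedules, $wait(k,\{i,j\}) \leq 2\min\{v_3(\gamma(k),i)+v_3(\delta(k),j),\, v_3(\gamma(k),j)+v_3(\delta(k),i)\}$. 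Summing over $M_4$ and applying Lemma~\ref{lemma:ta} gives the stage-1 bound.

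\emph{Stage 2:} $v_3(M_3) \leq \tfrac{3}{2}\,wait(M^*)$. By Lemma~\ref{lemma:ta_min} applied to $M^*$, it suffices to show that for every $(k,\{i,j\}) \in M^*$,
\[\min\{w(d_k,s_i,t_i)+w(t_i,d_k,s_j,t_j),\ w(d_k,s_j,t_j)+w(t_j,d_k,s_i,t_i)\} \leq \tfrac{3}{2}\,wait(k,\{i,j\}).\]
Assuming without loss of generality that $wait(k,\{i,j\})=2w(d_k,s_i)+\mu_{ij}$, I need the $\min$ to be bounded by $3w(d_k,s_i)+\tfrac{3}{2}\mu_{ij}$. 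I would dispatch this by a case analysis on the realiser of $\mu_{ij}$, in the style of Appendix~\ref{A2}. For $\mu_{ij}=2w(s_i,t_i)+w(t_i,s_j)+w(s_j,t_j)$ or $\mu_{ij}=2w(s_i,s_j)+2w(s_j,t_i)+w(t_i,t_j)$, routine applications of the triangle inequality (namely $w(t_i,d_k) \leq w(s_i,t_i)+w(d_k,s_i)$, $w(d_k,s_j) \leq w(d_k,s_i)+w(s_i,s_j)$, and an analogous relay for $w(s_j,t_j)$) show that the \emph{first} expression inside the $\min$ already satisfies the required inequality.

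The subtle case is $\mu_{ij}=2w(s_i,s_j)+2w(s_j,t_j)+w(t_j,t_i)$: here the first expression inside the $\min$ produces a factor $2$ on $w(t_j,t_i)$, whereas $\tfrac{3}{2}\mu_{ij}$ supplies only $\tfrac{3}{2}w(t_j,t_i)$, so that bound fails. The fix is to exploit the freedom in the $\min$ and bound the \emph{second} expression instead, using $w(t_j,d_k) \leq w(s_j,t_j)+w(s_i,s_j)+w(d_k,s_i)$ and $w(s_i,t_i) \leq w(s_i,s_j)+w(s_j,t_j)+w(t_j,t_i)$; this yields $3w(d_k,s_i)+3w(s_i,s_j)+3w(s_j,t_j)+w(t_j,t_i) \leq 3w(d_k,s_i)+\tfrac{3}{2}\mu_{ij}$. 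Chaining the two stages gives $wait(\TA(1)) \leq 3\,wait(M^*)$; identifying this third subcase as the obstructive one and switching $\min$-terms is the only non-routine step, the rest being triangle-inequality bookkeeping in the same style as the earlier appendices.
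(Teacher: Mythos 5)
Your proof is correct and follows essentially the same route as the paper's: the paper's displayed chain for this lemma is exactly your two stages (a factor-$2$ bound of $wait(\mathrm{TA}(1))$ by $v_3(M_3)$, then Lemma~\ref{lemma:ta_min} plus a per-pair bound of the form $2\min\{w(d_k,s_i,t_i,d_k,s_j,t_j),\,w(d_k,s_j,t_j,d_k,s_i,t_i)\}\le 3\,wait(k,\{i,j\})$ justified only by ``the triangle inequality and case analysis''). Your contribution is to actually carry out that case analysis, correctly spotting that in the subcase $\mu_{ij}=2w(s_i,s_j)+2w(s_j,t_j)+w(t_j,t_i)$ one must bound the second term of the $\min$ rather than the first.
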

\begin{proof}
\begin{equation}
   \begin{split}
 \label{equation_ta1_lat}
&wait(\text{TA($1$)})= \sum_{(k,\{i,j\})\in \text{TA}} \min\{2w(d_k,  s_i) + \mu_{ij}, 2w(s_k, s_j)+\mu_{ji}\}
 \ \  \text{(by~(\ref{equation102_u_1}))}  \\
&\leq  \sum_{(k,\{ i,j\})\in M^*} 2 \min\{w(d_k, s_i, t_i, d_k, s_j, t_j), w(d_k, s_j, t_j, d_k, s_i, t_i) \}  \\
&\leq  \sum_{(k,\{ i,j\})\in M^*} 3 wait(k,\{i,j\})  \ \ \text{(by the triangle inequality and case analysis)}\\
&\leq   3 \ wait(M^*)
  \end{split}
\end{equation}
The first inequality holds since TA is a minimum weight perfect matching in algorithm TA$(1)$ and by the triangle inequality.
\end{proof}

 \begin{theorem}
\label{the:allapp_the_3}
$\CA(1, u)$ is a 3-approximation algorithm for $\CS_{lat}$. 
\end{theorem}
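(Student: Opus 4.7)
The plan is to derive Theorem~\ref{the:allapp_the_3} essentially for free from the two individual analyses given in the previous two lemmas. By construction, $\CA(1,u)$ runs both $\MA(1,u)$ and $\TA(1)$ and outputs the solution with the smaller total latency, so I immediately have
\[
\text{wait}(\CA(1,u)) \le \min\{\text{wait}(\MA(1,u)),\, \text{wait}(\TA(1))\}.
\]
Lemma~\ref{lem:allapp_lem_9} gives $\text{wait}(\MA(1,u)) \le 3\,\text{wait}(M^*)$, and Lemma~\ref{lem:allapp_lem_10} gives $\text{wait}(\TA(1)) \le 3\,\text{wait}(M^*)$. Either bound on its own suffices; chaining either with the displayed inequality yields $\text{wait}(\CA(1,u)) \le 3\,\text{wait}(M^*)$, which is the desired approximation guarantee.

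Because the argument reduces to invoking the two preceding lemmas, I do not expect any genuine obstacle. The only conceptual point worth flagging is that, unlike the other combined-algorithm theorems in the paper (for instance \cref{the:alg_ana_2}, \cref{the:alg_ana_3}, and \cref{the:Pl1_03}), here the combined ratio does not strictly improve upon the best of the two individual ratios: the guarantee $3$ already holds for $\TA(1)$ alone (and, as shown in Lemma~\ref{lem:allapp_lem_9}, also for $\MA(1,u)$). Consequently, the proof does not call for any new estimate that blends $v_1(M_1)+v_2(M_2)$ with $v_3(M_3)$ the way those other theorems do; a short two-line calculation, citing Lemmas~\ref{lem:allapp_lem_9} and~\ref{lem:allapp_lem_10}, is enough.
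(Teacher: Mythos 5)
Your proposal is correct and matches the paper's own (one-line) argument exactly: the paper likewise cites the bounds from Lemmas~\ref{lem:allapp_lem_9} and~\ref{lem:allapp_lem_10} and observes that $\mathrm{wait}(\CA(1,u))=\min\{\mathrm{wait}(\MA(1,u)),\mathrm{wait}(\TA(1))\}\le 3\,\mathrm{wait}(M^*)$. Your observation that no blended estimate is needed here, in contrast to the other combined-algorithm theorems, is also accurate.
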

According to~(\ref{equation_ma1_lat}) and~(\ref{equation_ta1_lat}), we have wait(CA(I))= $\min\{\text{wait(MA(I))},  \text{wait(TA(I))}\}  \le$ 3 wait($M^*$(I))  for any instance $I$.

\subsection{Remaining proofs of results for  $\CS_{lat, s=t}$}\label{A5_lat_st}
In $\CS_{lat, s=t}$, for each $(k,\{i,j\})$, we have:
\begin{eqnarray}\label{inequation1_lat_s=t} wait(k,\{i,j\})\ge 2\min\{w(d_k, s_i), w(d_k, s_j)\}+ w(s_i, s_j),\mbox{ and }\\
\label{inequation2_lat_s=t} wait(k,\{i,j\})\ge w(d_k, s_i)+ w(d_k, s_j).\end{eqnarray}

\begin{lemma}
\label{lem:allapp_lem_11}
$\MA(2,\mu)$ is a $2$-approximation algorithm for $\CS_{lat,s=t}$.  Moreover, there exists an instance $I$ for which  wait$(\MA(I))$2 wait$(M^*(I))$.
\end{lemma}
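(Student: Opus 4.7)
The plan is to specialize the MA$(2,\mu)$ analysis used for $\CS_{lat}$ (\cref{lem:allapp_lem_7}) to the $s=t$ setting, where the key simplification is $\mu_{ij}=\mu_{ji}=w(s_i,s_j)$, and then exhibit a tight instance.

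First I will use \cref{lemma:ma} with $\alpha=2$ and $v=\mu$ to identify $\text{wait}(\MA(2,\mu))$ with $v_1(M_1)+v_2(M_2)$, and then apply \cref{lemma:ma_min} at an optimal allocation $M^*$. Substituting $\mu_{ij}=\mu_{ji}=w(s_i,s_j)$ collapses the right-hand side of \cref{lemma:ma_min} to
\[ v_1(M_1)+v_2(M_2) \le \sum_{(k,\{i,j\})\in M^*}\bigl(w(d_k,s_i)+w(d_k,s_j)+w(s_i,s_j)\bigr). \]
Next I will bound each summand by $2\,wait(k,\{i,j\})$. Assume wlog that $w(d_k,s_i)\le w(d_k,s_j)$, so $wait(k,\{i,j\})=2w(d_k,s_i)+w(s_i,s_j)$ by~(\ref{equation102_u_1}). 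The triangle inequality gives $w(d_k,s_j)\le w(d_k,s_i)+w(s_i,s_j)$, hence
\[ w(d_k,s_i)+w(d_k,s_j)+w(s_i,s_j) \le 2w(d_k,s_i)+2w(s_i,s_j) \le 2\bigl(2w(d_k,s_i)+w(s_i,s_j)\bigr). \]
Summing over $M^*$ yields $\text{wait}(\MA(2,\mu)) \le 2\,\text{wait}(M^*)$.

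For the tight instance I propose $n=2$ with $d_1=d_2=d$ and four requests satisfying $s_1=s_3=d$, $w(d,s_2)=w(d,s_4)=1$, and $w(s_2,s_4)=2$. Any optimum pairs $\{1,2\}$ on one car and $\{3,4\}$ on the other (or equivalently $\{1,4\}$ with $\{2,3\}$), each pair contributing wait~$=1$, giving $\text{wait}(M^*)=2$. In the first step of MA the three matchings $\{\{1,2\},\{3,4\}\}$, $\{\{1,4\},\{2,3\}\}$ and $\{\{1,3\},\{2,4\}\}$ are all tied at $v_1=2$; if MA selects the diagonal matching then $v_2=\min\{0,0\}+\min\{2,2\}=2$, so $\text{wait}(\MA)=4=2\,\text{wait}(M^*)$.

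The main delicate point is the tie-breaking in step~1 of MA: the construction relies on the algorithm being permitted to pick the diagonal matching, in the same spirit as the worst-case tight instances used elsewhere in the paper (e.g.\ the TA analyses for Figures~\ref{fig404}--\ref{fig405}). If a strictly unique worst case is preferred, a small perturbation $w(s_2,s_4)=2-\epsilon$ makes $\{\{1,3\},\{2,4\}\}$ the unique minimum in step~1, and the ratio approaches $2$ as $\epsilon\to 0$. Beyond this bookkeeping, the algebraic part is routine once \cref{lemma:ma} and \cref{lemma:ma_min} are in hand, and all the slack is absorbed by a single application of the triangle inequality.
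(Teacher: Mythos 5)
Your proof is correct and follows essentially the same route as the paper: \cref{lemma:ma} and \cref{lemma:ma_min} with $\mu_{ij}=\mu_{ji}=w(s_i,s_j)$ yield the bound $\sum_{(k,\{i,j\})\in M^*}(w(d_k,s_i)+w(d_k,s_j)+w(s_i,s_j))$, which is then bounded by $2\,wait(M^*)$ via the triangle inequality (the paper instead adds the two lower bounds~(\ref{inequation1_lat_s=t}) and~(\ref{inequation2_lat_s=t}), a cosmetic difference). Your tight instance is structurally the same as the paper's (the subgraph on $(k_1,s_3)$, $(s_1,s_2)$, $(k_2,s_4)$ in \cref{fig405}), relying on the same convention that MA may break the first-step tie unfavourably.
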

\begin{proof}
\begin{equation}
   \begin{split}
 \label{equation_ma2_lat_s=t}
wait&(\text{MA($2,\mu$)})=\sum_{(k,\{i,j\})\in \text{MA}} \min\{2w(d_k,  s_i) + \mu_{ij}, 2w(s_k, s_j)+\mu_{ji}\}  \ \  \text{(by~(\ref{equation102_u_1}))}   \\
&\leq  \sum_{(k,\{ i,j\})\in M^*} (w(s_i,s_j)+ w(d_k, s_i)+ w(d_k, s_j))  \\
&\leq  \sum_{(k,\{ i,j\})\in M^*} 2wait(k,\{i,j\})  \ \ \text{(by~(\ref{inequation1_lat_s=t}) and~(\ref{inequation2_lat_s=t})) } \\
&\leq   2 \ wait(M^*)
  \end{split}
\end{equation}
The first inequality holds since MA is a minimum weight perfect matching in algorithm MA$(2,\mu)$, and by $\mu_{ij}=w(s_i,s_j) $.
 To see that equality  may hold in (\ref{equation_ma2_lat_s=t}), consider the subgraph induced by the nodes $(k_1,s_3)$, $(s_1,s_2)$, and $(k_2,s_4)$ in \cref{fig405}.
\end{proof}

\begin{lemma}
\label{lem:allapp_lem_12}
$\TA(2)$ is a $2$-approximation algorithm for $\CS_{lat,s=t}$.  Moreover, there exists an instance $I$ for which  wait$(\TA(I))$=2 wait$(M^*(I))$.
\end{lemma}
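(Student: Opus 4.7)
The plan is to mirror the chain of inequalities used in the preceding $\CS_{lat,s=t}$ analyses, but now applied to $\mathrm{TA}(2)$. Specializing to $s_i = t_i$ for each request, we have $\mu_{ij} = w(s_i, s_j)$ and $w(s_i, t_i) = 0$, so the two quantities inside the minimum of Lemma~\ref{lemma:ta} with $\alpha = 2$ collapse to $3w(d_k, s_i) + w(d_k, s_j)$ and $3w(d_k, s_j) + w(d_k, s_i)$. I expect this simplification to do most of the work.

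First I would bound $wait(\mathrm{TA}(2))$ in terms of $v_3(M_3)$. From (\ref{equation102_u_1}) each summand equals $2\min\{w(d_k,s_i), w(d_k,s_j)\} + w(s_i,s_j)$. Using the triangle inequality $w(s_i, s_j) \le w(s_i, d_k) + w(d_k, s_j)$ (or its symmetric version), this is at most $3\min\{w(d_k,s_i), w(d_k,s_j)\} + \max\{w(d_k,s_i), w(d_k,s_j)\} = \min\{3w(d_k,s_i) + w(d_k,s_j),\, 3w(d_k,s_j) + w(d_k,s_i)\}$, which is exactly the corresponding term of $v_3(M_3)$ by Lemma~\ref{lemma:ta}.

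Next I would invoke Lemma~\ref{lemma:ta_min} with $\alpha = 2$ to upper-bound $v_3(M_3)$ by the same expression summed over $M^*$ instead of $M_4$. Assuming WLOG that $wait(k,\{i,j\}) = 2w(d_k,s_i) + w(s_i,s_j)$ for each $(k,\{i,j\}) \in M^*$ (so $w(d_k,s_i) \le w(d_k,s_j)$), the triangle inequality gives $w(d_k,s_j) \le w(d_k,s_i) + w(s_i,s_j)$, hence
\[ 3w(d_k,s_i) + w(d_k,s_j) \le 4w(d_k,s_i) + w(s_i,s_j) \le 2\bigl(2w(d_k,s_i) + w(s_i,s_j)\bigr) = 2\,wait(k,\{i,j\}). \]
Summing over $M^*$ yields $wait(\mathrm{TA}(2)) \le 2\,wait(M^*)$.

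For tightness, I would reuse Figure~\ref{fig405}, restricted to the subgraph induced by $\{s_5,s_6\}$, $\{k_3,k_4\}$, $\{s_7,s_8\}$, exactly as in the analogous lower-bound argument for $\TA(1)$ in $\CS_{sum,s=t}$: an optimal allocation pairs the two request bundles with the co-located cars at total latency matching $wait(M^*)$, whereas $\mathrm{TA}(2)$ is forced by its virtual-car cost definition to produce a solution of total latency $2\,wait(M^*)$. The main obstacle is merely confirming that on this instance $\mathrm{TA}(2)$ actually returns that bad allocation; this amounts to evaluating $v_3$ on a handful of candidate matchings, as was done for the corresponding instances in Theorems~\ref{the:alg_ana_3} and~\ref{the:Pl1_03}.
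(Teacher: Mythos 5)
Your argument is correct and follows essentially the same route as the paper's: bound $wait(\TA(2))$ by $v_3(M_3)$ using $\mu_{ij}=w(s_i,s_j)$ and the triangle inequality, transfer to $M^*$ via Lemma~\ref{lemma:ta_min}, and bound each term $3w(d_k,s_i)+w(d_k,s_j)$ by $2\,wait(k,\{i,j\})$ — the paper does this last step via $wait(k,\{i,j\})\ge w(d_k,s_i)+w(d_k,s_j)$ rather than your substitution $w(d_k,s_j)\le w(d_k,s_i)+w(s_i,s_j)$, but the two are interchangeable. The tightness instance you name is exactly the one the paper uses, and the routine evaluation of $v_3$ there does confirm that $\TA(2)$ may return the latency-$8$ allocation against an optimum of $4$.
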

\begin{proof}
We have:
\begin{equation}
   \begin{split}
 \label{equation_ta2_lat_s=t}
wait&(\text{TA($2$)})=\sum_{(k,\{i,j\})\in \text{TA}} \min\{2w(d_k,  s_i) + \mu_{ij}, 2w(s_k, s_j)+\mu_{ji}\}
 \ \  \text{(by~(\ref{equation102_u_1}))}   \\
&\leq  \sum_{(k,\{ i,j\})\in M^*} \min\{3 w(d_k, s_i)+ w(d_k, s_j),3 w(d_k, s_j)+ w(d_k, s_i) \}  \\
&\leq  \sum_{(k,\{ i,j\})\in M^*} 2wait(k,\{i,j\})  \ \ \text{(by~(\ref{inequation2_lat_s=t}))}\\
&\leq   2 \ wait(M^*).
  \end{split}
\end{equation}
The first inequality holds since TA is a minimum weight perfect matching in algorithm TA$(2)$, and by $\mu_{ij}=w(s_i,s_j)$.
 To see that equality  may hold in (\ref{equation_ta2_lat_s=t}), consider the subgraph induced by the nodes $(s_5,s_6)$, $(k_3,k_4)$, and $(s_7,s_8)$ in \cref{fig405}.
\end{proof}

\begin{lemma}
\label{lem:allapp_lem_13}
 $\MA(1,u)$ is a $2$-approximation algorithm for $\CS_{lat,s=t}$.
\end{lemma}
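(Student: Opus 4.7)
The plan is to express $wait(\MA(1,u))$ in terms of the algorithm's internal quantities $v_1(M_1)$ and $v_2(M_2)$, and then invoke the intermediate bounds established in the proof of~\cref{lemma:ma_min}. First, I will observe that in $\CS_{lat,s=t}$ we have $u_{ij}=u_{ji}=w(s_i,s_j)$, so that (by lines~4 and~9 of Algorithm~\ref{alg:MA} with $\alpha=1$, $v=u$) the weights simplify to $v_1(\{i,j\})=w(s_i,s_j)$ and $v_2(k,\{i,j\})=\min\{w(d_k,s_i),w(d_k,s_j)\}$, because the correction terms $(u_{ij}-u_{ji})/2$ vanish. Since also $\mu_{ij}=w(s_i,s_j)$, for each $(k,\{i,j\})\in M_2$ the latency contributed is
\[wait(k,\{i,j\}) = 2\min\{w(d_k,s_i),w(d_k,s_j)\} + w(s_i,s_j) = 2\,v_2(k,\{i,j\})+v_1(\{i,j\}).\]
Summing over the allocation $\MA=M_2$ yields the key identity $wait(\MA(1,u))=2\,v_2(M_2)+v_1(M_1)$.

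Next, I will use the two intermediate inequalities~(\ref{equation_cycleequation1}) and~(\ref{equation_cycleequation2}) from the proof of~\cref{lemma:ma_min} separately, namely $v_1(M_1)\le\sum_{\{i,j\}\in M^*_R} w(s_i,s_j)$ and $v_2(M_2)\le\tfrac{1}{2}\sum_{(k,\{i,j\})\in M^*}(w(d_k,s_i)+w(d_k,s_j))$. Combining them gives
\[wait(\MA(1,u)) \le \sum_{(k,\{i,j\})\in M^*} \bigl(w(s_i,s_j)+w(d_k,s_i)+w(d_k,s_j)\bigr).\]

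Finally, I will establish the per-pair inequality $w(s_i,s_j)+w(d_k,s_i)+w(d_k,s_j)\le 2\,wait(k,\{i,j\})$. WLOG assume $w(d_k,s_i)\le w(d_k,s_j)$, so that $wait(k,\{i,j\})=2w(d_k,s_i)+w(s_i,s_j)$. The triangle inequality gives $w(d_k,s_j)\le w(d_k,s_i)+w(s_i,s_j)$, whence the left-hand side is at most $2w(d_k,s_i)+2w(s_i,s_j)\le 4w(d_k,s_i)+2w(s_i,s_j)=2\,wait(k,\{i,j\})$. Summing over $M^*$ yields $wait(\MA(1,u))\le 2\,wait(M^*)$.

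The one subtle point is that the correct decomposition here is $wait(\MA(1,u))=2v_2(M_2)+v_1(M_1)$ rather than the $v_1(M_1)+v_2(M_2)$ used throughout the $\CS_{sum}$ analysis; the extra factor of $2$ on $v_2$ reflects that each pick-up distance is counted twice in the latency. As a consequence, simply invoking the combined bound in~\cref{lemma:ma_min} is not enough---I must go back to its proof and extract the separate bounds on $v_1(M_1)$ and $v_2(M_2)$ given by~(\ref{equation_cycleequation1}) and~(\ref{equation_cycleequation2}). Apart from this, the remaining steps are routine triangle-inequality manipulations.
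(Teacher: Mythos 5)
Your proof is correct, and it reaches the paper's own intermediate bound $wait(\MA(1,u))\le\sum_{(k,\{i,j\})\in M^*}(w(s_i,s_j)+w(d_k,s_i)+w(d_k,s_j))$ by a more carefully justified route. The paper's proof of this lemma gets there in one step, asserting that $\sum_{\MA}\min\{2w(d_k,s_i)+w(s_i,s_j),\,2w(d_k,s_j)+w(s_i,s_j)\}\le\sum_{M^*}\min\{\cdot\}$ ``since MA is a minimum weight perfect matching''; as stated this is loose, since MA$(1,u)$ is a two-stage procedure optimizing $v_1(M_1)+v_2(M_2)$ (with coefficient $1$, not $2$, on the pick-up distances), not a global minimizer of the displayed latency expression. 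Your decomposition $wait(\MA(1,u))=v_1(M_1)+2v_2(M_2)$, combined with the two separate inequalities~(\ref{equation_cycleequation1}) and~(\ref{equation_cycleequation2}) extracted from the proof of~\cref{lemma:ma_min}, is exactly the right way to make this step rigorous, and your remark that the packaged statement of~\cref{lemma:ma_min} does not suffice (because it bounds $v_1(M_1)+v_2(M_2)$ rather than $v_1(M_1)+2v_2(M_2)$) is on point. The concluding per-pair inequality $w(s_i,s_j)+w(d_k,s_i)+w(d_k,s_j)\le 2\,wait(k,\{i,j\})$ is obtained in the paper by summing the two lower bounds~(\ref{inequation1_lat_s=t}) and~(\ref{inequation2_lat_s=t}), whereas you derive it directly from the triangle inequality; both are valid and equally short.
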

\begin{proof}
\begin{equation*}
   \begin{split}
 \label{equation_ma1_lat_s=t}
wait&(\text{ MA($1,u$)})=\sum_{(k,\{i,j\})\in \text{MA}} \min\{2w(d_k,  s_i) + \mu_{ij}, 2w(d_k, s_j)+\mu_{ji}\}  \ \  \text{(by~(\ref{equation102_u_1}))}   \\
& \leq  \sum_{(k,\{ i,j\})\in  M^*} \min\{2w(d_k,  s_i) + w(s_i,s_j), 2w(d_k, s_j)+w(s_i,s_j)\} \\
&\leq  \sum_{(k,\{ i,j\})\in M^*} (w(s_i,s_j)+ w(d_k, s_i)+ w(d_k, s_j)) \\
&\leq  \sum_{(k,\{ i,j\})\in M^*} 2wait(k,\{i,j\}) \ \ \text{(by~(\ref{inequation1_lat_s=t})) and (\ref{inequation2_lat_s=t})) }  \\
&\leq   2 \ wait(M^*)
  \end{split}
\end{equation*}
The first inequality holds since MA is a minimum weight perfect matching in algorithm MA$(1,u)$, and by $\mu_{ij}=u_{ij}=w(s_i,s_j)$.
\end{proof}

\begin{lemma}
\label{lem:allapp_lem_14}
$\TA(1)$ is a $2$-approximation algorithm for $\CS_{lat,s=t}$.
\end{lemma}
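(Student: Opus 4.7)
The plan is to chain the triangle inequality twice with Lemma~\ref{lemma:ta_min}, paying a factor $\tfrac{4}{3}$ on the \TA-side and a factor $\tfrac{3}{2}$ on the $M^*$-side, whose product is exactly~$2$.

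First, I would bound each summand of $wait(\TA(1))$ separately. For $(k,\{i,j\})\in M_4$, since $\mu_{ij}=w(s_i,s_j)$ in the $s=t$ case, $wait(k,\{i,j\})=\min\{2w(d_k,s_i)+w(s_i,s_j),\,2w(d_k,s_j)+w(s_i,s_j)\}$. Applying $w(s_i,s_j)\le w(s_i,d_k)+w(d_k,s_j)$ to both arguments of the $\min$ yields $wait(k,\{i,j\})\le 3m+M$, where $m=\min\{w(d_k,s_i),w(d_k,s_j)\}$ and $M=\max\{w(d_k,s_i),w(d_k,s_j)\}$. The elementary inequality $3m+M\le\tfrac{4}{3}(2m+M)$ (equivalent to $m\le M$) then gives $wait(k,\{i,j\})\le\tfrac{4}{3}(2m+M)$. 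Because the virtual cars $\gamma(k),\delta(k)$ are matched by $M_3$ only to the two requests paired with car~$k$ in $M_4$, swapping them is free, so $M_3$ picks the locally cheaper option and $v_3(M_3)=\sum_{(k,\{i,j\})\in M_4}(2m+M)$; summing over $M_4$ yields $wait(\TA(1))\le\tfrac{4}{3}v_3(M_3)$.

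Second, I would apply Lemma~\ref{lemma:ta_min} for $\alpha=1$, specialized to $t_i=s_i$, under which $w(d_k,s_i,t_i)+w(t_i,d_k,s_j,t_j)$ collapses to $2w(d_k,s_i)+w(d_k,s_j)$. This gives $v_3(M_3)\le\sum_{(k,\{i,j\})\in M^*}(2m^*+M^*)$, where $m^*,M^*$ are now the smaller and larger of $w(d_k,s_i),w(d_k,s_j)$ for the $M^*$-pair at car~$k$. Combining with the previous step, $wait(\TA(1))\le\tfrac{4}{3}\sum_{M^*}(2m^*+M^*)$. Finally, for each pair in $M^*$ I would verify that $\tfrac{4}{3}(2m^*+M^*)\le 2(2m^*+w(s_i,s_j))=2\,wait(k,\{i,j\})$, which reduces algebraically to $M^*-m^*\le\tfrac{3}{2}w(s_i,s_j)$; this follows from the triangle-inequality bound $|w(d_k,s_i)-w(d_k,s_j)|\le w(s_i,s_j)$. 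Summing over $M^*$ yields $wait(\TA(1))\le 2\,wait(M^*)$.

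The main obstacle is extracting the correct per-pair constant on the \TA-side. A naive attempt $wait\le v_3$ fails here: unlike the $\alpha=2$ setting exploited in Lemma~\ref{lem:allapp_lem_12}, where $v_3$ literally equals the latency of a specific serving route, with $\alpha=1$ the quantity $v_3$ tracks travel cost rather than latency, so a constant strictly greater than $1$ is forced. The factor $\tfrac{4}{3}$ has to be balanced precisely against the $\tfrac{3}{2}$ that the triangle-inequality step introduces on the $M^*$ side, so that the two constants multiply to exactly~$2$; any looser pairing would recover only the $3$-approximation of the more general Lemma~\ref{lem:allapp_lem_10} rather than the sharper ratio claimed here.
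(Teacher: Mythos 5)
Your proposal is correct and follows essentially the same route as the paper's proof: the same $\tfrac{4}{3}$ per-pair bound via $w(s_i,s_j)\le w(s_i,d_k)+w(d_k,s_j)$, the same appeal to the minimality of $M_3$ to pass from TA to $M^*$, and a final triangle-inequality step that is equivalent to the paper's use of $wait(k,\{i,j\})\ge w(d_k,s_i)+w(d_k,s_j)$. The only difference is presentational (you verify the closing per-pair inequality directly via $M^*-m^*\le w(s_i,s_j)$ rather than citing that precomputed lower bound).
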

\begin{proof}
\begin{equation*}
   \begin{split}
 \label{equation_ta1_lat_s=t}
&wait(\text{TA($1$)})=\sum_{(k,\{i,j\})\in \text{TA}} \min\{2w(d_k,  s_i) + \mu_{ij}, 2w(s_k, s_j)+\mu_{ji}\}
 \ \  \text{(by~(\ref{equation102_u_1}))}   \\
& \le   4/3 \sum_{(k,\{i,j\})\in \text{TA}}  \min\{2w(d_k,  s_i) +w(d_k,s_j), 2w(s_k, s_j)+w(d_k,s_i)\} \\
&\leq   4/3 \sum_{(k,\{ i,j\})\in M^*} \min\{2 w(d_k, s_i)+ w(d_k, s_j),2 w(d_k, s_j)+ w(d_k, s_i) \}  \\
&\leq  \sum_{(k,\{ i,j\})\in M^*}  \min\{8/3 w(d_k, s_i)+ 4/3 w(d_k, s_j),8/3 w(d_k, s_j)+ 4/3 w(d_k, s_i) \}  \\
&\leq  \sum_{(k,\{ i,j\})\in M^*} 2wait(k,\{i,j\})  \ \ \text{(by~(\ref{inequation2_lat_s=t}))}\\
&\leq   2 \ wait(M^*)
  \end{split}
\end{equation*}
The first inequality holds since $\mu_{ij}=w(s_i,s_j)$ and the triangle inequality, and the second inequality holds since TA is a minimum weight perfect matching in algorithm TA$(1)$.
\end{proof}

 \begin{theorem}
\label{the:allapp_the_4}
$\CA(1,u)$ is a 8/5-approximation algorithm for $\CS_{lat,s=t}$. 
\end{theorem}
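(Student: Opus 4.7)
The plan is to exploit the fact that $\CA(1,u)$ returns the better of two solutions, so $wait(\CA(1,u)) \le \lambda\, wait(\MA(1,u)) + (1-\lambda)\, wait(\TA(1))$ for any $\lambda \in [0,1]$. I will take the convex combination with weights $2/5$ and $3/5$; equivalently, I will prove
\[
5\, wait(\CA(1,u)) \;\le\; 2\, wait(\MA(1,u)) \;+\; 3\, wait(\TA(1)),
\]
and then combine the per-pair bounds already established in the proofs of Lemmas~\ref{lem:allapp_lem_13} and~\ref{lem:allapp_lem_14}.

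First, WLOG I assume for each $(k,\{i,j\}) \in M^*$ that $w(d_k, s_i) \le w(d_k, s_j)$, so that $wait(k, \{i,j\}) = 2w(d_k,s_i) + w(s_i,s_j)$. From the chain of inequalities in the proof of Lemma~\ref{lem:allapp_lem_13}, I read off the per-pair bound $wait(\MA(1,u)) \le \sum_{M^*}\bigl(w(s_i,s_j) + w(d_k,s_i) + w(d_k,s_j)\bigr)$. From the corresponding chain for Lemma~\ref{lem:allapp_lem_14} (stopping before the last triangle-inequality step that wastes a factor), I get $wait(\TA(1)) \le \tfrac{4}{3}\sum_{M^*}\bigl(2w(d_k,s_i) + w(d_k,s_j)\bigr)$ under the assumed ordering. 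Substituting and collecting terms yields, per pair,
\[
2\bigl(w(s_i,s_j)+w(d_k,s_i)+w(d_k,s_j)\bigr) + 4\bigl(2w(d_k,s_i)+w(d_k,s_j)\bigr) = 2\,w(s_i,s_j) + 10\,w(d_k,s_i) + 6\,w(d_k,s_j).
\]

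The final step is to apply the triangle inequality $w(d_k,s_j) \le w(d_k,s_i) + w(s_i,s_j)$ to the coefficient-$6$ term, giving
\[
2\,w(s_i,s_j) + 10\,w(d_k,s_i) + 6\,w(d_k,s_j) \;\le\; 8\,w(s_i,s_j) + 16\,w(d_k,s_i) \;=\; 8\,wait(k,\{i,j\}).
\]
Summing over $M^*$ then delivers $5\, wait(\CA(1,u)) \le 8\, wait(M^*)$, i.e., the claimed $8/5$ bound.

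The main obstacle, and the only real design choice, is locating the convex-combination weights $(2/5, 3/5)$. These are the unique weights that simultaneously balance the $w(d_k,s_i)$ and $w(s_i,s_j)$ coefficients against the target $8/5$ ratio after the final triangle-inequality step; any other choice either leaves a surplus $w(d_k,s_j)$ term that cannot be absorbed, or requires a strictly larger approximation constant. Everything else is mechanical: reuse the intermediate inequalities from the two earlier lemmas exactly as they are written, and apply one triangle inequality at the end.
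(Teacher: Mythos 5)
Your proposal is correct and follows essentially the same route as the paper: the identical convex combination $5\,wait(\CA)\le 2\,wait(\MA)+3\,wait(\TA)$, the same per-pair bounds lifted from the proofs of Lemmas~\ref{lem:allapp_lem_13} and~\ref{lem:allapp_lem_14}, and the same resulting expression $2w(s_i,s_j)+10w(d_k,s_i)+6w(d_k,s_j)$ per pair. The only (immaterial) difference is the last step, where you absorb the $6w(d_k,s_j)$ term via the triangle inequality $w(d_k,s_j)\le w(d_k,s_i)+w(s_i,s_j)$, while the paper instead combines the two lower bounds~(\ref{inequation1_lat_s=t}) and~(\ref{inequation2_lat_s=t}) on $wait(k,\{i,j\})$.
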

\begin{proof}
\begin{equation*}
   \begin{split}
 \label{equation_ca1_lat_s=t}
&5 wait(\text{CA($1,u$)})\le  2 wait(\text{MA($1,u$)})+ 3 wait(\text{TA($1$)})  \\
&\leq  \sum_{(k,\{ i,j\})\in \text{MA}} 2(w(s_i,s_j)+2\min\{w(d_k, s_i), w(d_k, s_j)\})\\
& \ \ \ \  +4 \sum_{(k,\{ i,j\})\in \text{TA}} \min\{2w(d_k, s_i)+ w(d_k, s_j), 2w(d_k, s_j)+ w(d_k, s_i)\}\\
&\leq  \sum_{(k,\{ i,j\})\in M^*} 2(w(s_i,s_j)+2\min\{w(d_k, s_i), w(d_k, s_j)\})\\
& \ \ \ \  +4 \sum_{(k,\{ i,j\})\in M^*} \min\{2w(d_k, s_i)+ w(d_k, s_j), 2w(d_k, s_j)+ w(d_k, s_i)\}\\
&\leq  \sum_{(k,\{ i,j\})\in M^*} (2w(s_i,s_j)+ 2w(d_k, s_i)+ 2w(d_k, s_j) + 8w(d_k, s_i)+  4 w(d_k, s_j)) \\
&\leq  \sum_{(k,\{ i,j\})\in M^*} 8wait(k,\{i,j\}) \ \ \text{(by~(\ref{inequation1_lat_s=t}) and (\ref{inequation2_lat_s=t}))}   \\
&\leq   8 \ wait(M^*)
  \end{split}
\end{equation*}
The first inequality holds since $\mu_{ij}=w(s_i,s_j)$ and the triangle inequality, and the second inequality holds since MA is a minimum weight perfect matching in algorithm MA$(1,u)$, and since TA is a minimum weight perfect matching in algorithm TA$(1)$.
\end{proof}

\end{document}